\newcommand{\be}{\begin{equation}}
\newcommand{\ee}{\end{equation}}
\newcommand{\f}{\frac}
\newcommand{\D}{D}
\newcommand{\DA}{D_A}
\newcommand{\DB}{{{D}_B}}
\newcommand{\diff}{\text{d}}
\DeclareMathOperator{\sinc}{sinc}
\DeclareMathOperator{\sgn}{sgn}
\newcommand{\expect}[2][]{\ensuremath{{\mathbb{E}}_{#1}\left[ #2 \right]}}
\newcommand{\var}[2][]{\ensuremath{{\text{Var}}_{#1}\left[ #2 \right]}}
\newcommand*{\ketbra}[2]{\ensuremath{\ket{#1}\bra{#2}}}
\newcommand*{\tr}[2][]{\ensuremath{\textrm{Tr}_{#1}\left[ #2 \right]}}
\newcommand{\SFF}[1][]{\ensuremath{K_{#1}}}
\begin{document}
\title{Probing many-body quantum chaos with quantum simulators}

\author{Lata Kh Joshi}
\thanks{These authors contributed equally.} \affiliation{Center for Quantum Physics, University of Innsbruck, Innsbruck A-6020, Austria}
\affiliation{Institute for Quantum Optics and Quantum Information of the Austrian Academy of Sciences,  Innsbruck A-6020, Austria}
\author{Andreas Elben}
\thanks{These authors contributed equally.}
\affiliation{Center for Quantum Physics, University of Innsbruck, Innsbruck A-6020, Austria}
\affiliation{Institute for Quantum Optics and Quantum Information of the Austrian Academy of Sciences,  Innsbruck A-6020, Austria}
\affiliation{Institute for Quantum Information and Matter and Walter Burke Institute for
Theoretical Physics, California Institute of Technology, Pasadena, CA 91125, USA}
\author{Amit Vikram} 
\affiliation{Joint Quantum Institute, University of Maryland, College Park, MD 20742, USA}
\affiliation{Condensed Matter Theory Center, Department of Physics, University of Maryland, College Park, MD 20742, USA}
\author{Beno\^it Vermersch}
\affiliation{Center for Quantum Physics, University of Innsbruck, Innsbruck A-6020, Austria}	
\affiliation{Institute for Quantum Optics and Quantum Information of the Austrian Academy of Sciences,  Innsbruck A-6020, Austria}
\affiliation{Univ.\  Grenoble Alpes, CNRS, LPMMC, 38000 Grenoble, France}
\author{Victor Galitski}
\affiliation{Joint Quantum Institute, University of Maryland, College Park, MD 20742, USA}
\affiliation{Condensed Matter Theory Center, Department of Physics, University of Maryland, College Park, MD 20742, USA}
\author{Peter Zoller}
\affiliation{Center for Quantum Physics, University of Innsbruck, Innsbruck A-6020, Austria}
\affiliation{Institute for Quantum Optics and Quantum Information of the Austrian Academy of Sciences,  Innsbruck A-6020, Austria}
	
\begin{abstract}
The spectral form factor (SFF), characterizing statistics of energy eigenvalues, is a key diagnostic of many-body quantum chaos. In addition,  partial spectral form factors (PSFFs) can be defined which refer to subsystems of the many-body system.  They  provide unique insights into energy eigenstate statistics of many-body systems, as we show in an analysis on the basis of random matrix theory and of the eigenstate thermalization hypothesis. We propose a protocol that allows the measurement of the SFF and PSFFs in quantum many-body spin models, within the framework of randomized measurements. Aimed to probe dynamical properties of quantum many-body systems, our scheme employs statistical correlations of local random operations which are applied at different times in a single experiment. Our protocol provides  a unified testbed to probe many-body quantum chaotic behavior, thermalization and  many-body localization in closed quantum systems which we illustrate with numerical simulations for Hamiltonian and Floquet many-body spin-systems.
\end{abstract}
\maketitle 
\section{Synopsis}
\label{sec:overview}
The ongoing development of quantum simulators provides us with unique opportunities to study quantum chaos in many-body systems, and its connections to random matrix theory (RMT) \cite{HaakeBook}  and Eigenstate Thermalization Hypothesis (ETH) \cite{deutsch1991eth, srednicki1994eth} in  highly  controlled laboratory settings. This refers to not only the experimental realization of engineered Hamiltonian dynamics of isolated quantum  systems, which can be tuned from integrable to non-integrable, but also the ability to measure  novel observables beyond standard low-order correlation functions \cite{Blatt2012,LS2019,Monroe2021,Browaeys2020,Kjaergaard2020}. It includes recent measurements of the growth of entanglement entropies in quantum many-body systems \cite{Islam2015,Kaufman2016,Brydges2019,Vovrosh2021} as well as of the decay of out-of-time-ordered correlation functions \cite{Garttner2017,Li2017,Wei2018,Landsman2019,Nie2019,Joshi2020,Mi2021}. In this work, our interests lie in developing experimentally  feasible probes of universal RMT  predictions  for the \textit{statistics of  energy eigenvalues} \cite{Wigner1955, Dyson1962, Casati1980, BGS1984, Mehta2004,HaakeBook} and  predictions of the ETH  for the \textit{statistics of energy eigenstates} \cite{deutsch1991eth, srednicki1994eth,srednicki1999eth,rigol2008eth,DAlessio2016,deutsch2018eth,subETH} of  quantum chaotic many-body systems. 
Using these probes, we are further interested in distinguishing  many-body localized (MBL) systems \cite{nandkishore2015mbl,Abanin2019} from the chaotic ones, where in the former the eigenvalue statistics are described by the Poisson distribution \cite{BerryInt, ponte2015floqueteth, Prakash2020} and the ETH is violated. 

In this paper, we identify the spectral form factor (SFF), and its generalization to partial SFF (PSFF), as observables of interest to reveal energy level and eigenstate statistics. The SFF is defined in terms of the time evolution operator of the quantum many-body system of interest and provides us with statistics of energy levels \cite{HaakeBook}. The PSFF will be defined in terms of the time evolution operator restricted to a subsystem of the many-body system, and contains  information on both, the statistics of energy eigenvalues and energy eigenstates. We derive analytic expressions for the PSFF in Wigner-Dyson random matrix ensembles. More generally, in chaotic quantum many-body systems, the ETH imposes constraints on the statistics of eigenstates, which are however typically violated in localized systems. Therefore, the PSFF provides a direct probe of eigenstate thermalization and localization.

The goal of the present work is to develop measurement protocols for the SFF and PSFF in quantum spin models of arbitrary dimension, as realized for instance with trapped ions \cite{Blatt2012,Monroe2021}, Rydberg atoms \cite{Browaeys2020} and superconducting qubits \cite{Kjaergaard2020}.   We extend the \textit{randomized measurement} toolbox \cite{VanEnk2012,Elben2018,Vermersch2018,Vermersch2019,Ketterer2019,Elben2020_xPlatform,Elben2020_SPT,Huang2020,Elben2020_Mixed,Cian2020,Zhou2020,Vitale2021,Garcia2021,Rath2021,Yu2021,Neven2021,Knips2020,Imai_2021,Rath2021a} to infer the SFF and PSFF from statistical correlations of local random operations applied at different times in a single experiment. In contrast to the previous works utilizing randomized measurements to infer properties of many-body quantum states \cite{VanEnk2012,Elben2018, Vermersch2018,Brydges2019,Ketterer2019,Elben2020_xPlatform,Elben2020_SPT,Cian2020, Huang2020,Elben2020_Mixed,Knips2020, Zhou2020,Imai_2021, Yu2021,Rath2021,Vitale2021,Neven2021,Tran2015,Tran2016,Satzinger2021,Mi2020,Rath2021a} and  (out-of-time-ordered) correlation functions of Heisenberg operators \cite{Vermersch2019,Joshi2020}, the present protocol yields, with the SFF and PSFF, genuine properties of the time evolution operator. We emphasize that the present protocol is ancilla-free. This is in contrast to Ref.~\cite{Vasilyev2020} where a measurement scheme for the SFF was proposed requiring time evolution of an extended system comprising of the quantum simulator and an auxiliary spin. 

Our protocol to measure the PSFF and SFF in a quantum simulation experiment can be readily implemented in existing experimental platforms. It requires only to implement local (single-spin) random unitaries and projective measurements, which have been previously demonstrated with high fidelity \cite{Brydges2019,Nie2019, Joshi2020,Satzinger2021}. Interestingly, in our protocol we obtain the SFF and PSFF from the same experimental dataset. This enables an efficient scheme to test universal RMT predictions for the energy eigenvalue spectrum and, at the same time, to probe properties of the energy eigenstates  and thermalization via ETH.

We now turn to an overview of the main results of the paper. We start by recalling  the standard definition of the SFF, define the PSFF and describe their estimation using randomized measurement protocol. We then illustrate the key features of the (P)SFF and demonstrate our measurement protocol using an example of a chaotic, periodically kicked spin$-1/2$ model. We will argue on the basis of this example and show in later sections with detailed analytical and numerical calculations that the SFF and PSFF provide unique insights into the  eigenvalue and eigenstate statistics of quantum many-body systems. 

\subsection{Spectral form factor}
\label{sec:sff-example}
The SFF in  a many-body quantum system with time-independent Hamiltonian $H$  and energy spectrum $\{E_j\}$ is defined as the Fourier transform of the two-point correlator of the energy level density  \cite{HaakeBook}. It can be expressed as
 \begin{align}
\!K(t)&\equiv \frac{1}{D^2}\,  \overline { \sum_{i,j}  e^{i (E_i-E_j) t} }  = \frac{1}{D^2}\, \overline{ \tr{T(t)}\tr{T^\dagger(t)}}~. \!\! \!
\label{eq:sff}
\end{align}
Here, we normalize $K(t)$ such that $K(0)=D^{-2} \tr{\mathbb{1}}^2=1$, with $D$ the Hilbert space dimension and have defined the unitary time-evolution operator $T(t)\equiv \exp(-iHt)$. The overline denotes a possible disorder or ensemble average over an ensemble of $T(t)$, which is needed due to non-self-averaging behavior of the SFF \cite{prange1997sff}. Replacing the energies $E_i$ with quasi-energies, this definition carries  over to Floquet models with time-periodic evolution operator $T(t=n\tau )=V^n$ ($n \in \mathbb{N}$) and $V$ the  Floquet time evolution operator for a single period $\tau$ \footnote{Denoting the set of eigenvalues of the Floquet operator $
V$ with $\{ \exp(-iE_i \tau) \}$, the quasi-energy eigenvalues $\{ E_i \}$ are only defined up to multiples of the driving frequencies $\omega=2\pi\tau^{-1}$. We fix them to lie in the interval $[ 0, \omega]$. }. 

The SFF is a probe of the universal properties of the statistics of energy eigenvalues in chaotic and localized systems. Lately, it  has played a key role in a variety of different fields, interconnecting quantum chaos \cite{HaakeBook}, quantum dynamics of black holes  \cite{Cotler2017, Cotler2017a, Saad2018, Gharibyan2018}, condensed matter systems \cite{Kos2018, chan2018a, chan2018b, Bertini2018, Suntajs2019, Parameswaran2019, shackleton2020, Liao2020, winer2020,  Sierant2020,Sierant2020b}, and the dynamics of thermalization \cite{Reimann2016}.
In Fig.~\ref{fig:figure1}(a), we illustrate its behavior in  the  context of a  periodically kicked spin-$1/2$ system.  The time evolution operator $T$ at integer multiples $n\in \mathbb{N}$ of driving period $\tau$ is  given by $T(t=n\tau)=V_3^n$ with,
\begin{equation}
V_3= e^{-i H^{(x)}\tau/3} e^{-i H^{(y)}\tau/3}e^{-i H^{(z)}\tau/3} ~. 
\label{eq:Floquet-illustration-V3}
\end{equation}

Here, the  Hamiltonians $H^{(x, y, z)}$ contain nearest-neighbor interactions with strength $J=3\tau^{-1}$ and disordered transverse fields with strength $h_i^{(x, y, z)}\in [-J, J]$,
\begin{align*}
H^{(x,y,z)}= J\sum_{i=1}^{N-1} \sigma^{(x,y,z)}_i \sigma^{(x,y,z)}_{i+1} + \sum_{i=1}^N h_i^{(y,z,x)} \sigma^{(y,z,x)}_i~,
\end{align*} 
and $\sigma^{a}$ [$a\in (x, y, z)$] denote the Pauli matrices. We have denoted the number of spins with $N$ such that $D=2^N$.
 An ensemble average is naturally performed by averaging over many instances of $T(t=n\tau)=V_3^n$, each with local disorder potentials $h_i^{(a)}$ sampled independently from the uniform distribution on $[-J, J]$. 
\begin{figure}[t]
\includegraphics[width=\linewidth]{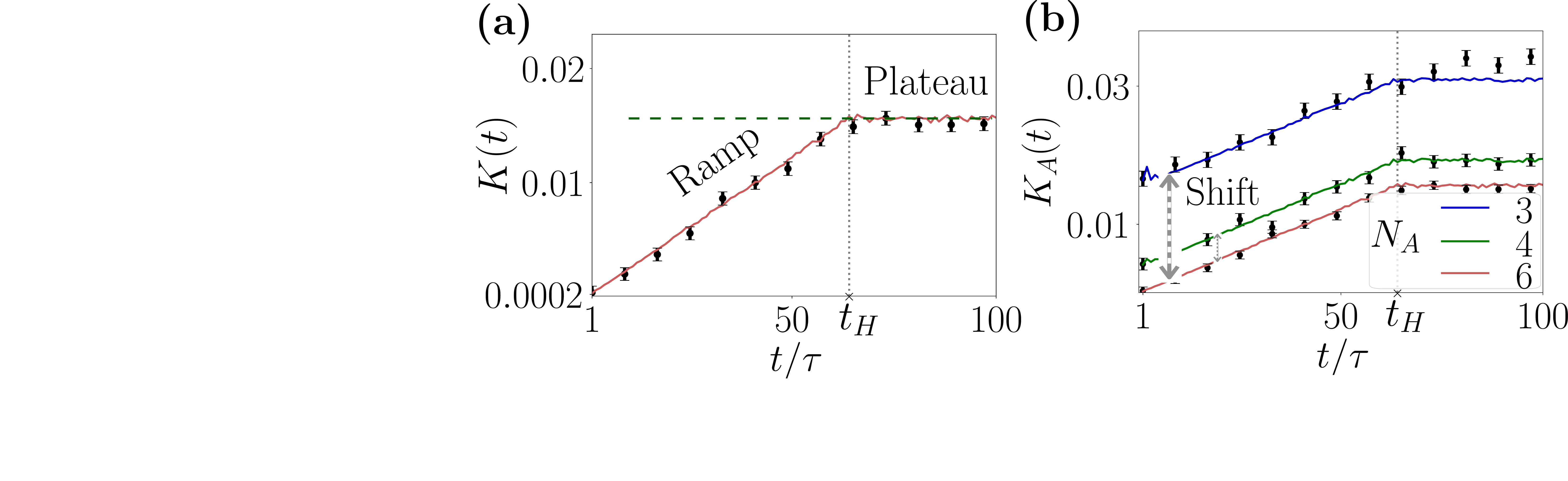}
  \caption{  \textit{Illustration of the characteristic properties of the SFF and PSFF using {the} chaotic spin-$1/2$ Floquet model $V_3$.} (a)  We display the SFF $K(t)$  for the Floquet model $V_3$ with $N=6$ qubits as a function of time $t$. We observe characteristic features such as the ramp between $t\sim \tau$ to $t=t_H= 2^N \tau$ and a plateau for $t>t_H$.  (b) For the PSFF $K_A(t)$ we observe ramp, plateau and, in particular, a constant, additive shift of the PSFF compared to the SFF, which depends on the subsystem size $N_A$ of the subsystem $A$. We have chosen subsystems A from the middle of the total system. In both, the colored lines show the numerically calculated SFF and PSFFs, averaged over $8000$ disorder realizations.  In addition, we illustrate our measurement protocol (see Sec.~\ref{sec:protocol}) by simulating $M=2\times 10^5$ experimental runs (single-shot randomized measurements) at each time and display the estimated SFF and PSFF as black dots with associated error bars.  The dashed green line in panel (a) sketches the form of the SFF generically expected in a many-body localized model.}
    \label{fig:figure1}
    \end{figure}

As shown in Fig.~\ref{fig:figure1}(a), the SFF $K(t)$ for this model and choice of parameters exhibits a period of linear growth, before transitioning to a constant at time $t/\tau \approx  \D=2^N$. 
This \textit{ramp-plateau} structure of the SFF is a characteristic feature of quantum chaotic systems \cite{Leviandier1986,Guhr1998,HaakeBook}, originating from  (quasi-)energy level repulsion and  spectral rigidity \cite{Leviandier1986}, and is predicted by RMT \cite{HaakeBook,Mehta2004}. In particular, as we briefly review in App.~\ref{app:SFF-RMT}, RMT for time evolution operators $T(t=\tau n) = V^n$, with $V$ from the circular unitary ensemble (CUE), yields 
\begin{align}
    K(t) = \frac{1}{D} \begin{cases}
    {t}/{t_H},  & 0 <  t \leq t_H \\
    1,  &  t>t_H~.
    \end{cases}
        \label{eq:SFF_cue}
\end{align}
Here, the slope of the ramp and the onset of the plateau is determined by the Heisenberg (or plateau) time $t_\text{H}$ which is connected to the mean inverse   spacing of adjacent (quasi-) energies. It typically scales with the Hilbert space dimension $t_\text{H}/\tau \sim D$   --- for $V$ from CUE, $t_H/\tau=\D$ \cite{HaakeBook,Gharibyan2018}. Thus, the SFF is expected to drop with increasing Hilbert space dimension $D=2^N$, as $D^{-2}$ at times $1\lesssim t/\tau \ll D$ and as $D^{-1}$ at times $t/\tau \gtrsim D$.
Fig.~\ref{fig:figure1}(a) shows that the SFF $K(t)$ for the $V_3$ model closely follows the CUE  prediction after the initial few time steps. This time after which the many-body model shows the same SFF as the one in RMT is known as the Thouless time $t_{\rm{Th}}$ \cite{Kos2018}. For the model $V_3$ we note that $t_{\rm{Th}}\approx 5\tau$ (see also Sec.~\ref{sec:psffNumerical}). Therefore, the quasi-energy eigenvalues of the Floquet operator $V_3$  exhibit Wigner-Dyson  statistics (see also Ref.~\cite{Vasilyev2020}).

In contrast to the example of a chaotic system 
$V_3$ presented above, the energy eigenvalues of integrable and localized models are known to exhibit Poissonian statistics \cite{BerryInt, nandkishore2015mbl,ponte2015floqueteth, Prakash2020}. This corresponds to a flat SFF  without a ramp which is, after an initial transient regime, constant in time \cite{HaakeBook},  
$K(t\gg 0)=1/D$. This is schematically shown in Fig.~\ref{fig:figure1}(a) with green dashes. These distinct features of the SFF have been pivotal in characterizing many-body chaotic and MBL phases \cite{Suntajs2019, Parameswaran2019, Vasilyev2020}.

\subsection{Partial Spectral Form Factor} 
\label{sec:psff-example}
The SFF reveals information on the statistics of \mbox{(quasi-)} energy eigenvalues. It is however by definition insensitive to properties of the (quasi-) energy eigenstates. In this subsection, we define the PSFF and illustrate its essential properties connected to properties of eigenvalues and eigenstates.

For a fixed subsystem $A\subseteq \mathcal{S}$  of the total system $\mathcal{S}$ with complement $B$ ($A\cup B = \mathcal{S}$) and Hilbert space dimensions $D_A$ and $D_B$ respectively ($D=D_AD_B$), we define the PSFF as
\begin{align}
\label{eq:psff}
K_{\mathrm A}(t)
&\equiv\frac{1}{D D_A} \, \overline { \sum_{i,j}  e^{i (E_i-E_j) t}  \tr[B]{\rho_B(E_i) \rho_B(E_j)} } \nonumber \\ &
=\frac{1}{DD_A}\, \overline{\tr[B]{ \tr[A]{T(t)} \tr[A]{T^\dagger(t)}}}~,
\end{align}
where  $\rho_B(E_i)= \tr[A]{\ketbra{E_i}{E_i}}$  denotes the reduced density matrix obtained after partial trace of the eigenstate $\ket{E_i}$ of the Hamiltonian $H$ (the Floquet time evolution operator $V$)  with energy (quasi-energy)  $E_i$.
Here, the normalization of $K_A(t)$ is chosen such that $K_A(0)= \tr[B]{ \tr[A]{\mathbb{1}}^2}/(DD_A)=1$.
Hence, the SFF and PSFF coincide when $A=\mathcal{S}$, i.e.\ $K_{\mathrm A = \mathcal{S}}(t)=K(t)$. We emphasize that for $A\subset \mathcal S$, the PSFF $K_A(t)$ contains non-trivial contributions from the eigenstates $\ket{E_i}$: We obtain terms of the form $\mathrm{Tr}(\rho_B(E_i)^2)$ and $\mathrm{Tr}(\rho_B(E_i)\rho_B(E_j))$ ($i \neq j$) which correspond to  the purity and   overlap of   reduced eigenstates. As shown below, a measurement of the PSFF allows  to extract these purities and overlaps, averaged over spectrum and ensemble,  i.e.\ allows to characterize (second-order moments of) the statistics of eigenstates.

We  remark that  $K_A(t)$ has been previously discussed as a topological invariant in the classification of symmetry-protected  matrix product unitaries  in Ref.~\cite{Gong2020a}. Its limiting cases for special subsystems ($A$ or $B$ consisting of a single site, in the limit of a large local Hilbert space dimension) have been used to study matrix elements of local operators in the energy eigenbasis in 1D Floquet circuits, with comparisons to random matrix predictions for eigenstate statistics in these subsystems (as a special case of ETH) \cite{GarrattChalker}. 

In this work, we identify a general \textit{shift-ramp-plateau} structure of the PSFF, which reveals a direct connection to ETH contained in the subsystem dependence of the PSFF. In Fig.~\ref{fig:figure1}(b),  we display the PSFF for the Floquet model \eqref{eq:Floquet-illustration-V3} for various subsystems $A$, where $N_A$ denotes number of qubits in the subsystem such that $D_A=2^{N_A}$. We first note that the PSFF also has a ramp and plateau, similar to the full SFF. 
The slope of the ramp is nearly identical for the displayed subsystem sizes $N_A\gtrsim N/2$ $=3$, which holds more generally for $D_A \gg 1$ in the CUE model, and the onset of the plateau in the PSFF takes place at the Heisenberg time $t_H$. Crucially, we find that, at late times comparable to the onset of the ramp, there is a subsystem dependent additive shift of the PSFF $K_A(t)$ compared to the full SFF $K(t)$. 

Similar to the case of the full SFF, we can compare the behavior of the PSFF to predictions of RMT. As detailed in Sec.~\ref{sec:psffAnalytical}, we find that RMT yields for time evolution operators $T(t=\tau n) = V^n$, with $V$ from the  CUE, and sufficiently large subsystems $A,B$, ($\DA,\DB\gg 1$),
\begin{align}
    K_A(t) = \frac{1}{D_A^2} + \frac{1}{D} \begin{cases}
    {t}/{t_H},  & 0 <  t \leq t_H \\
    1,  &  t>t_H~.
    \end{cases}
    \label{eq:pSFF_cue}
\end{align}
As shown in Fig.~\ref{fig:figure1}(b), and analyzed in  detail by further numerical studies in Sec.~\ref{sec:psffNumerical}, the PSFF (and SFF) for the $V_3$ model follows closely the RMT predictions. This indicates that both \textit{(quasi-) energy eigenvalues and eigenstates} of $V_3$ exhibits the Wigner-Dyson statistics of the CUE. We remark that this is consistent with previous works demonstrating   that (sub-)systems of  chaotic Floquet systems   thermalize to infinite temperature states as per RMT \cite{Regnault2016,PhysRevX.4.041048,lazarides2014floqueteth,ponte2015floqueteth,kim2014floqueteth, GarrattChalker}.

\textit{Partial spectral form factor and eigenstate thermalization hypothesis} –  Using the example of a chaotic Floquet model, we have illustrated above the essential features of the  PSFF in chaotic quantum systems. In Sec.~\ref{sec:psffAnalytical}, we analyze its behavior in  detail invoking subsystem ETH \cite{subETH} for the reduced eigenstates,  which is  a conjecture regarding the \textit{distribution of eigenstates} responsible for the thermal behavior (in the standard sense of ETH) of few-body observables in chaotic systems.

By separating out the components of the reduced density matrix into maximally mixed, smooth and fluctuating parts as a function of energy,  a generic late time expression for PSFF can be obtained. From here, we later conclude that  the features of the ramp, plateau and shift are  generic features of the PSFF in chaotic quantum many-body systems. These features are  directly connected to the spectrum and ensemble averages of the  subsystem purities $\mathrm{Tr}_B(\rho_B(E)^2)$ and of the overlaps of reduced eigenstates $\mathrm{Tr}_B(\rho_B(E_i)\rho_B(E_j))$. Furthermore,  the magnitudes of these features in the chaotic systems follow specific constraints when the eigenstates satisfy subsystem ETH, see Sec.~\ref{sec:ethConstraints}. In particular, we show that this shift, connected to the average overlaps, enables the detection of thermalization of eigenstates in the framework of subsystem ETH.

Let us take for instance the shift seen in the Fig.~\ref{fig:figure1}, defined precisely in terms of the fluctuating part of the density matrix later in Sec~\ref{sec:pSFFchaotic}. For chaotic models, the shift can be identified as the time independent constant during the linear ramp phase, and for $D_A \ll D$ it is approximated by $K_A(t_0)-K(t_0)$ where $t_{Th} < t_0 \ll t_H$.
If the eigenstates follow ETH,  it is expected that, 
\begin{equation}
K_A(t_0)-K(t_0)\approx O\left(\frac{1}{D_A^2}\right)~.
\label{eq:numerical-shift}
\end{equation}
This can be noted for the CUE in the Eqs. \eqref{eq:SFF_cue} and \eqref{eq:pSFF_cue} as well as for the $V_3$ model in Fig.~\ref{fig:figure1}, where the shift above SFF is seen to be increasing as the $N_A$ decreases and is found to follow Eq.~\eqref{eq:numerical-shift} (see Sec.~\ref{sec:psffNumerical} for more numerical details). On the other hand, for eigenstates which do not thermalize, the time independent shift above SFF is generically much larger than $ O(1/D_A^2)$.

As illustrated above, the SFF and PSFF of a quantum many-body system provide crucial insights into the statistics of energy eigenvalues and eigenstates, which results in  a joint observation of chaos  and validity of ETH. The question arises of how to probe the SFF and PSFF in today’s quantum devices. In the next subsection, we present our measurement protocol which can be directly implemented in state-of-the-art quantum simulation platforms realizing lattice spin models. It builds on the toolbox of randomized measurements. 

\subsection{Randomized measurements of spectral form factors}
\label{sec:protocol}

Initially, randomized measurements have been proposed and experimentally implemented to characterize many-body quantum states \cite{VanEnk2012,Elben2018, Vermersch2018,Brydges2019,Ketterer2019,Elben2020_xPlatform,Elben2020_SPT,Cian2020, Huang2020,Elben2020_Mixed,Knips2020, Zhou2020,Imai_2021, Yu2021,Rath2021,Vitale2021,Neven2021,Tran2015,Tran2016,Satzinger2021,Mi2020,Rath2021a} and  (out-of-time-ordered) correlation functions of Heisenberg operators \cite{Vermersch2019,Joshi2020}.
Randomized measurements on quantum states exploit statistical correlations obtained between measurements obtained from different random bases. 
However, for measuring an object like the SFF, we need to access the full trace of the time evolution operator $T(t)$, summing contributions from all its eigenstates. Therefore, we need to devise a  protocol that can measure how various initial states are propagated via $T(t)$, in a way that allows to extract the SFF from standard  projective measurements. This subsection provides this protocol and the estimation formulas to achieve this. We also comment on statistical errors arising from a finite number of experimental runs which are elaborated in detail in Sec.~\ref{sec:errors}.

\subsubsection{Description of the protocol}

Before describing the experimental sequence in detail, we first outline the key idea of our protocol: As visualized in Fig.~\ref{fig:figure2}, we consider a system $\mathcal S$ of $N$ qubits.  The first step of our protocol is to  prepare a random product state of these qubits. Next, this state is evolved with $T(t)$. Finally, a local measurement in the conjugate random product basis is performed, in order to probe how the time-evolved state compares to the initial random product state.
This is repeated for many random product states in order to sample the complete trace $\tr[]{T(t)}$ of the time evolution operator and its adjoint uniformly.
For instance, in the trivial case $T(t=0)=\mathbb{1}$, we obtain that the `time-evolved' state always matches to the initial random state corresponding to $D^{-1}\tr[]{T(0)}=1$. At later times $t$, we obtain in general a more complex statistics of measurement results from which we can extract the SFF and PSFF. 

In our protocol, we note that the ensemble average  over time evolution operators in the definition of SFF and PSFF can be favorably combined with the averaging over random product states and measurement bases. As detailed in the prescription of the protocol in the next paragraph,  each time evolution operator can thus in practice be applied only to a single random initial product state and measured only once in the corresponding randomized basis, i.e., only a single-shot measurement for each time evolution operator is sufficient in our protocol.

\begin{figure}[t]
\includegraphics[width=0.85\linewidth]{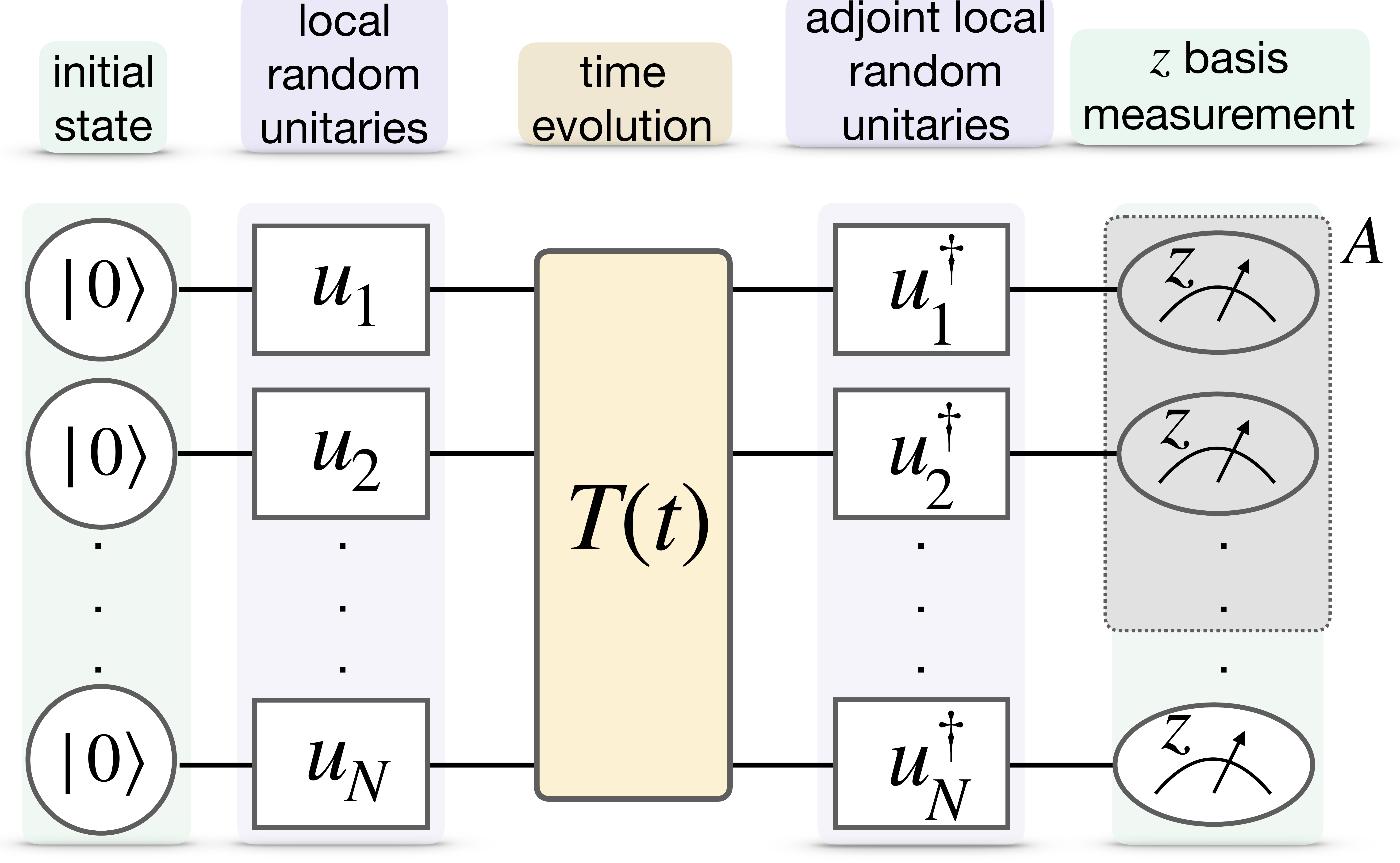}
  \caption{\textit{Probing SFF and PSFF using randomized measurements.} We present our protocol for the measurement of the SFF and PSFF using statistical correlations of local random unitaries  applied at different times in a single experiment. We begin with a product state $\rho_0=\ketbra{0}{0}^{\otimes N}$. Before and after the time evolution  $T(t)$, we apply random local rotations $U=\bigotimes_i u_i$ and $U^\dagger$, respectively, where local unitaries $u_i$ are sampled from  a unitary $2-$design. Here,   $T(t)$  can be generated as   Hamiltonian evolution, $T(t)={\rm{exp}}(-i Ht)$,   or Floquet dynamics, $T(t=n\tau)=V^n,~n\in\mathbb N$, where $V$ is Floquet evolution operator for time period $\tau$. In the last step, a single-shot  measurement is performed in the $z-$basis to collect a bitstring of the form $\mathbf{s}=(s_1, s_2, ..., s_{N})$ with $s_i\in\{0,1\}$. This procedure is repeated $M$ times and $M$ bitstrings are collected to estimate the SFF and PSFF using Eqs.\ \eqref{eq:sffmeas} and \eqref{eq:psffmeas}. The gray shaded region shows one possible choice of the subsystem $A$.}
    \label{fig:figure2}
    \end{figure}

In detail, the experimental recipe reads as follows:
 (i) We begin with a product state $\rho_0=\ketbra{\mathbf{0}}{\mathbf{0}}$ with $\ket{\mathbf{0}} \equiv  \ket{ 0}^{\otimes N}$. (ii) On this initial state, we apply local random unitaries $U=\bigotimes_{i=1}^N u_i$ where  $u_i$ are the local unitaries independently sampled from a unitary 2-design \cite{Dankert2009,Gross2007} on the local Hilbert space $\mathbb C^2$. Here, unitary 2-designs are ensembles of random unitaries whose first and second moments match the moments of the Haar measure on the unitary group (defining the CUE) \cite{Dankert2009,Gross2007}. Examples of unitary 2-designs on $\mathbb C^2$ include the (discrete) single-qubit Clifford group as well as uniformly distributed  unitary $2\times 2$ matrices which can be sampled for instance via the algorithm presented in Ref.~\cite{Mezzadri2006}. (iii) We evolve the system in time, i.e.\ apply  a time evolution operator $T(t)$, which is  generated   by a Hamiltonian $H$ (or Floquet operator $V$) with randomly sampled disorder potentials. (iv) We apply the adjoint local random unitary $U^\dagger$ resulting in  the final state $\rho_f(t)=U^\dagger T(t) U \rho_0 U^\dagger T^\dagger(t) U$. (v) Lastly, we perform a single-shot measurement in the computational basis  with outcome bitstring $\mathbf{s}=(s_1, \dots, s_N)$ with $s_i \in \{0,1\}$ for $i=1,\dots, N$. This concludes a single experimental run of our protocol. Steps (i)-(v) are now repeated $M$ times with new disorder realizations and new local random unitaries such that a set of outcome bitstrings $\mathbf{s}^{(r)}$ with $r=1, \dots M$ is collected. 

\subsubsection{Estimation formulas and illustrations}

The statistics of the measured  bitstrings $\mathbf{s}^{(r)}$, $r=1, \dots M$, depends  on the applied time evolution operators $T(t)$. Using the theory of unitary $2$- designs, we can express the SFF as a function of this data.  We define \be
\widehat{K(t)}= \frac{1}{M}  \sum_{r=1}^{M} \;  (-2)^{ - |\mathbf{s}^{(r)}|}~,
\label{eq:sffmeas}
\ee
where $|\mathbf{s}|\equiv \sum_i s_i$.
As we show in Sec.\ \ref{sec:proof_mt},  $\widehat{K(t)}$ yields an (unbiased)  estimate of the SFF for a finite number $M$ of experimental runs and converges  to $K(t)$  when $M\to \infty$. 

Remarkably, from the \textit{same} measurement data $\mathbf{s}^{(r)}$, we have also access to the PSFF $K_A(t)$ for arbitrary subsystems $A\subseteq \mathcal{S}$ via post-processing. To  this end, we simply project the measured bitrings on the subsystem $A$ of interest, i.e., define  $\mathbf{s}_A=(s_i)_{i\in A}$,
and use 
\be
\widehat{K_A(t)}= \frac{1}{M}  \sum_{r=1}^{M} \;  (-2)^{ - |\mathbf{s}_A^{(r)}|}~,
\label{eq:psffmeas}
\ee
which gives an (unbiased) estimate for $K_A(t)$ for finite $M$ and converges to $K_A(t)$ when $M\to \infty$ (see Sec.~\ref{sec:proof_mt}).

In Fig.~\ref{fig:figure1}(a-b), we  illustrate our measurement protocol in the  context of the periodically kicked spin-$1/2$ model $V_3$, Eq.\ \eqref{eq:Floquet-illustration-V3}. We consider a total system size of $N=6$ qubits and present the simulated experimental results (black dots and error bars) for $K(t)$ and $K_A(t)$ using $M=2 \times 10^5$ experimental runs for the single-shot sequence shown in Fig.~\ref{fig:figure2} at each time $t$. We observe that the simulated experiment agrees with the exact numerical calculations at all times $t$ within error bars. Here, error bars, indicating the standard error of the mean, quantify statistical errors arising from the finite measurement budget (i.e.\ the finite number $M$ of simulated single-shot measurements), see next subsection.

\subsubsection{Statistical errors and remarks}

The SFF and PSFF can  be accessed from the same set of measurement data via the estimators defined in Eqs.~\eqref{eq:sffmeas} and \eqref{eq:psffmeas}. Statistical errors arise in practice from a finite number $M$ of experimental runs, and are governed by the variance of these estimators. We discuss statistical errors in detail via numerical and analytical calculations in \ref{sec:errors}, and find a typical scaling of $M\sim 10^{N_A}\approx 2^{3.32 N_A}$ to access the (P)SFF of a (sub-)system of size $N_A$ up to a fixed relative error. Such exponential scaling of the measurement effort  reflects the exponential decrease of the SFF  with system size [see remarks below Eq.~\eqref{eq:SFF_cue}].  We emphasize however that this scaling of the experimental effort is  substantially better than  for  quantum process tomography which requires at least $\sim 2^{5N_A}$ experiments to reconstruct the full time evolution operator $T(t)$ \cite{Torlai2020}. Importantly, and in contrast to quantum process tomography, the initial state and the measurement basis coincide in our protocol. 

 As detailed in Sec.~\ref{sec:errors}, we can further decrease the required number of experimental runs to observe the ramp and plateau of the (P)SFF, by considering an averaged PSFF. Here, an average over PSFFs of all subsystems with a fixed size is performed. This results in a further improved signal-to-noise ratio.

Lastly, we remark that our protocol shares some similarities with randomized benchmarking \cite{Emerson2005,Emerson2007,Knill2008,Magesan2012,erhard2019characterizing}, where however \textit{global} random unitaries and their inverses are applied sequentially. In the case of randomized benchmarking the goal is to characterize noise and decoherence acting during the implementation of these global random unitaries. In contrast, with our protocol, the aim is to characterize a unitary time evolution operator $T(t)$  using \textit{local} random unitaries $U=\bigotimes_i u_i$ applied before and after  $T(t)$, which can be prepared with high fidelity~\cite{Brydges2019,Elben2020_xPlatform}.

\paragraph*{Organization of the paper:}In the remainder of the manuscript, we elaborate on the contents of the above synopsis with technical details,  derivations, and examples. In Sec.~\ref{sec:psffAnalytical},  we provide an in-depth theoretical analysis of the PSFF in RMT and in generic many-body models in relation to ETH. The analytic results are compared with numerics in Sec.~\ref{sec:psffNumerical} where we consider many-body models undergoing Floquet and Hamiltonian evolution. For the latter, we discuss both, chaotic and MBL phases. Sec.~\ref{sec:proof_mt} contains the necessary background and proof of our protocol to measure  the SFF. In Sec.~\ref{sec:errors}, we discuss statistical errors, arising in our measurement protocol from a finite number of experimental runs, and the influence of experimental imperfections. Lastly, we summarize in Sec.~\ref{sec:conclusion} with some concluding remarks and future directions.

\section{Partial Spectral Form Factor: Analytic Results}
\label{sec:psffAnalytical}
In this section, we analyze the origin of the main features observed in the PSFF, namely the ramp, plateau and shift, based on analytical calculations. We provide arguments to show that the PSFF generically  is  a reliable probe of eigenvalue correlations characterizing chaotic and localized phases, signified by the presence and absence of a late time ramp-plateau structure respectively. In addition, we show that the specific features observed in the PSFF are related to the ensemble and spectrum averaged second-moments of reduced density matrices of eigenstates at different energies, and therefore provide a useful measure of eigenstate properties. 

This section is organized as follows. In Sec.~\ref{sec:pSFF_RMT}, we analyze the PSFF in standard Wigner-Dyson random matrix ensembles (see App.~\ref{app:SFF-RMT} for a brief discussion), which are mathematically idealized models of quantum chaotic systems in which the PSFF can be obtained exactly. These ensembles display the essential features of the PSFF and present a clear example of the roles of eigenvalue and eigenstate statistics in these features. This is followed by a discussion of more general chaotic systems in Sec.~\ref{sec:pSFFchaotic}, where we show that the PSFF detects thermalization in the sense of ETH ~\cite{deutsch1991eth, srednicki1994eth, srednicki1999eth,rigol2008eth,DAlessio2016,deutsch2018eth,subETH} in addition to level statistics (see also Ref.~\cite{GarrattChalker}, that compares ETH for Floquet circuits to random matrix ensembles using the PSFF for specific subsystem sizes). We then discuss the PSFF in localized systems in Sec.~\ref{sec:pSFFlocalized}, and summarize our main conclusions for all cases in Sec.~\ref{sec:psff_summary}. 

Common to all these cases is the fact that the time-independent part of the PSFF in Eq.\ \eqref{eq:psff} is given by the plateau value, which depends only on the eigenstate purities (assuming no degeneracies) i.e. $K_{\mathrm A}(t\rightarrow \infty) =  {\mathcal{P}}_B/D_A$, where
\begin{equation}
    \mathcal{{P}}_B= \frac{1}{D} \overline{\sum_{i}\tr[B]{\rho_B^2(E_i)}}
    \label{eq:pb}
\end{equation}
is the (spectrum- and ensemble-)averaged purity of the reduced energy eigenstates. For later reference, we separate out this time-independent plateau value,
\begin{align}
K_{\mathrm A}(t)=&   
\frac{\mathcal{P}_B}{D_A} +\frac{1}{D \DA} \sum_{i\neq j}\overline { e^{i (E_i-E_j) t}\tr[B]{\rho_B(E_i)\rho_B(E_j)} } ,
\label{eq:psff-PQ}
\end{align}
and note that the time-dependent second term only involves overlaps of distinct energy levels.

\subsection{Random matrix ensembles}
\label{sec:pSFF_RMT}
To understand the essential features of the PSFF we first analyze it in RMT, allowing for an exact determination of the PSFF.
We choose Hamiltonians $H$ (Floquet operators $V$)   from the canonical Wigner-Dyson random matrix ensembles \cite{Wigner1955, Dyson1962, Mehta2004, HaakeBook}, yielding time evolution operators  $T(t)=\exp(-iHt)$ [$T(t=\tau n)=V^n$]. To evaluate the ensemble average in Eq.~\eqref{eq:psff-PQ}, we can utilize that for these RMT ensembles the  eigenvalues and eigenstates of $H$ ($V$) are uncorrelated. Thus, their ensemble average factorizes and can be performed independently. We find
\begin{align}
 K_A(t) = \frac{\mathcal{P}_B-\mathcal{Q}_B } {\DA}+ \DB \mathcal{Q}_B K(t)~,
\label{eq:psffrmt}
\end{align}
where $\mathcal{Q}_B=(D(D-1))^{-1}\overline{\sum_{i\neq j}\tr[B]{\rho_B(E_i)\rho_B(E_j)}}$ and $\mathcal{P}_B$ are the averaged overlap and purities of the reduced eigenstates, respectively. We note that here the PSFF is the  full SFF with a scaling factor $\DB \mathcal{Q}_B$ and a constant subsystem dependent shift $(\mathcal{P}_B-\mathcal{Q}_B)/\DA$ such that the entire time dependence of the PSFF is captured in the SFF. Therefore,  the PSFF in these models preserves the characteristic ramp-plateau structure and the relevant time scales of the SFF.

As shown in App.~\ref{app:pSFF-RMT}, we can evaluate $\mathcal{P}_B$ and $\mathcal{Q}_B$ explicitly using Wigner-Dyson RMT for the eigenstates of $H$ ($V)$. They are functions  of only the Hilbert space dimensions of subsystems $A$ and $B$, i.e. $\mathcal{P}_B\equiv\mathcal{P}_B (\DA, \DB)$ and $\mathcal{Q}_B\equiv\mathcal{Q}_B (\DA, \DB)$. The precise functional form of $\mathcal{P}_B$ and $\mathcal{Q}_B$  depends on the symmetry class of the Hamiltonian $H$ (Floquet operator $V$). For the case of the  unitary Wigner-Dyson ensembles, for example $H$ from the Gaussian unitary ensemble or $V$ from CUE, we find 
\begin{align}
    \mathcal{P}_B = \frac{\DA+\DB}{\DA\DB+1} \quad ; \quad
    \mathcal{Q}_B =\frac{\DB \left({\DA}^2-1\right) }{{\DA}^2 {\DB}^2-1}~.
    \label{eq:RMT-PandQ}
\end{align}
The analogous expressions for  orthogonal Wigner-Dyson ensembles  can be found in App.~\ref{app:pSFF-RMT}. In both symmetry classes at $D_A,D_B \gg 1$, we find that, $\mathcal{P}_B-\mathcal{Q}_B\approx 1/\DA$ and $\mathcal{Q}_B\approx 1/\DB$. Thus, in this limit, the PSFF has a constant shift of $1/D_A^2$ added to the SFF and the slope of the ramp  is the same as the slope of the ramp in the SFF, i.e. $K_A(t)\approx K(t)+1/\DA^2$ [see also Eq.\ \eqref{eq:pSFF_cue}].

\subsection{General chaotic systems} 

\label{sec:pSFFchaotic}
In the case of more general chaotic systems, we begin by separating out the reduced density matrices of the energy eigenstates into smooth and fluctuating functions of energy,
\begin{equation}
    \rho_B(E) = \frac{\mathbb{1}}{D_B}+\Delta\rho_B(E)+\delta\rho_B(E)~.
    \label{eq:rho_separation}
\end{equation}
Here, the first term is a constant corresponding to a maximally mixed reduced density matrix; $\Delta\rho_B(E)$ is traceless and a smooth function of $E$, while $\delta\rho_B(E)$ is again traceless but required to fluctuate rapidly with $E$. For our present purposes, it is useful to define the smooth and fluctuating parts in terms of their Fourier transforms with respect to a continuous energy variable as follows: for some cutoff time $t_\rho \ll O(D)$, we take their respective Fourier transforms to satisfy $(\Delta\tilde{\rho}_B(t))_{jk} = 0$ for $\lvert t\rvert > t_\rho$, and $ (\delta\tilde{\rho}_B(t))_{jk} = 0$ for $\lvert t\rvert < t_\rho$ (with some additional details in App.~\ref{app:pSFF-ETH}). The essence of the definition is that as a function of energy, the smooth part varies only over scales much larger than some energy window of size $t_\rho^{-1}$ containing several levels, while the fluctuating part varies only over scales much smaller than $t_\rho^{-1}$.

We will further assume that $\delta\rho_B(E)$ behaves as if it is `randomized' within these energy windows over the ensemble i.e.\ it is uncorrelated with the smooth part and satisfies $\overline{\tr[B]{\delta\rho_B(E_i)\delta\rho_B(E_j)}} = \delta_{ij}\overline{\tr[B]{\delta\rho_B^2(E_i)}}$ for $E_i,E_j$ closer than $\sim t_\rho^{-1}$, fluctuating around an average of zero (we do not require this behavior to persist over larger energy scales $\lvert E_i-E_j\rvert \gtrsim t_\rho^{-1})$. We note that this assumption is consistent with the general picture of random behavior over small energy windows in chaotic systems \cite{DAlessio2016}, and we can justify it more generally (irrespective of whether the system/ensemble is chaotic) as follows. In evaluating the SFF $K(t)$, the ensemble is usually chosen to have sufficiently large disorder so that the energy levels are randomly distributed over some large energy window, across different ensemble realizations. This is necessary to eliminate the erratic fluctuations of the SFF at large $t$ that depend on the precise positions of levels, and obtain a smooth ensemble-averaged behavior (see e.g.\ Refs.~\cite{prange1997sff,Bertini2018} for further discussion of this point). Our assumption is essentially that, this random redistribution of levels over different ensemble realizations extends to an energy window of $\sim t_\rho^{-1}$, effectively randomizing the fluctuations $\delta\rho_B(E)$ faster than this scale, while $\Delta\rho_B(E)$ which varies over scales larger than this energy window is not randomized in this manner. We also note that the eigenstates of a given ensemble realization themselves may additionally be random superpositions of those of a different realization, e.g.\ generally randomly mixing all eigenstates of the latter within the energy window in fully chaotic systems (i.e.\ systems with no `physical' conserved quantities other than energy)~\cite{deutsch1991eth,deutsch2010eigenstates,lu2019renyi, murthy2019eigenstates}, which gives further weight to this assumption.

\subsubsection{Shift-ramp-plateau structure of the PSFF}
\label{sec:pSFF_shiftrampplateau}

Using the form in Eq.~\eqref{eq:rho_separation}, the overlaps occurring in the definition of the PSFF in Eq.~\eqref{eq:psff} separate out into independent contributions from each part of the reduced density matrix - the cross terms vanish, due to tracelessness for terms involving overlaps with the maximally mixed part, or due to the randomization of $\delta\rho_B(E)$ for terms involving overlaps of the smooth and fluctuating part for $t \gg t_\rho$. We can write this as,
\begin{equation}
    K_A(t \gg t_\rho) = K(t) + \Delta K_A(t) + \delta K_A(t)~,
    \label{eq:KA_separation}
\end{equation}
where $\Delta K_A(t)$ involves only overlaps of the form $\tr[B]{\Delta\rho_B(E_i)\Delta\rho_B(E_j)}$ and similarly, $\delta K_A(t)$ involves only those of the form $\tr[B]{\delta\rho_B(E_i)\delta\rho_B(E_j)}$.
On decomposing $\delta K_A(t)$ in a manner analogous to Eq.~\eqref{eq:psff-PQ}, it follows that its time dependent part for $t\gg t_\rho$ (which sees contributions only from variations of the overlaps of fluctuating parts within energy windows smaller than $\sim t_\rho^{-1}$) vanishes on ensemble averaging, an important consequence of the randomization of $\delta\rho_B(E)$. This leaves only a constant contribution from the purity of the fluctuating part, $\delta K_A(t \gg t_\rho) = \delta\mathcal{P}_B/\DA$, where $\delta\mathcal{P}_B\equiv D^{-1}\overline{\sum_i \tr[B]{\delta\rho_B^2(E_i)}}$ (here we use `purity' to generally mean $\tr{x^2}$ for a Hermitian operator $x$). We see that this constant late-time shift is a generic feature of the PSFF, independent of the specific form of the full SFF $K(t)$. It merges into the plateau of the PSFF when $K(t)$ and $\Delta K_A(t)$ show only a plateau behavior - and therefore, the shift is an independent observable only if the other two terms show non-trivial time dependence at late times $t \gg t_\rho$.

We note that $\Delta K_A(t)$ is modulated only by a smooth function of two energy variables varying over scales larger than $t_\rho^{-1}$. For $t \gg t_\rho$, it should then essentially see the contribution to $K(t)$ from each part of the spectrum but modulated by the value of the function for nearly equal energies in that part. In App.~\ref{app:pSFF-ETH}, we show this by direct calculation for a fully chaotic system with Wigner-Dyson level statistics, obtaining a modulated linear ramp and plateau in addition to the late-time shift, for $t \gg t_{\rm Th},t_\rho$,
\begin{align}
    &K_A(t\gg t_{\rm Th},t_\rho) = \frac{\delta\mathcal{P}_B}{D_A}\qquad\qquad\qquad\qquad\qquad  \nonumber\\
    +&\frac{1}{D}\begin{cases}
(\beta\pi D)^{-1}\gamma t\left(1+D_B\widetilde{\Delta\mathcal{P}}_B\right) & \text{for } t \ll t_H~,\\ 1+D_B\Delta\mathcal{P}_B & \text{for } t \gg t_H~.
\end{cases}
\label{eq:pSFF_ETH}
\end{align}
Here, $\beta=1,2$ respectively for the orthogonal and unitary classes, while $\gamma = \sum_i \Omega^{-1}(E_i)$ is the range of energies in the spectrum with $\Omega(E)$ representing the (smoothened) local density of states, in agreement with known results for the full SFF (see e.g.~Refs.~\cite{Gharibyan2018,Liu2018}). To keep the expressions simple, we are ignoring corrections that are prominent near $t\sim t_H$ [see, for instance, the exact form of the GOE SFF in Eq.~\eqref{eq:sffgoe}]; we focus instead on the $t \ll t_H$ regime where the ramp appears linear for all values of $\beta$ and profiles of $\Omega(E)$, and the $t \gg t_H$ regime with a constant plateau. However, both expressions are exact throughout the range of times when $\beta = 2$ with constant density of states $\Omega(E) = t_H/(2\pi)$. We have also defined two ensemble-averaged quantities corresponding to slightly different spectrum averages of the purity of the smooth part, $\Delta \mathcal{{P}}_B = D^{-1}  \overline{\sum_{i}\tr[B]{\Delta\rho_B^2(E_i)}}$ and $\widetilde{\Delta\mathcal{{P}}_B} = \gamma^{-1} \overline{\sum_{i}\Omega^{-1}(E_i)\tr[B]{\Delta\rho_B^2(E_i)}}$, the latter including the contribution to the coefficient of the linear ramp from each part of the spectrum. We note that the purities of the smooth and fluctuating parts are (exactly) related to the overall average purity by $\mathcal{P}_B = D_B^{-1}+\Delta\mathcal{P}_B+\delta\mathcal{P}_B$, giving the expected plateau value of $\mathcal{P}_B/D_A$ in Eq.~\eqref{eq:pSFF_ETH}. There are also two competing time scales for the onset of the ramp, $t_{\rm Th}$ and $t_{\rho}$ - the former entirely determines the behavior of $K(t)$ but the latter appears in $\Delta K_A(t)$ and $\delta K_A(t)$.

For direct comparison with numerics, it is useful to define the ensemble averaged overlap of adjacent states, $Q_B = (D-1)^{-1}\sum_i \overline{\tr[B]{\rho_B(E_i)\rho_B(E_{i+1})}}$.
Using Eq.~\eqref{eq:rho_separation},  we note that,
\begin{align}
\mathcal{Q}_B  &= \frac{1}{\DB} + \Delta \mathcal P_B~, \nonumber \\
{\mathcal{P}_B} - {\mathcal{Q}_B}  &= \delta \mathcal P_B~,
\label{eq:PQ-numerics}
\end{align}
which follow from the assumption of uncorrelated $\delta\rho_B(E)$ in the ensemble, and taking $\Delta\rho_B(E_i) \simeq \Delta\rho_B(E_{i+1})$. We note that this definition of $Q_B$ is equivalent to that in Sec.~\ref{sec:pSFF_RMT} for random matrix ensembles, where the ensemble averaged overlaps between distinct states are independent of their energies. Sec.~\ref{sec:psffNumerical} will directly use $\mathcal{P}_B$ and $\mathcal{Q}_B$, with the implicit assumption that $\widetilde{\Delta P}_B$ is of similar order of magnitude to $\Delta\mathcal{P}_B$ (due to $\Omega(E)$ being of a similar order of magnitude throughout the spectrum) and is therefore similarly well represented by $\mathcal{Q}_B$.

\subsubsection{Constraints from eigenstate thermalization}
\label{sec:ethConstraints}

We have seen that at late times, the PSFF preserves the characteristic features of the SFF, such as the ramp and the Heisenberg time (as in Eq.~\eqref{eq:pSFF_ETH} for fully chaotic systems). However, there are non-negative subsystem-dependent parameters $\mathcal{P}_B$, $\delta\mathcal{P}_B$ and $\Delta\mathcal{P}_B$ ($\sim \widetilde{\Delta\mathcal{P}}_B$) that respectively influence the plateau value, the magnitude of the shift and the magnitude i.e., slope of the ramp. The purity $\mathcal{P}_B$ measures the extent of delocalization of eigenstates in a physical basis (e.g.\ a product basis of qubits), while we will see that $\delta \mathcal{P}_B$ and $\Delta\mathcal{P}_B$ are complementary probes of thermalization of these eigenstates. Specifically, we mean thermalization in the sense of ETH - that eigenstates corresponding to sufficiently close energies show nearly identical behavior in the dynamics of few-body observables~\cite{deutsch1991eth,srednicki1994eth,srednicki1999eth,rigol2008eth,DAlessio2016,deutsch2018eth}.

For our purposes, it is convenient to use subsystem ETH~\cite{subETH}, which amounts to imposing ETH on an entire subsystem i.e.\ for all observables in the subsystem, and is directly expressed in terms of reduced density matrices. It can be interpreted as the requirement of a small fluctuating part for the reduced density matrices of thermal eigenstates, as opposed to large fluctuations for non-thermal eigenstates. We can therefore apply it directly to the decomposition of reduced density matrices in Eq.~\eqref{eq:rho_separation}. An important advantage of this version of ETH is that the dependence on subsystem size is made more explicit, whereas more conventional statements of ETH restrict themselves to few body operators, corresponding to extremely small subsystems and therefore negligible subsystem dependence. This subsystem size dependence will turn out to be the primary non-trivial indicator of the properties of eigenstates in the PSFF. 

In App.~\ref{app:subETH}, we discuss the general constraints from (an extension of) subsystem ETH for eigenstates with an arbitrary extent of delocalization in a physical basis. Here, we present the results for a system with fully delocalized eigenstates, characterized by subsystem purities that follow the volume law of entanglement~\cite{Abanin2019},
\begin{equation}
\mathcal{P}_B = D_B^{-1}+O(D_B^{-1})+O(D_A^{-1}), \label{eq:purityconstraint2}    
\end{equation}
which cannot be less than $D_B^{-1}$ as well as $D_A^{-1}$. This is the case relevant for the numerical examples of Sec.~\ref{sec:psffNumerical}. If these eigenstates are thermal, subsystem ETH requires the smooth and fluctuating parts to satisfy,
\begin{equation}
    \Delta\mathcal{P}_B = O(D_B^{-1}),\  \delta\mathcal{P}_B = O(D_A^{-1}). \label{eq:thermalconstraints2}
\end{equation}
Non-thermal eigenstates are characterized by much larger fluctuations, $\delta\mathcal{P}_B \gg O(D_A^{-1})$, with $\Delta\mathcal{P}_B$ being correspondingly smaller so as to satisfy the constraint $\mathcal{P}_B = D_B^{-1}+\Delta\mathcal{P}_B+\delta\mathcal{P}_B$.
A narrower class of such chaotic systems (e.g.\ Floquet systems) have uniformly random eigenstates that are distributed in close agreement with the standard random matrix ensembles (Sec.~\ref{sec:pSFF_RMT}); the leading forms of the corresponding exact results in Eq.~\eqref{eq:RMT-PandQ} are seen to be consistent with Eqs.~\eqref{eq:purityconstraint2},\eqref{eq:thermalconstraints2}, on relating the two using Eq.~\eqref{eq:PQ-numerics}. In this context, we note that Ref.~\cite{GarrattChalker} has observed subleading corrections to the random matrix prediction for eigenstates in 1D Floquet quantum circuits.

\subsection{Localized systems}
\label{sec:pSFFlocalized}
Now, we consider localized systems, which show Poisson level statistics (i.e.\ uncorrelated neighboring levels) with localized non-thermal eigenstates, for strong disorder~\cite{nandkishore2015mbl, Abanin2019}. Here, $K(t)$ shows only a plateau at late times, allowing us to access only the purity $\mathcal{P}_B$ through the PSFF. Fully localized states are essentially nearly pure states with $\mathcal{P}_B\sim O(1)$ (more precisely, following an area law of entanglement~\cite{Abanin2019}), and additionally have large fluctuations $\delta\mathcal{P}_B \sim O(1) \leq 1-D_B^{-1}$. In other words, fully localized states cannot thermalize, as they would have to be distributed over different physical basis states due to orthogonality. An $O(1)$ plateau value is therefore all we need to characterize the eigenstates of such systems.

On the other hand, when the eigenstates become more delocalized in the approach to a chaotic phase, thermalization becomes a possibility. The moment any non-trivial correlations between nearby energy eigenvalues emerge in the spectrum, leading to a time dependence of $K(t)$ for $t>t_\rho$, $\delta\mathcal{P}_B$ becomes a meaningful observable in the PSFF according to the discussion following Eq.~\eqref{eq:KA_separation}. Here, the PSFF can be used to study the extent of thermalization in addition to the delocalization of the eigenstates.

\subsection{Summary}
\label{sec:psff_summary}

Let us summarize the main conclusions of this section from a unified perspective, before moving on to illustrate them with numerical examples in the next section. The PSFF in a subsystem $A$ combines energy level statistics, as reflected in the SFF, with the purities and overlaps of the reduced energy eigenstates in the complementary subsystem B. The plateau value of the PSFF encodes the (spectrum and ensemble averaged) purity, which is $\sim O(1)$ in a fully localized phase, and small for fully delocalized states in accordance with the volume law of entanglement, Eq.~\eqref{eq:purityconstraint2}. Something more interesting happens at late times if the SFF has a ramp or other time-dependent feature due to the existence of local level correlations. The PSFF inherits the ramp, but the ramp couples only to the smooth, slowly varying part of the reduced energy eigenstates. The rapidly fluctuating part is left over as a nearly time-independent shift [Eq.~\eqref{eq:pSFF_ETH}].

Eigenstate thermalization is primarily encoded in the size of the fluctuating part as measured by the shift - namely, an exponential suppression of the latter with subsystem size $N_A$ is indicative of thermalization [Eq.~\eqref{eq:thermalconstraints2}], while the lack of such a suppression translates to a failure of the eigenstates to thermalize. The smooth part is correspondingly large for thermal eigenstates and small for non-thermal eigenstates, so as to preserve the overall purity (i.e.\ extent of delocalization). Finally, there are special systems for which much more precise predictions for the PSFF can be theoretically derived/motivated and tested, such as chaotic Floquet systems with their random matrix-like eigenstates [Eqs.~\eqref{eq:psffrmt} and \eqref{eq:RMT-PandQ}].

Thus, the PSFF complements the SFF in analyzing late-time quantum chaos by being able to probe if the eigenstates satisfy ETH, in addition to (and because of) capturing information about level correlations as contained in the ramp of the SFF. In particular, we expect that it could potentially be useful in studying the joint emergence or loss of Wigner-Dyson level statistics and eigenstate thermalization (which are formally independent notions of late time quantum chaos) and their interdependence, across a transition or crossover between a chaotic and non-chaotic phase. This could be done by tuning the parameters of a system (say, in a quantum simulator) between such phases, and measuring PSFFs across different choices of subsystems of different sizes - analyzing the extent of delocalization of eigenstates in the absence of a ramp via the plateau value, and additionally the extent of thermalization through the value of the shift if a ramp or other time-dependent feature is present at late times. Among the interesting possibilities that have been considered for such an intermediate regime, which could conceivably be probed with the PSFF, is the existence of so-called non-ergodic extended states~\cite{deluca2014nee,kravtsov2015nee,facoetti2016nee,altshuler2016nee,kravtsov2018nee,altland2019nee} where the eigenstates are incompletely delocalized but do not thermalize, or alternatives in which the eigenstates thermalize without being fully delocalized~\cite{ThermalNEEs}.

\section{Partial Spectral Form Factor: Numerical Results}
\label{sec:psffNumerical}
Having discussed features of the PSFF and its connection to the SFF utilizing Wigner-Dyson random matrix ensembles and the ETH, we now present our numerical results of PSFFs in locally interacting many-body models, as realized in quantum simulators.
For this purpose, we focus on two examples: the Floquet model Eq.\ \eqref{eq:Floquet-illustration-V3} and the Hamiltonian model Eq.\ \eqref{eq:Hamiltonian}. Our results are in agreement with the analysis of the previous Sec.~\ref{sec:psffAnalytical}, in particular regarding the orders predicted for the averaged purity $\mathcal{P}_B$ and the overlap $\mathcal{Q}_B$ via Eq.\ \eqref{eq:PQ-numerics}. We consider the Floquet model in the chaotic phase and the Hamiltonian model in both the chaotic and MBL phases. 

\subsubsection{Example 1: Floquet system}
\begin{figure}[!ht]
    \centering
\includegraphics[width=\linewidth]{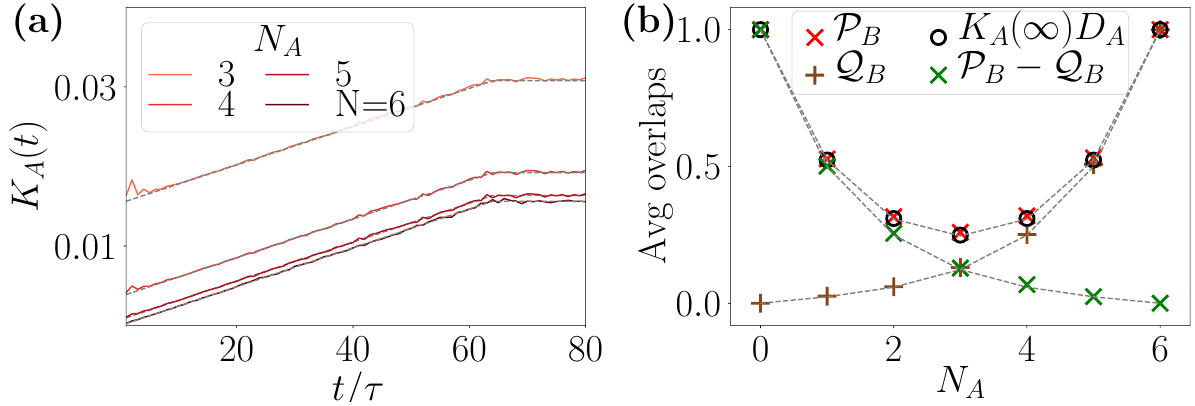}
\caption{\textit{Results for the Floquet $V_3$ model.} (a) The SFF and PSFF are presented for $N=6$, $N_A=3,~4,~5$ in red colors. In gray, we plot the same quantities in a CUE model. (b) The plateau value $K(\infty)$ multiplied with the subsystem dimension $D_A$ is plotted in black circles and matches with the averaged purity $\mathcal{P}_B$ plotted with red crosses. The average overlap $\mathcal{Q}_B$ and the difference $\mathcal{P}_B-\mathcal{Q}_B$ are presented in brown and green respectively. We observe a perfect match with the respective quantities in CUE plotted in gray, indicating the same averaged eigenvalue and eigenstate statistics in CUE and $V_3$. In the numerical computation, we have taken 8000 disorder realizations to perform ensemble averaging and the subsystems $A$ are chosen from the middle of the spin chain.}
\label{fig:ManyBodyV3}
\end{figure}

The Floquet time evolution operator $V_3$ has the same quasi-energy eigenvalue statistics as the CUE random matrix ensemble \cite{Regnault2016, Vasilyev2020}. As mentioned in Sec.~\ref{sec:overview} the Floquet models are known to thermalize to infinite temperatures as per RMT and thus we expect the eigenstate statistics to also be the same as in the corresponding RMT class. To show this,   we present in Fig.~\ref{fig:ManyBodyV3}(a) numerically obtained SFF and PSFF for a total system size of $N=6$ and subsystem sizes $N_A=3, 4$ and $5$ for the model $V_3$. We plot with gray lines the corresponding $K_A(t)$ in a CUE model where the analytic forms can be exactly calculated (see Sec.~\ref{sec:pSFF_RMT} and App.~\ref{app:pSFF-RMT}). For the PSFF $K_{A}(t)$ at $N_A=3$ and very early times, we notice that the onset of the ramp takes a few initial periods to set, but eventually the PSFF follows the CUE prediction. 

The closeness between the statistics of CUE and $V_3$ can further be seen from the average overlaps of reduced densities of eigenstates $\mathcal{P}_B$ and $\mathcal{Q}_B$.  In Fig.~\ref{fig:ManyBodyV3}(b) we present the average purity and  overlaps as functions of subsystem size $N_A$. At plateau time, $t> t_H(=D\tau)$ the PSFF becomes $K_A(t\rightarrow\infty)= \mathcal{P}_B/ \DA$, see Eq.~\eqref{eq:psff-PQ}. We plot numerically obtained $K_A(\infty)\DA$ in black circles, and the average purity $\mathcal{P}_B$ with red crosses, they confirm the analytic expectation. The average overlap $\mathcal{Q}_B$ and the difference $\mathcal{P}_B-\mathcal{Q}_B$ are plotted in brown and green circles respectively and match with the CUE data. 

To conclude, the SFF, PSFF, averaged purity and overlaps match in the CUE and $V_3$ model and thus we expect the form of the PSFF in Eq.\ \eqref{eq:pSFF_cue} to hold for the model $V_3$, after a small initial time period. We know from Eq.\ \eqref{eq:RMT-PandQ}, for large Hilbert space dimensions, that $\mathcal{Q}_B\approx 1/\DB$ and $\mathcal{P}_B-\mathcal{Q}_B\approx 1/\DA$. Therefore utilizing, Eq.\ \eqref{eq:PQ-numerics}, we find that  $\Delta\mathcal{P}_B=0$ and $\delta \mathcal{P}_B=O(1/\DA)$ for $V_3$ and the RMT models. The purity of the smooth part (of the form of $ \mathrm{Tr}[\Delta\rho_B^2(E)]$) appears  in the ramp part of the PSFF in Eq.\ \eqref{eq:pSFF_ETH} and thus we note that the ramp coefficient is $\sim 1/D^2$ for $D_A \gg 1$. On the other hand, the purity of the fluctuating part (of the form of $ \mathrm{Tr}[\delta\rho_B^2(E)]$) comes in the time-independent term added to the SFF in Eq.\ \eqref{eq:pSFF_ETH}, which is to the leading orders $~1/D_A^2$, as also in the CUE model [Eq.\ \eqref{eq:pSFF_cue}]. To further have another numerical example of the Floquet model thermalizing according to RMT, we present the example of a chaotic Floquet model with time-reversal symmetry in App.~\ref{app:coe-u2}.

\subsubsection{Example 2: Hamiltonian system}
As our second example, we consider a transverse field Ising model in presence of longitudinal local disorders,
\be
H=J\left(\sum_{\substack{i,j=1 \\ i<j}}^N \frac{1}{(i-j)^\alpha}\sigma^z_i \sigma^z_j+\sum_{i=1}^N \sigma^x_i\right) + W \sum_{i=1}^N h_i \sigma^z_i,
\label{eq:Hamiltonian}
\ee
where $h_i$ are drawn uniformly at random from $(-1,1)$. The coefficient $J$ and the exponent $\alpha$ denote the strength and  range of the interactions respectively. The disorder strength $W$ is known to specify the nature of the dynamics; $W\sim J$ depicts chaotic regime and $W \gg J$ corresponds to the localized regime (for a similar model see, \cite{Gharibyan2018}).  In the App.~\ref{app:IsingGapRatio}, we present the adjacent level gap ratio as a function of $W/J$ and $\alpha$ and find that the chaotic and localized phases exist for short ($\alpha>1$) as well as for long ($\alpha<1$) range interactions. In this work, we choose $\alpha=1.2$, and as examples of the chaotic and localized phases, we take  $W=J$  and $W=10J$ respectively. In contrast to the presence of the ramp and plateau in the SFF for chaotic models, the SFF for localized models stays flat for all times $t\gg 0$. In the numerics, we will find that the PSFF preserves this flat feature of the SFF, and has a  subsystem dependent shift added over the SFF, as predicted in Sec.~\ref{sec:pSFFlocalized}. In Fig.~\ref{fig:HamilSFFpSFF} and \ref{fig:manybodyHamil} we present numerical results for the Hamiltonian model \eqref{eq:Hamiltonian} in these two phases.  For clarity, we have used red color for the chaotic phase ($W=J$) and blue for the MBL phase ($W=10J$). We note that the Hamiltonian of Eq.~\eqref{eq:Hamiltonian} has the time-reversal symmetry of complex conjugation in the computational ($\sigma^z_i$) basis \cite{HaakeBook, avishai2002_GOE, brown2008_GOE}. A chaotic Hamiltonian with this symmetry is known to follow the eigenvalue statistics (or the SFF) of GOE after the Thouless time $t>t_{\rm Th}$ \cite{Mehta2004, DAlessio2016,HaakeBook, avishai2002_GOE, brown2008_GOE}, thus we have also put the results for GOE class in gray in Fig.~\ref{fig:HamilSFFpSFF}. 
\begin{figure}[!ht]
    \centering
\includegraphics[width=0.9\linewidth]{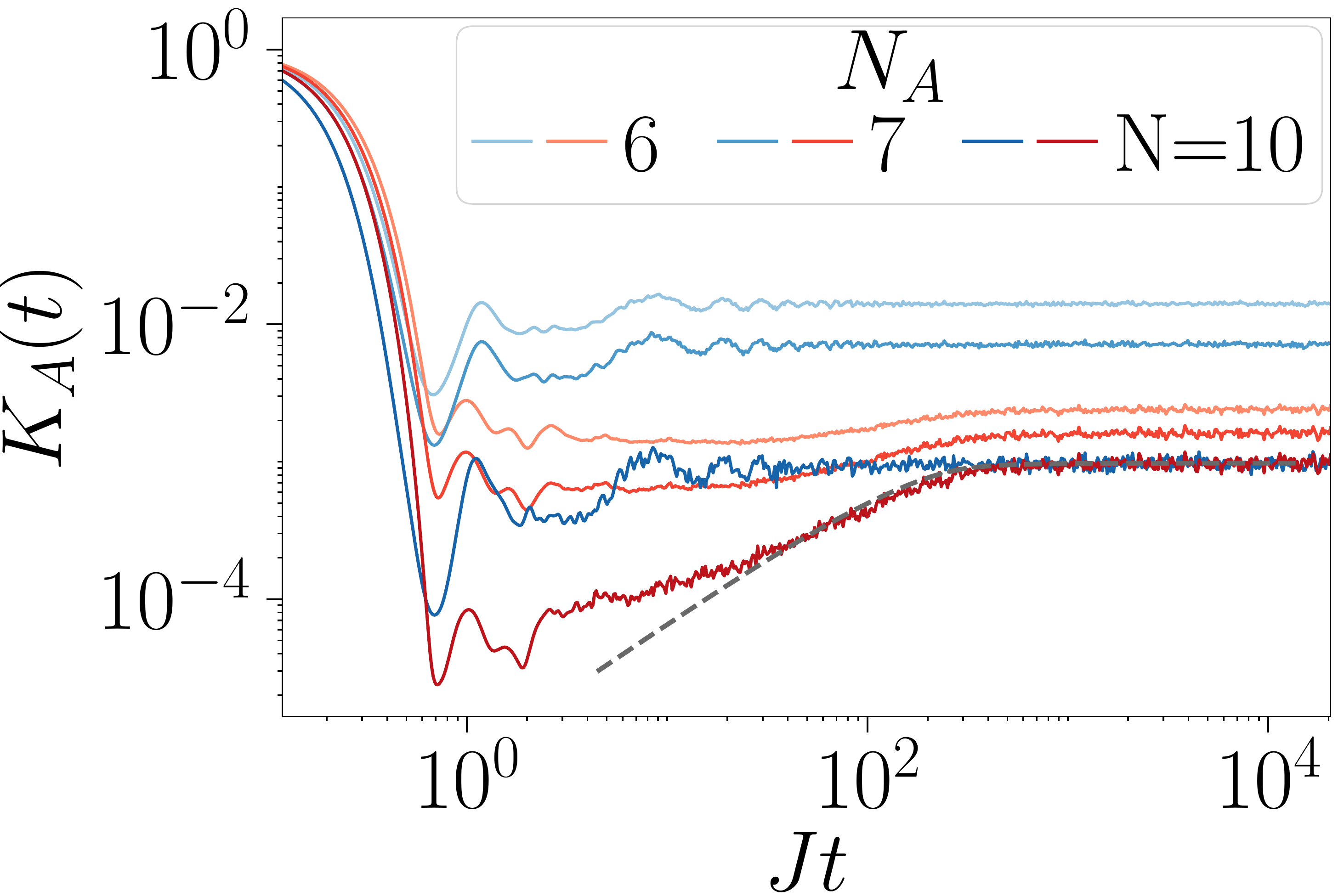}
\caption{\textit{Results for the Hamiltonian model.} In a log-log plot we present the chaotic phase ($W=J$) in red, MBL phase $(W=10J$) in blue, and the GOE in gray. In both phases the  SFF and PSFF are plotted for (sub-)system sizes  $N_A=6,~7$ and $N_A=N=10$.  The SFF for the chaotic phase has the characteristic ramp and plateau and  follows the GOE SFF at late times. The PSFF in this phase also has the shift, ramp and plateau, we plot these in a focused linear scale plot in Fig.~\ref{fig:manybodyHamil}(a). The MBL phase shows a flat SFF and PSFF for all times $t\gg 0$. The mean level spacing (i.e.\  the Heisenberg time) in the MBL phase and GOE are numerically rescaled to match to the one in the chaotic phase.}
\label{fig:HamilSFFpSFF}
\end{figure}
\begin{figure}[!ht]
    \centering
\includegraphics[width=\linewidth]{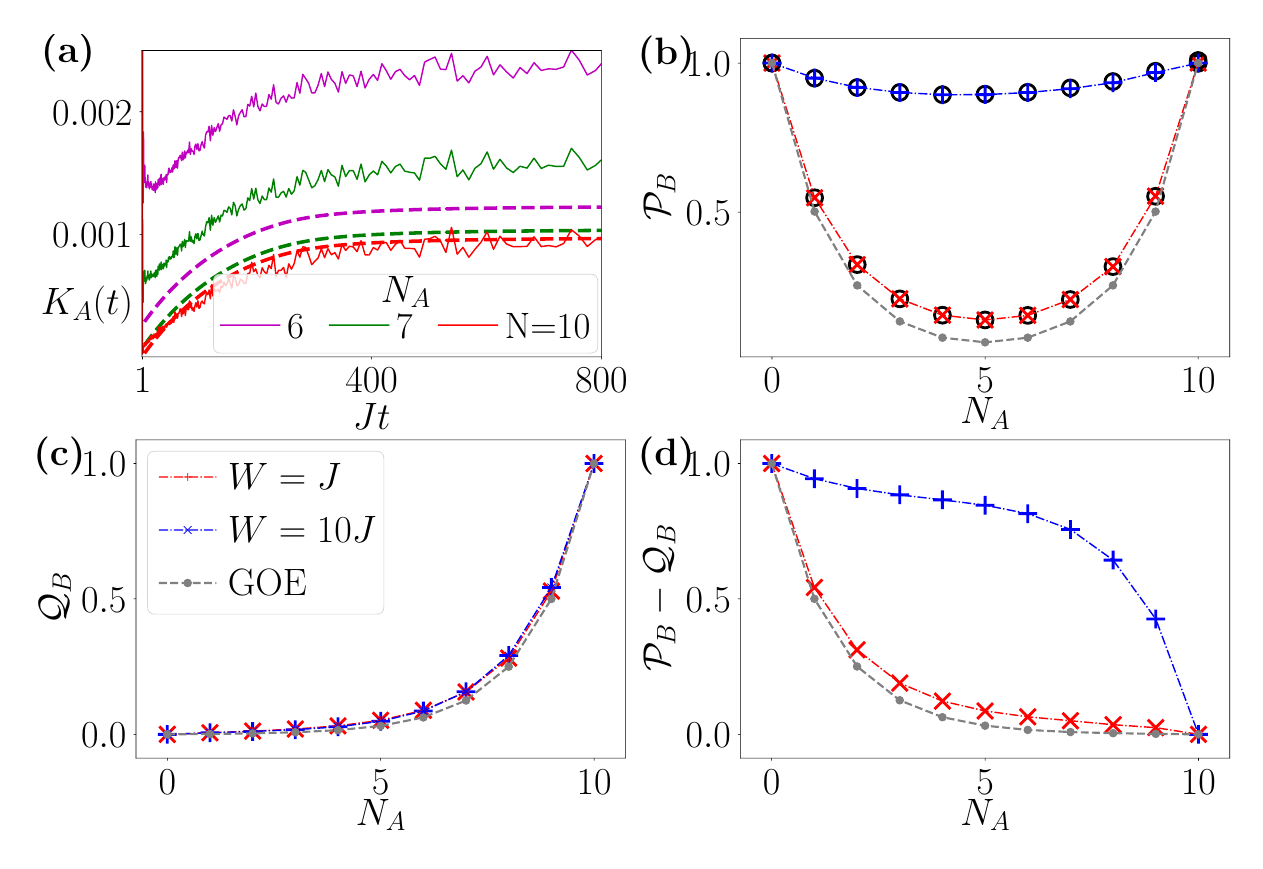}
\caption{\textit{Results for the Hamiltonian model.} (a) In linear scale we present the SFF and PSFF for the chaotic phase ($W=J$). (Sub-)system sizes $N_A=6$, $7$ and $N_A=N=10$ are plotted with magenta, green and red respectively for both the Hamiltonian model (with solid curves) and the GOE (with dashes). We observe  differences in the PSFF for chaotic Hamiltonian and GOE. These differences are investigated in (b), (c) and (d)  through $\mathcal{P}_B$ and $\mathcal{Q}_B$. We use red color for the  chaotic phase ($W=J$) and gray for the GOE. For comparison we have also plotted these quantities in the localized phase ($W=10J$) using blue color. (b) We plot  $K_A(\infty) D_A$ using  black circles which matches with the corresponding average purity $\mathcal{P}_B$ of the MBL and chaotic phase. (c) The average overlap $\mathcal{Q}_B$ for MBL, chaotic and GOE  follow closely the behavior $1/\DB$. (d) The difference $\mathcal{P}_B-\mathcal{Q}_B\approx\delta{\mathcal{P}_B}$, which encodes the shift of the PSFF, is larger for large disorders (MBL) compared to small disorders (chaotic). In the numerical computation, we have taken 200 Hamiltonians to perform ensemble averaging and the subsystems $A$ are chosen from the middle of the spin chain.
}
    \label{fig:manybodyHamil}
\end{figure}

 As a side remark, we emphasize at this point that the spectrum of the local Hamiltonian model, Eq.\ \eqref{eq:Hamiltonian}, does not have the same density of states as the GOE spectrum and thus the Hamiltonian SFF should be compared with an average of GOE SFFs, each with $t_H$ determined by different parts of the Hamiltonian spectrum. Often, this is circumvented by removing the non-universal effects arising from the edges of the local Hamiltonian spectrum by using a filter function such that only the middle part of the spectrum contributes \cite{Suntajs2019} or considering very large system sizes where the edge effects are effectively smaller. In our work, we focus on the measurement of chaotic features through the observation of the ramp, plateau and the shift which can already be observed without filtering for moderate system sizes,  which we focus on. 

In Fig.~\ref{fig:HamilSFFpSFF}, the SFF and PSFF are presented for the system size $N=10$ and subsystem sizes $N_A=6$ and $7$. In order to have the same Heisenberg time $t_H$, the eigenvalues are numerically rescaled such that the average mean level spacing for $W=10J$ match with the one for $W=J$. As a guide, we have plotted in gray the GOE SFF where the $t_H$ is determined from the full width of the chaotic Hamiltonian spectrum and observe that the SFF for the chaotic phase follows the  GOE SFF closely. The PSFF for the chaotic phase, shifted up compared to the SFF, also shows the ramp and plateau behavior which are seen better in a linear plot in Fig.~\ref{fig:manybodyHamil}(a). Here, focused to display chaotic features, we have used solid lines for the chaotic  Hamiltonian and dashes for the GOE. The different subsystem sizes are shown in different colors. We note that the PSFF for the chaotic local model and GOE are different (see the magenta and green curves). These differences arise due to the differences in eigenstate properties of the local Hamiltonian and GOE.

Further, to concretely discuss second-moments of eigenstates, in Fig.\  \ref{fig:manybodyHamil}(b)  we present the averaged purity  $\mathcal{P}_B$ using crossed markers. We have also plotted here the plateau values $K_A(\infty)\DA$ (in black circles) for both chaotic and MBL phases which agree with their respective purities following $K_A(t\rightarrow \infty)=\mathcal{P}_B/\DA$ [see Eq.~\eqref{eq:psff-PQ}]. Note that these average purities are consistent with a volume law of entanglement in the chaotic phase, and an area law in the localized phase \cite{Abanin2019}. For the remainder of this section, it is useful to discuss the two phases $W=J$ and $W=10J$ separately. 

For the chaotic phase $W=J$, the average overlaps $\mathcal{Q}_B$ and $\mathcal{P}_B-\mathcal{Q}_B$ are presented in red in the bottom panel of Fig.~\ref{fig:manybodyHamil} as functions of $N_A$. Assuming ETH for the chaotic systems, we have discussed orders of magnitude of these overlaps in Sec.~\ref{sec:pSFFchaotic}.  Utilizing Eq.\ \eqref{eq:PQ-numerics} we can  comment  on the orders of $\Delta \mathcal{P}_B$ and $\delta \mathcal{P}_B$ (see App.~\ref{app:IsingOrders} for more details on the numerical extraction of these orders).  From  $\mathcal {Q}_B$ [Fig.~\ref{fig:manybodyHamil}(c)], we find $\Delta \mathcal {P}_B=O(1/\DB)$ and from $\mathcal{P}_B-\mathcal{Q}_B=\delta\mathcal{P}_B$ [Fig.~\ref{fig:manybodyHamil}(d)], we find $\delta \mathcal{P}_B \sim O(1/\DA)$, confirming the ETH predictions for chaotic systems. We verify that the value of the shift of the PSFF in the linear ramp region is given in terms of the purity of the fluctuating part i.e., by $\delta \mathcal {P}_B/D_A$ in App.~\ref{app:IsingShift}.  For comparison,
we have plotted the same quantities in a GOE model in gray. We note a difference between the overlaps (properties  of the eigenstates) in the local chaotic Hamiltonian and GOE, which is not surprising because the statistics of eigenstates need not be the same in the two models.

Next, we look at the orders of magnitude of the overlaps in the phase $W=10J$, plotted in blue in the bottom panel of Fig.~\ref{fig:manybodyHamil}. Following Eq.\ \eqref{eq:PQ-numerics} from the $\mathcal {Q}_B$ [Fig.~\ref{fig:manybodyHamil}(c)], we find $\Delta \mathcal {P}_B=O(1/\DB)$ and from $\mathcal{P}_B-\mathcal{Q}_B=\delta\mathcal{P}_B$ [Fig.~\ref{fig:manybodyHamil}(d)], we find $\delta \mathcal{P}_B \sim O(1) \gg O(1/\DA)$.
The localized phase is not expected to satisfy ETH, and as discussed in the Sec.~\ref{sec:pSFFlocalized}, we expect such large shift in the PSFF in MBL systems.  Due to larger $\delta\mathcal{P}_B$ in the MBL phase, we notice a larger overall shift of the PSFF in the MBL phase, shown in blue in Fig.~\ref{fig:manybodyHamil}(b)-(d). 

\section{Proof of the protocol}
\label{sec:proof_mt}
In Sec.~\ref{sec:protocol}, we presented our measurement protocol and  defined estimators for the SFF and PSFF [Eqs.~\eqref{eq:sffmeas} and \eqref{eq:psffmeas}] in terms of the measured bitstrings. In this section, we prove analytically that these are unbiased estimators of the SFF and PSFF utilizing the theory of unitary $2$-designs.

\subsection{Useful results from unitary $2$-designs}

Unitary $n-$designs are ensembles of random unitary matrices, whose averages of polynomial moments of order up to $n$ coincide with ones of the Haar measure (or equivalently the CUE)~\cite{Dankert2009}.
With the help of Weingarten calculus, these moments can be expressed analytically~\cite{Collins2006}, allowing us to relate the statistics of randomized measurements to the quantity that we would like to measure.
Since the measured bitstrings from the protocol are sampled from the Born probabilities $|\bra{\mathbf{s}}U^\dag T(t) U \ket{\mathbf{0}}|^2$ which are polynomial functions of order two in $U$, we restrict ourselves to Weingarten calculus of order two. Using independent local unitaries $U=\bigotimes_i u_i$, one  finds for any operator $C$ defined on the `two-copy' Hilbert space $\mathcal{H}^{\otimes 2}$~\cite{Elben2019}
\begin{equation}
    \mathbb E_{U}\left[ 
  (U\otimes U ) \,  C \, (U^\dag \otimes U^\dag ) 
\right]
= \sum_{\sigma,\tau } w_{\sigma,\tau} \tr[]{\sigma  C}  \tau . \label{eq:twirling}
\end{equation}
Here, $\mathbb E_{U}$ denotes the average over local unitaries of the form $U=\bigotimes_i u_i$ with $u_i$ sampled for each $i$ independently from a unitary $2$-design on the local Hilbert space $\mathbb{C}^{\otimes 2}$. Further, the sum extends to all two-copy permutation operators $\sigma =\bigotimes_i \sigma_i$ and $\tau =\bigotimes_i \tau_i$ with $\sigma_i,\tau_i=\mathbb{1}_i, \mathbb{S}_i $. Here, the identity $\mathbb{1}_i$ and the \textit{swap operator $\mathbb{S}_i$}  act as  $\mathbb{1}_i \ket{s_i}\otimes \ket{s'_i}=\ket{s_i}\otimes \ket{s'_i}$ and $\mathbb{S}_i \ket{s_i}\otimes \ket{s'_i}=\ket{s'_i}\otimes \ket{s_i}$ on local basis states  $\ket{s_i}$ and $\ket{s'_i}$.
Finally, the coefficient $w_{\sigma,\tau}=\prod_i \textup{Wg}^{U(2)}(\sigma_i\tau_i^{-1})$ is determined by the Weingarten function $\textup{Wg}^{U(2)}$, with
$\textup{Wg}^{U(2)}(\mathbb{1}_i)=1/3$ and $\textup{Wg}^{U(2)}(\mathbb{S}_i)=-1/6$.
The expression above, which is valid for any  operator $C$, is the mathematical backbone of randomized measurements. In randomized measurement protocols, the goal is then to identify an operator $C$, whose expectation value can be inferred from the experimental data, such that the right hand side of the above equation reveals the quantity of interest.

In order to reconstruct the SFF, it will turn out to be particularly useful  to choose  $C=O\otimes \rho_0$ with $\rho_0= \ketbra{\mathbf{0}}{\mathbf{0}}$ and 
\begin{align}
    O&=(\ketbra{0}{0}-\frac{1}{2}\ketbra{1}{1})^{\otimes N}
= \sum_{\mathbf{s}}(-2)^{|\mathbf{s}|} \ket{\mathbf{s}}\bra{\mathbf{s}}
\label{eq:oswap}
\end{align}
where the sum extends to all bitstrings $\mathbf{s}=(s_1, \dots, s_N)$ with $s_i\in \{0,1\}$, and $|\mathbf{s}|\equiv \sum_i s_i$.
For this choice, we obtain
\begin{equation}
    \mathbb E_{U}\left[ 
  (U \otimes U )\, (O\otimes \rho_0) \, (U^\dag\otimes U^\dag) 
\right]
= 4^{-N} \mathbb{S} \label{eq:twirlingswap}
\end{equation}
with $\mathbb{S}=\bigotimes_i \mathbb{S}_i=\sum_{\mathbf{s},\mathbf{s'}}\ket{\mathbf{s'}}\bra{\mathbf{s}}\otimes \ket{\mathbf{s}} \bra{\mathbf{s'}}$. The Swap operation $\mathbb{S}$ is the key operation to extract non-trivial quantities, such as the purity, in randomized measurements~\cite{Elben2019}. Here, to access the SFF, it is convenient to take the partial transpose operation $A\otimes B\to A^T \otimes B$ in the above equation, leading to 
\begin{equation}
    \mathbb E_{U}\left[ 
  (U^* \otimes U ) (O^T \otimes \rho_0) (U^T\otimes U^\dag) 
\right]
= 2^{-N} \ket{\Phi_N^+}\bra{\Phi_N^+} \label{eq:twirlingBell},
\end{equation}
where $\ket{\Phi_N^+}=\bigotimes _i \ket{\Phi_i^+}=2^{-N/2}\sum_s \ket{\mathbf{s}}\otimes \ket{\mathbf{s}}$ is a product of Bell pairs $\ket{\Phi_i^+}=2^{-1/2}(\ket{0}\otimes \ket{0}+\ket{1}\otimes \ket{1})$.

\subsection{Rewriting the SFF in a form suitable for randomized measurements}

 For clarity, we focus on the measurement of the full SFF $K(t)$, and present the  case of the  PSFF in App.~\ref{app:MeasProt}.  We first define for a fixed time-evolution operator $T(t)$
 \begin{align}
     K_{T(t)} \equiv  4^{-N} \tr{T(t)}\tr{T^\dagger(t)}
 \end{align}
 such that the ensemble (disorder) average $K(t)=\overline{K_{T(t)}}$ yields the SFF, according to the definition Eq.~\eqref{eq:sff}.
 Secondly, we show that $K_{T(t)}$ equals the survival probability of the Bell State $\ket{\Phi_N^+}$ under the dynamics generated by $\mathbb{1}\otimes T(t)$, i.e.\ 
  \begin{align}
K_{T(t)}&=\braket{\Phi_N^+|\mathbb{1} \otimes {T}(t)|\Phi_N^+}\braket{\Phi_N^+|\mathbb{1} \otimes {T^\dagger }(t)|\Phi_N^+}.
\label{eq:survprob}
\end{align}
To this end, we use the following identity for any two operators $A,B$ on $\mathcal{H}$
 \begin{align}
     \tr{A B} = 2^N\braket{\Phi_N^+| A^T \otimes B | \Phi_N^+},
     \label{eq:bellident}
 \end{align}
 which can be proven by inserting the definition of the Bell state  $\ket{\Phi_N^+}=2^{-N/2}\sum_s \ket{\mathbf{s}}\otimes \ket{\mathbf{s}}$ in terms of computational basis states. Eq.~\eqref{eq:survprob} follows directly by choosing $A=\mathbb{1}$ and $B=T(t)$. We note that the identity Eq.~\eqref{eq:survprob} has been discussed in the context of holographic duality \cite{Sonner-survival}. 
In this case generalized finite temperature form factors can be written in terms of thermofield double-states, which take the form of Bell states in the limit of  infinite temperature.
 With the help of  Eq.~\eqref{eq:twirlingBell},  we can now replace one Bell state projector in Eq.~\eqref{eq:survprob} with $O\otimes \rho_0$ averaged over random unitaries $U$. We find   \mbox{$K_{T(t)}= \mathbb E_{U}
    \left[ K_{T(t),U} \right]$ with  $K_{T(t),U}$} defined as
\begin{align}
 K_{T(t),U}&\equiv  2^N  \bra{\Phi_N^+} U^* O^T U^T \otimes {T}(t) U \rho_0 U^\dag{T^\dagger}(t)
   \ket{\Phi_N^+}. \end{align}
   Using once more the identity~\eqref{eq:bellident}, it follows that $ K_{T(t),U}$
   equals the expectation values of the operator $O$ in the final state $\rho_f(t)$
   \begin{align}
         K_{T(t),U}&=\tr{ O  \, \right. \underbrace{ U^\dag T(t) U \rho_0 U^\dag T^\dag(t) U}_{\rho_f(t)}\left. } \nonumber \\
        &=\sum_{\mathbf{s}} (-2)^{|\mathbf{s}|} |\bra{\mathbf{s}} U^\dag T(t) U \ket{\mathbf{0}}|^2.
   \end{align}
   Here, $|\bra{\mathbf{s}} U^\dag T(t) U \ket{\mathbf{0}}|^2$ is precisely the Born probability of finding a bitstring $\mathbf{s}$, in the computational basis measurement performed at the end of our measurement sequence when the state $\rho_f(t)$ has been prepared [c.f.~Sec.~\ref{sec:protocol}]. 
   It follows thus that 
    \begin{eqnarray}
     K_{T(t),U}
    &=& \mathbb E_{QM}\left[ (-2)^{|\mathbf{s}|}\right],
 \end{eqnarray}
  where $\mathbb E_{QM}$ is the quantum mechanical average and ${\mathbf{s}}$ denotes the outcome of the  computational basis measurement at the end of the measurement sequence.

  In summary, it follows that for each measured bitstring ${\mathbf{s}}$, $(-2)^{|{\mathbf{s}}|}$ provides an estimation of the SFF, which in expectation over ensemble (disorder) average, over random unitaries and quantum mechanical averaging, yields the SFF
  \begin{align}
       K(t) = \overline{\mathbb E_{U} \mathbb E_{QM} \left[ (-2)^{|{\mathbf{s}}|}\right]}.
       \label{eq:estimator_sff}
   \end{align}
    In practice, we repeat our measurement protocol by performing $M$ independent experimental runs (with independently sampled time evolution operators and random unitaries), and calculate the  empirical average  $\widehat{K(t)}$ [Eq.~\eqref{eq:sffmeas}]. Using Eq.~\eqref{eq:estimator_sff}, it follows that $\widehat{K(t)}$ converges to $K(t)$ in the limit $M\to \infty$. For finite $M$,  statistical errors are governed by the variance of $\widehat{K(t)}$, and are discussed in the next section.
    In the App.~\ref{app:MeasProt}, we extend our derivation to the case of the PSFF, and illustrate the mapping between randomized measurements and the (P)SFF graphically.

\section{Statistical errors and imperfections}
\label{sec:errors}

We have discussed characteristic features of the SFF and PSFF, such as shift, ramp and plateau. The crucial question arises whether these can be measured in today's quantum simulators, utilizing our protocol (Sec.~\ref{sec:protocol}) with a finite measurement budget (number of experimental runs $M$) and in the presence of unavoidable experimental imperfections. In the following, we  first analyze in detail  statistical errors which arise from a finite number of experimental runs $M$. These determine the signal-to-noise ratio for a measurement of the shift of the PSFF (extracted from measurements at a single point in time) and the slope of the SFF and PSFF (extracted from differences of measurements at various points in time).
Subsequently, we discuss the influence of experimental imperfections, such as imperfect implementation of our measurement protocol or decoherence during the time evolution.

\subsection{Statistical errors}
We discuss statistical errors arising from a finite number of experimental runs $M$. We first consider the estimation of the SFF and PSFF at single point in time, and secondly the estimation of (the slope of) the ramp from measurements of the SFF and PSFF at different times.

\subsubsection{Observing PSFF and SFF}

We can bound the statistical errors of the estimator $\widehat{K_A(t)}$ [Eq.~\eqref{eq:psffmeas}]  by its variance. As shown in App.~\ref{app:staterrors}, we find that, 
\begin{align}
\text{Var}[\widehat{K_A}] &=\frac{1}{M} \left( 2^{-N_A} \sum_{B\subseteq A} K_B - K_A^2\right) \equiv \frac{\sigma_A^2}{M}~,
\label{eq:var}
\end{align}
where we have dropped the time argument for brevity. Here, $K_B$ denotes the PSFF defined in the subsystem $B$ and the sum extends over all subsystems $B\subseteq A$. The variance of $\widehat{K(t)}$ [Eq.~\eqref{eq:sffmeas}] follows by taking $A$ to be the full system.
We obtain an expected relative error 
$\mathcal E_A=\sigma_A/( K_A\sqrt{M})$ 
of an estimation $\widehat{K_A(t)}$ with $M$ experimental runs. 
As it can be rigorously shown via Chebyshev's inequality, the required number of measurements  to obtain with high probability an estimate of $K_A(t)$ with fixed relative error scales as $M\sim \sigma_A^2/K_A^2$.
\begin{figure}[!ht]
\includegraphics[width=\linewidth]{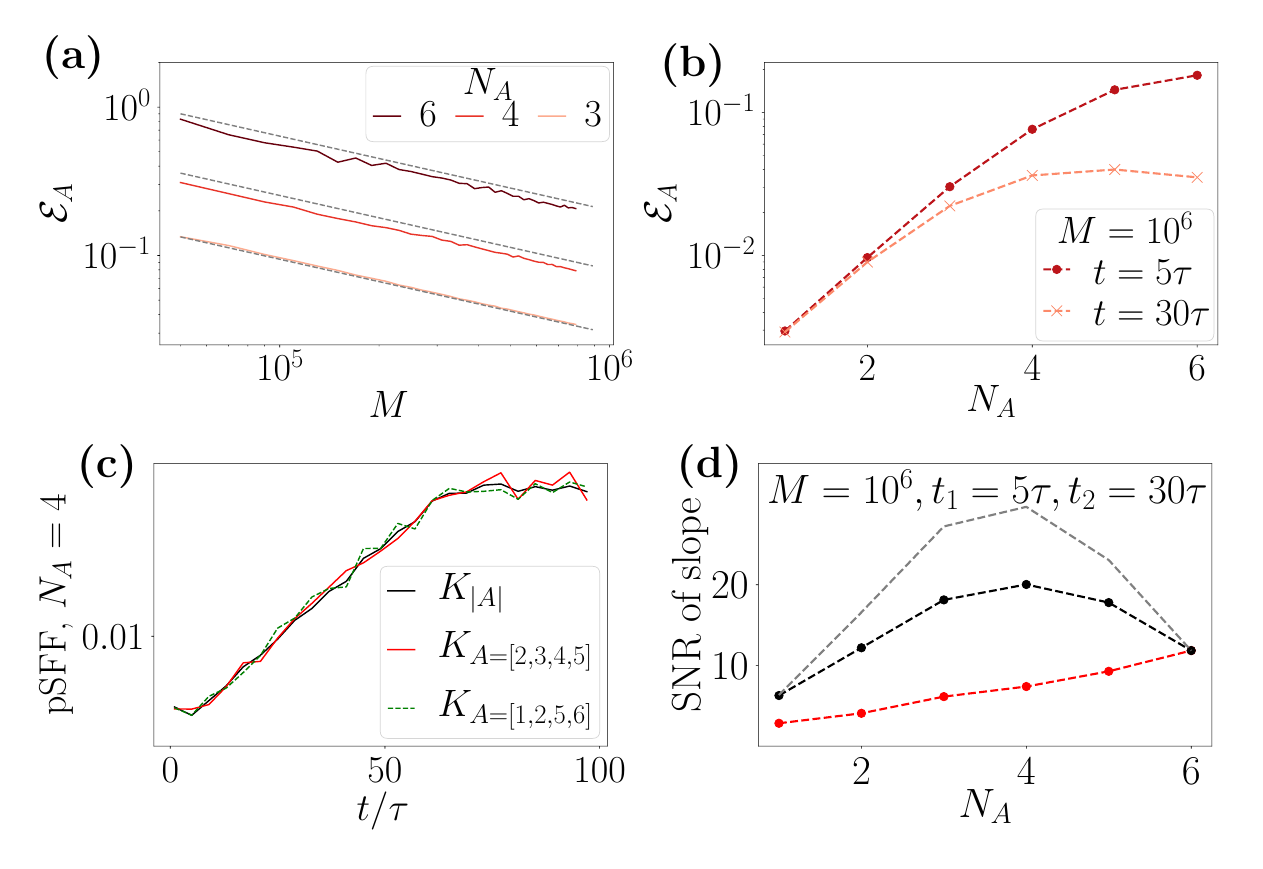}
\caption{\textit{Statistical errors in the Floquet model $V_3$.} (a) The relative error, $\mathcal{E}_A=\sigma_A/(K_A \sqrt M)$ is plotted for total system size $N=6$ and subsystem sizes $N_A=3,4, $ and $6$ as a function of number of measurements $M$ at time $t/\tau=5$. For a fixed $M$ we perform 100 numerical experiments each with $M$ single shots and present the average $\mathcal{E}_A$ using colored lines. The gray lines represent corresponding errors in a model with CUE dynamics (calculated analytically in App.~\ref{app:staterrors}). (b) The relative error $\mathcal{E}_A$ as a function of subsystem size $N_A$ at two times $t/\tau=5$ and $t/\tau=30$ is shown.
(c) Single PSFF with subsystem size $N_A=4$ for two choices $A=[2,3,4,5]$ (red), $A=[1,2,5,6]$ (green dashed), and the average PSFF $K_{|A|}$ (black) follow each other;  the numbers in the $[\cdots]$ denote qubit index.  (d) For the observation of the ramp we plot the SNR of the slope, SNR$[c_A(t_2, t_1)]$ (in red) and SNR[$c_{|A|}(t_2, t_1)$] (in black). Both  SNRs  are constructed from a single data set of $M=10^6$. As a guide to the eye, we also present in gray the SNR[$c_{|A|}(t_2, t_1)$] when all the measurements in the averaged PSFF are done independently, i.e.\ when $\sigma_{|A|}=\sigma_A \binom{N}{N_A}^{-1/2}$. This would require $M \binom{N}{N_A}$ number of independent measurements.}
\label{fig:staterror}
\end{figure}

The expected statistical error $\mathcal E_A$, and hence also the number of required experimental runs,  depends thus on the value of $K_A$ itself, as well as on the PSFF $K_B$ of all subsystems $B\subseteq A$. For Hamiltonians (Floquet-) operators from Wigner-Dyson RMT, we can explicitly evaluate $\sigma_A$ (see App.~\ref{app:staterrors}).  As the worst-case estimate, we find that at the point of weakest signal, after a single time step $t=\tau$ in Floquet dynamics  $T(t=n\tau)=V^n$ with $V$ sampled from CUE where $K_A(t=1\tau)=2^{-2N_A}$, the expected relative statistical error is given by  
$\mathcal{E}_A= \sqrt{10^{N_A}/M}$.  A  total number of measurements $M\sim 10^{N_A}/\epsilon^2 \approx 2^{3.32N_A} /\epsilon^2$ is thus required to obtain a fixed relative error $\epsilon$.
This is to be contrasted with the number of measurements required for quantum process tomography, which requires, without strong assumptions on the process of interest \cite{Torlai2020}, at least $r2^{5 N_A}/\epsilon^2$ measurements, with $r=r(N_A)\geq 1$ being the Kraus rank of the process \cite{Kliesch2019}.
In addition, we can reduce  the exponents associated with the scaling of statistical errors in randomized measurement protocols further using importance sampling~\cite{Rath2021,Hadfield2020,Huang2021,Hillmich2021}.

In Fig.~\ref{fig:staterror}(a) we plot the relative error $\mathcal{E}_A$ as a function of the number of experimental runs $M$ in the $V_3$ model \eqref{eq:Floquet-illustration-V3} at time $t/\tau=5$ with total qubits $N=6$.  The  relative error decays as $\sim 1/\sqrt{M}$ with increasing $M$, as expected from the central limit theorem. Furthermore, it decreases with decreasing subsystem size. This is also  shown in Fig.~\ref{fig:staterror}(b) where we display, for a fixed $M$, the relative errors as a function of subsystem size $N_A$  at two different times $t/\tau=5$ and $t/\tau=30$. As expected, we observe that the relative error is largest at early times where the PSFF is smallest. At early times, the relative error increases with the subsystem size, thereby requiring more  measurements as $N_A\rightarrow N$.

\subsubsection{Observing the ramp in chaotic models}

The relative error $\mathcal E_A=\sigma_A/( K_A\sqrt{M})$ determines the required number of measurements to estimate the PSFF at a {single} point in time. While this reveals important information on the overall magnitude and in particular the `shift' of the PSFF, signatures of energy level repulsion are encoded in the ramp of the SFF and PSFF (see Sec.~\ref{sec:psffAnalytical}). To detect the ramp, we aim thus to measure the difference $K_A(t_2)- K_A(t_1)$  at two points in time $t_2>t_1$, in particular, the slope of $K_A$,
\begin{align}
    c_A(t_2,t_1)=\frac{K_A(t_2)- K_A(t_1)}{t_2-t_1} .
\end{align}
To quantify the experimental effort to resolve $c_A(t_2,t_1)$, we introduce its signal-to-noise ratio $\text{SNR}[c_A(t_2,t_1)]$, which, for independent measurements of the PSFF at times $t_2$ and $t_1$, is given by
\begin{align}
    \text{SNR}[c_A(t_2,t_1)]=\sqrt{M}\frac{K_A(t_2)- K_A(t_1)}{{\sigma_{A}(t_2)+ \sigma_{A}(t_1)} }. 
\end{align}
As shown in Secs.~\ref{sec:psffAnalytical} and \ref{sec:psffNumerical}, the slope $c_A(t_2,t_1)$ of the PSFF (i.e.\ the signal), is approximately constant as a function of the subsystem size $N_A\gtrsim N/2$. At the same time, the absolute value of the noise,   here $(\sigma_{A}(t_2)+ \sigma_{A}(t_1))/\sqrt{M}$, decreases with increasing $N_A$ (as the absolute value of the PSFF decreases). Thus, as shown in Fig.~\ref{fig:staterror}(d) (red curve) for the $V_3$ model, $\text{SNR}[c_A(t_2,t_1)]$ typically increases with increasing subsystem size $N_A$,  reaching a maximum when the subsystem is the system itself i.e., $N_A=N\,(=6$ in the example here). 

In chaotic quantum systems, our protocol enables detection of the ramp with further improved SNR: First, we note that the order of magnitude of different features of the PSFF does not depend on the actual choice of the subsystem $A$, but only on its size $|A|=N_A$. Hence, as numerically shown in Fig.~\ref{fig:staterror}(c), we can replace the PSFF $K_A$ of a specific subsystem $A$ with its average
\begin{align}
    K_{|A|}(t) = \binom{N}{N_A}^{-1}\sum_{ |A|=N_A} K_{A}(t)~,
\end{align}
where we sum over all  subsystems $A$ of fixed size $N_A$ (including disconnected subsystems).

Second, we note that from a single experimental data set, taken on the full system $\mathcal{S}$, we can estimate $K_A(t)$ for all subsystems $A\subseteq \mathcal{S}$, via spatial restriction in the post-processing. Thus, we can also obtain the average PSFF $K_{|A|}(t)$  and its slope $c_{|A|}(t_2,t_1)$. Since for $N_A<N$, there are multiple subsystems $A$ of size $N_A$, we can expect an increased $\text{SNR}[c_{|A|}(t_2,t_1)]$ for these average quantities. 

In Fig.~\ref{fig:staterror}(d), we display the numerically determined signal-to-noise-ratio  $\text{SNR}[c_{|A|}(t_2,t_1)]$, for the averaged PSFF in black. Indeed, compared to the SNR for a single subsystem $A$, SNR[$c_A(t_2, t_1)$] in red, we observe an enhanced $\text{SNR}[c_{|A|}(t_2,t_1)]$ for subsystem sizes $1<N_A<N$. We remark that we do not reach an  enhancement $\binom{N}{N_A}^{1/2}$ of the SNR which would result trivially from $\binom{N}{N_A}$ separate experiments (i.e.\ $\binom{N}{N_A} \cdot M$ experimental runs in total, gray line) since  the estimations $\widehat{K_A(t)}$ for various subsystems $A$ from a single data set are not independent. Nevertheless, Fig.~\ref{fig:staterror}(d) shows that the average PSFF $K_{|A|}$, extracted  at a subsystem size $N_A\approx N/2$ has the largest SNR for determining the slope of the ramp from a given measurement dataset. Thus, as compared to the PSFFs $K_A(t)$ for fixed subsystems $A$ or the full SFF $K(t)$, the average PSFF $K_{|A|(t)}$ at half system size provides a favorable tool to observe the ramp of the (P)SFF, i.e.\ signatures of level repulsion in chaotic quantum many systems.

\subsection{Experimental imperfections}

\begin{figure}[ht]
\includegraphics[width=\linewidth]{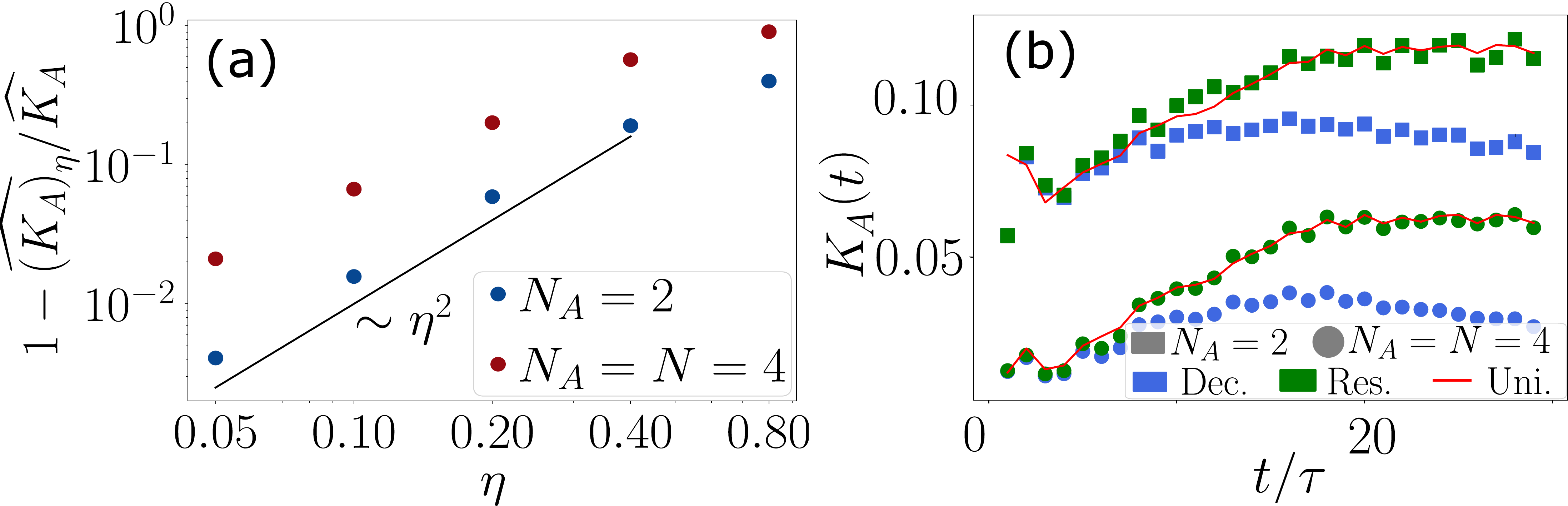}
\caption{\textit{Experimental imperfections and decoherence.} We study effects of measurement errors and decoherence on the estimated SFF and PSFF $K_A(t)$ using the example of the kicked spin  $V_3$  with total system size $N=4$.  In (a), we display the relative error  $\widehat{\epsilon_\eta}=( \widehat{K_A}-\widehat{(K_A)}_\eta)/\widehat{K_A}$  of the estimated form factors induced by a decorrelation of local random unitaries applied before and after the time evolution up to the Heisenberg time $t_H$, with strength $\eta$ (see text). In (b), we display the estimated SFF $(K)_{\text{dec}}$ (blue dots) and PSFF $(K_A)_{\text{dec}}$ (blue squares)  as function of time in a system subject to global polarization
with strength $p=0.03$ (see text). For this type of decoherence,  rescaling according to Eq.~\eqref{eq:rescale}, allows to recover the  SFF (green dots) and PSFF (green squares) for unitary dynamics (red line).}
\label{fig:imperfections}
\end{figure}

First, we consider an imperfect implementation of our measurement protocols, with errors arising from an erroneous  decorrelation of the applied initial and final local random unitaries. We model such imperfection as the effective application of a unitary $u_i$ before and a unitary $v_i=u^\dagger_i \exp(-i \eta h_i) $ after the time evolution, with $h_i$ being a local random Hermitian  matrix  sampled for each $i$ independently from the GUE \cite{HaakeBook}. While the case $\eta=0$ corresponds to the ideal case, we display in Fig.~\ref{fig:imperfections}(a) the average relative error $\widehat{\epsilon_\eta}=1-\widehat{(K_A(t))_\eta}/\widehat{K_A(t)}$ of the estimated $\widehat{(K_A(t))_\eta}$ as a function of the error strength $\eta$, obtained numerically from simulating many experimental runs. We find  $\widehat{\epsilon_\eta}$ increases approximately as $\eta^2$, indicating a decrease  of the estimated $\widehat{(K_A(t))_\eta}$.

Secondly, we consider that a measurement of the SFF and PSFF is affected by decoherence acting during the dynamical evolution of the system.
As shown in the context of other randomized measurement protocols, one can correct the effect of depolarization errors (or readout errors) based on a randomized measurement of the purity~\cite{VanEnk2012,Vermersch2018, Elben2018, Brydges2019}, which allows to extract the value of the noise strength~\cite{Vermersch2018,vovrosh2021simple}. 
Note that if the type of noise is a priori unknown, one can also mitigate errors with randomized measurements.
This is done via a calibration step that allows to convert  randomized measurements into 
 faithful `classical shadows' estimations of the quantum state~\cite{chen2020robust, berg2021modelfree, hillmich2021decision}.

Here for concreteness, we consider a Floquet system with  global depolarization, acting at each time period $\tau$ with strength $p$, i.e.\ the  final state $\rho_f(t)$ at time $t=\tau n$, defined in Sec.~\ref{sec:protocol}, is altered to \mbox{$\rho_{\text{dec}}(t)=\alpha_n \rho_f(t) + (1-\alpha_n)\,  \mathbb{1}/D$} with $\alpha_n=(1-p)^n$.

Thus, we obtain via our measurement protocol,
\begin{align}
  (K_A)_{\text{dec}}(t)=  \alpha_n K_A(t) +  \frac{1-\alpha_n}{D_A^2}\; .
  \label{eq:dec-pSFF}
\end{align}
With increasing time $t=\tau n$, decoherence leads thus to a smaller measured value $(K_A)_{\text{dec}}(t)$ than the actual spectral form factor $ K_A(t)$ (see Fig.~\ref{fig:imperfections}(b), blue dots and squares). 
However, if we know the value of $p$, we can rescale our estimator of the SFF. For this purpose, we can measure the purity  of the time evolved state. The purity is,
\begin{eqnarray}
P_n &=& {\rm{Tr}}\left[\rho_{\rm{dec}}(t)^2\right]= \alpha_n^2+\frac{1-\alpha_n^2}{D}~,
\end{eqnarray}
which gives~\cite{Vermersch2018,vovrosh2021simple},
\begin{equation}
    \alpha_n = \sqrt{\frac{D P_n-1}{D-1}}~.
    \label{eq:alpha}
\end{equation}
Thus, from a measurement of the purity $P_n$ at all times, we can find $\alpha_n$ and  rescale the erroneous PSFF \eqref{eq:dec-pSFF} to obtain,
\be
(K_A)_{\rm{res}}(t)=\frac{(K_A)_{\rm{dec}}(t)-(1-\alpha_n)/D_A^2}{\alpha_n}~.
\label{eq:rescale}
\ee
In Fig.~\ref{fig:imperfections}(b), using the green color we present this rescaled SFF (using dots) and PSFF (using squares). We note that using the  rescaled (P)SFF \eqref{eq:rescale} we recover here the  (P)SFF of the unitary dynamics (red curve). 

In summary,  while we have shown in this subsection that we can  partially  correct for decoherence effects via independent measurements of decoherence parameters, we emphasize that imperfections and decoherence discussed in this section lead  to a decay of the estimated $\widehat{K_A(t)}$. They, thus can not cause a false positive detection of the ramp.

\section{Conclusion and outlook} 
\label{sec:conclusion}

In this work, we have presented randomized measurement protocols to access the statistics of energy eigenvalues and energy eigenstates of many-body quantum systems in present day quantum simulators via (partial) spectral form factors. The spectral form factor (SFF), $K(t)$ in Eq.~\eqref{eq:sff}, is known to be a key diagnostic of many-body quantum chaos. In chaotic systems, it reveals universal properties of energy eigenvalue statistics and possesses a characteristic ramp-plateau structure (see Sec.~\ref{sec:sff-example}). In addition, we have defined partial spectral form factors (PSFFs), $K_A(t)$ in  Eq.~\eqref{eq:psff}, which contain both the statistics of energy eigenvalues and eigenstates (see Sec.~\ref{sec:psff-example}). PSFFs are natural restrictions of the SFF to subsystems $A\subseteq \mathcal{S}$ of the full system $\mathcal{S}$, such that for  $A=\mathcal  S$, PSFF and SFF coincide $K_{A=\mathcal S}(t)=K(t)$. Utilizing random matrix theory and the eigenstate thermalization hypothesis (ETH), we have shown in Sec.~\ref{sec:psffAnalytical}, that PSFFs in generic chaotic quantum many-body systems possess a characteristic shift-ramp-plateau structure [Eqs.~\eqref{eq:psffrmt} and \eqref{eq:pSFF_ETH}] and reveal crucial differences between  thermal and non-thermal eigenstates in the sense of ETH. In Sec.~\ref{sec:psffNumerical} we investigated the PSFF numerically  with examples of many-body quantum models,  discussing, in particular, differences between chaotic and localized phases.

With our protocol to measure the SFF and PSFF in quantum simulation experiments, we have  extended the toolbox of randomized measurements to access genuine properties of  dynamical quantum evolution, without any reference to the initial state or  measured observable (see Secs.~\ref{sec:protocol}, \ref{sec:proof_mt} and \ref{sec:errors}). We have shown that our protocol gives simultaneous access to the SFF and PSFF, thereby providing a unified testbed of the statistical properties of eigenvalues and eigenstates. Our protocol can be directly implemented in state-of-the-art quantum devices, based for instance on  trapped ions \cite{Blatt2012,Monroe2021}, Rydberg atoms \cite{Browaeys2020} and superconducting qubits \cite{Kjaergaard2020,Mi2021}, providing crucial experimental tools for the quantum simulation  of many-body quantum chaos and the study of thermalization in closed quantum systems. 

Our work can be generalized in various directions. First, while we have concentrated here on quantum simulators with local control realizing lattice spin models, our protocol can be also realized in collective spin systems with only global operations \cite{Sieberer_2019}. Second, while we have considered  form factors which are second-order functionals of the time evolution operators $T(t)$, partial restrictions of higher-order form factors provide possibilities to investigate thermalization of quantum many-body systems and emergent randomness beyond second-order \cite{choi2021emergent,cotler2021emergent}.  To access such higher-order (partial) form factors,  our randomized measurement  protocols could be readily combined with the classical shadows framework \cite{Huang2020}.
Thirdly, we have focused on determining the properties of unitary quantum dynamics. Beyond that, our measurement protocol readily extends to the study  of noisy quantum channels. This includes applications in the field of verification and benchmarking of quantum devices \cite{Emerson2005,Emerson2007,Knill2008,Magesan2012,erhard2019characterizing,eisert2020quantum,Carrasco_2021}, as well as the investigation of noise-induced quantum many-body phenomena such as entanglement phase transitions \cite{Li2018,Skinner2019,Chan2019,Vitale2021}. In addition to the directions listed above, it will be interesting to explore the PSFF from an analytical perspective analogous to Ref.~\cite{GarrattChalker} to study the physics of thermalization and entanglement in Hamiltonian many-body systems as well as in quantum gravity, where there have recently been path integral derivations of the SFF \cite{Saad2018}.

\begin{acknowledgments}
We thank Mikhail Baranov, Amos Chan, Manoj K.\ Joshi, Barbara Kraus, Rohan Poojary, Lukas Sieberer and  Denis Vasilyev for valuable discussions. Work in Innsbruck has been supported by the European Union's Horizon 2020 research and innovation programme under Grant Agreement No.\ 817482 (Pasquans) and No.\ 731473 (QuantERA via QT-FLAG), by the Austrian Science Foundation (FWF, P 32597 N), by the Simons Collaboration on Ultra-Quantum Matter, which is a grant from the Simons Foundation (651440, P.Z.), and by LASCEM by AFOSR No.\ 64896-PH-QC.  A.E.\ acknowledges funding by the German National Academy of Sciences Leopoldina under the grant number LPDS 2021-02. BV acknowledges funding from the French National Research Agency (ANR-20-CE47-0005, JCJC project QRand). A.V.\ and  V.G.\  were  supported by US-ARO Contract No.W911NF1310172, NSF DMR-2037158,  and the Simons Foundation.
\end{acknowledgments}

\appendix

\section{Spectral form factor in Wigner-Dyson random matrix ensembles}
\label{app:SFF-RMT}
In this appendix, we review the definition and essential properties of the Wigner-Dyson random matrix ensembles. Further, we recall the expressions of the SFF for Hamiltonian and Floquet dynamics modeled with random matrices from these ensembles.

The Wigner-Dyson ensembles are standard distributions of random matrices used to model some of the properties of energy or quasi-energy eigenvalues and eigenstates of chaotic Hamiltonian and Floquet systems \cite{Wigner1955, Dyson1962, Mehta2004, HaakeBook}. We work with two classes of the Wigner-Dyson ensembles - the unitary (U) class for systems that are not time reversal invariant, and the orthogonal (O) class for some systems with time-reversal invariance (the symplectic (S) class applies to other systems with time-reversal invariance, but is not relevant for our examples). We note in particular that nonconventional time-reversal symmetries should also be considered \cite{HaakeBook} e.g.\ invariance under complex conjugation in some basis (which corresponds to the orthogonal class). Each class is characterized by a symmetry group comprised of the corresponding set of similarity transformations (i.e.\ all unitary or orthogonal transformations).

For Hamiltonian systems with time evolution operator $T(t)=\exp(-iHt)$,
it is conventional to choose the Gaussian Unitary Ensemble (GUE) of Hermitian matrices 
or the Gaussian Orthogonal Ensemble (GOE) of real symmetric matrices 
to represent the Hamiltonian $H$ of the appropriate class. In the case of periodically driven Floquet dynamics with time-evolution operator $T(t=\tau n)=V^n, n\in \mathbb N$, where $V$ is the unitary Floquet operator corresponding to a time period $\tau$, the appropriate representative ensembles for $V$ are the Circular Unitary Ensemble (CUE) of unitary matrices and the Circular Orthogonal Ensemble (COE) of symmetric unitary matrices.
These ensembles accurately model the local eigenvalue correlations of the corresponding systems (but not necessarily global eigenvalue features larger than the inverse Thouless time scale~\cite{DAlessio2016,Gharibyan2018} e.g.\ the smoothened density of states), and describe an idealization of the eigenstate distribution (which is generalized by ETH \cite{DAlessio2016, subETH}). But for the special case of chaotic Floquet systems, the eigenstate distribution is seen to be in close agreement with the Wigner-Dyson ensembles \cite{Regnault2016,PhysRevX.4.041048,lazarides2014floqueteth,ponte2015floqueteth,kim2014floqueteth, GarrattChalker}.

For these random matrix models, the spectral form factor can be calculated analytically (see for instance Ref.~\cite{Liu2018}). For completeness, we recall the well-known expressions here.
For  Hamiltonians $H$  from GUE or GOE, one finds
\begin{description}
\item[GUE model]
\begin{align}
K(t)&= r(t)^2+\frac{1}{D}
\begin{cases}
\frac{t}{t_H} & \text{ for } 0< t\le t_H ,\\
1 & \text{ for } t > t_H,
\end{cases}
\label{eq:sffgue}
\end{align}
\item[GOE model]
\begin{align}
K(t)&= r(t)^2+\!\frac{1}{D}
\begin{cases}
2\frac{t}{t_H}-\f{t}{t_H}\log{\left(1+2\f {t} {t_H}\right)}\!\!\! & \!\!\text{ for } 0< t\le t_H,\\
2-\f{t}{t_H}\log{\left(\f{2t+t_H}{2t-t_H}\right)} \!\!\!& \!\!\text{ for } t > t_H,
\end{cases}
\label{eq:sffgoe}
\end{align}
\end{description}
where  $r(t)=t_H J_1(4D t/t_H)/(2Dt)$ with $J_1$ denoting the Bessel's function of the first kind. The Heisenberg time $t_H$, connected to the inverse spacing of adjacent energy levels, depends on the width of the Gaussian distribution of the matrix elements and marks the onset time of the plateau of the SFF. 
For the results presented in Sec.~\eqref{sec:psffNumerical}, we fix it numerically, by matching plateau onset times for the Hamiltonian  Eq.~\eqref{eq:Hamiltonian} and the GOE model. 

For the Floquet operators $V$ from  CUE or COE, one finds 
\begin{description}
    \item[CUE model] \begin{align}
    K(t)= \frac{1}{\D}
\begin{cases}
   \frac{t}{t_H} , & \text{for } 0<t \le t_H,\\
    1,              & \text{for } t > t_H,
\end{cases}
\label{eq:sffcue}
\end{align}
    \item[COE model] \begin{equation}
  K(t)= \frac{1}{\D}
\begin{cases}
\frac{2t}{ t_H}-\frac{t}{t_H}\log\left(1+2\frac{t}{t_H}\right) & \text{ for } 0<t \le  t_H ,\\
2 - \frac{t}{t_H} \log\left(\f{2 t/t_H+1}{2t/t_H-1}\right) & \text{ for } t > t_H ,
\end{cases}
\label{eq:sffcoe}
\end{equation}
\end{description}
Here, $t_H=D\tau$ with $\tau$ to be identified with the period of the Floquet system to be modeled.

\section{Partial spectral form factor in Wigner-Dyson random matrix ensembles}
\label{app:pSFF-RMT}
In this section, we derive the functional form of the partial spectral form factors, discussed in Sec.~\ref{sec:psffAnalytical}, for Hamiltonian dynamics (Floquet dynamics) modeled with the Wigner-Dyson random matrix ensembles GUE, GOE (CUE, COE), as introduced  in App.~\ref{app:SFF-RMT}.

Let $\mathcal{S}$ be a quantum system  with Hilbert space $\mathcal{H}$ of dimension $\D$, and $A\subseteq \mathcal{S}$ a subsystem with dimension $\DA$. Its complement is denoted with $B$ with dimension $\DB$. 
As discussed in App.~\ref{app:SFF-RMT}, we consider
\begin{itemize}
    \item Hamiltonian dynamics $T(t) = \exp(-iHt) $ with $H$ sampled from the GUE and GOE, respectively.
    \item Floquet dynamics with $T(t=\tau n ) = V^n$ for $n\in \mathbb{N}$ with $V$ sampled from the CUE and COE, respectively.
\end{itemize}
We can rewrite $T(t) = Y D(t) Y^\dagger$ with $D(t)=\text{diag}(e^{-iE_1 t}, \dots, e^{-iE_{\D} t})$, the diagonal matrix of eigenvalues  of $T(t)$ and $Y=(y_1 , \dots , y_D)$  the unitary (GUE, CUE) or orthogonal (GOE, COE) matrix of eigenvectors of $H$ or $V$. Crucially, we note that all time-dependence is contained in the diagonal matrix $D(t)$. In the following, we rely on the  fact:
\newtheorem{fact}{Fact}
\begin{fact}
For $H$ from GUE or GOE ($V$ from CUE or COE), the distribution of the eigenvectors of $H$ ($V$) is independent of the distribution of eigenvalues  of $H$  ($V$). Further, $Y=(y_1 , \dots , y_{\D})$ is distributed according to the Haar measure on the  group of unitary matrices $U(\D)$ (for GUE, CUE) and the group of orthogonal matrices $O(\D)$ (for GOE, COE).
\end{fact}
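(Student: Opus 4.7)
The plan is to deduce both the independence claim and the Haar property simultaneously from the defining group-invariance of each Wigner-Dyson ensemble. For each of GUE, GOE, CUE, and COE, the ensemble measure is invariant under the action $M \mapsto G M G^{-1}$ (for GUE/CUE on Hermitian or unitary $M$) or $M \mapsto G M G^T$ (for GOE/COE), with $G$ ranging over the compact group $\mathcal{G}$ that equals $U(\D)$ in the unitary classes and $O(\D)$ in the orthogonal classes. Writing $M = Y D Y^\dagger$ with $D$ diagonal and $Y \in \mathcal{G}$, the push-forward of the measure onto $(D,Y)$ must then be $\mathcal{G}$-invariant in the $Y$ factor, and a change-of-variables argument (the Weyl integration formula) shows it factorizes as $p_{\mathrm{eig}}(D)\,\mathrm{d}D \otimes \mathrm{d}\mu_{\mathcal{G}}(Y)$, with $\mathrm{d}\mu_{\mathcal{G}}$ the unique normalized Haar measure on $\mathcal{G}$. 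This product form immediately yields both assertions of the Fact.

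To make this concrete for GUE and GOE, I would start from the standard joint density $p(H) \propto \exp(-c\,\mathrm{Tr}(H^2))$, which depends only on the spectrum through $\mathrm{Tr}(H^2) = \sum_i E_i^2$ and is therefore manifestly invariant under $H \mapsto GHG^{-1}$ for $G \in \mathcal{G}$. Changing variables from matrix entries to eigenvalues and eigenvectors then produces the eigenvalue Jacobian $\prod_{i<j}|E_i - E_j|^\beta$ (with $\beta=1$ for GOE and $\beta=2$ for GUE) multiplied by the $\mathcal{G}$-invariant volume element on the conjugation orbit, and the invariance of $p$ forces the conditional distribution of $Y$ given the spectrum to be $\mathcal{G}$-invariant, hence Haar, and in particular independent of $D$. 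For the circular ensembles the argument is even more direct: CUE is by definition Haar on $U(\D)$, so the Weyl integration formula on $U(\D)$ already supplies the factorization; and COE, realized as $V = U U^T$ with $U$ Haar on $U(\D)$, is invariant under $V \mapsto O V O^T$ for $O \in O(\D)$, so the same reasoning applies with $\mathcal{G} = O(\D)$.

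The main obstacle I anticipate is handling the stabilizer of the eigendecomposition: the map $(D,Y) \mapsto Y D Y^\dagger$ is not injective, since permuting the columns of $Y$ and rephasing them (or sign-flipping them in the orthogonal case) leaves $M$ unchanged. To say ``$Y$ is distributed according to Haar measure'' precisely, one must either work on the quotient $\mathcal{G}/\mathcal{T}$ by the appropriate maximal torus $\mathcal{T}$, or fix a convention such as ordered eigenvalues together with a phase/sign rule on the columns of $Y$; the Weyl integration formula supplies the combinatorial factor $|W|$ (order of the Weyl group) that accounts for this ambiguity. Once this bookkeeping is in place, the factorization above is the whole content of the Fact, and independence together with Haar-uniformity follow directly from the $\mathcal{G}$-invariance of the defining measure.
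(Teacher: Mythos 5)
Your proposal is correct and rests on exactly the same idea as the paper's proof, which simply invokes the invariance of each ensemble under conjugation by $U(\D)$ (GUE, CUE) or $O(\D)$ (GOE, COE) and defers the details to Anderson--Guionnet--Zeitouni, Corollary 2.5.4. You spell out what that reference contains (the explicit Gaussian densities, the Weyl-integration/change-of-variables factorization, and the stabilizer bookkeeping), and your handling of COE via $V=UU^T$ and the orthogonal eigenvector matrix of a symmetric unitary is a valid way to carry out the generalization the paper asserts without proof.
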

\begin{proof}
This fact relies only on the invariance  of the random matrix ensembles under unitary (GUE, CUE) and orthogonal transformations (GOE, COE). 
For GUE and GOE, a proof is given in Ref.~\cite{Anderson2010}, Corollary 2.5.4. It generalizes directly to CUE and COE.
\end{proof}
Using this fact, we can carry out the average over eigenvectors  in Eq.~\eqref{eq:psff} explicitly (see next subsection). With the identification $\SFF(t) = D^{-2}  \overline{|\tr{D(t)}|^2}$, we find
\begin{equation}
    \SFF[A](t)={c_A^{(1)}}+c_A^{(2)}\SFF(t)~,
    \label{eq:general-RMT}
\end{equation}
where $\text{for} ~ H \in \text{GUE} \; , \;  V\in \text{CUE}$,
\begin{align}
c_A^{(1)} &= \frac{\left.{\DB}^2-1\right. }{{\DA}^2 {\DB}^2-1}~~;~~
c_A^{(2)} =  \frac{\DB^2 \left({\DA}^2-1\right) }{{\DA}^2 {\DB}^2-1} ~,
\label{eq:app_pSFF_U}
\end{align}
and $\text{for} ~ H \in \text{GOE} \; , \;  V\in \text{COE}$,
\begin{align}
  c_A^{(1)} &=   \frac{ \left(\DB^2+\DB-2\right)}{({\DA} {\DB}-1)({\DA} {\DB}+2)}~; \nonumber\\
    c_A^{(2)} =&\frac{\DB\left(\DA \DB +\DB +1\right)\left( {\DA} - 1 \right)} {({\DA} {\DB}-1)({\DA} {\DB}+2)}~. \label{eq:app_pSFF_O}
\end{align}
In particular, $\SFF[A=\mathcal{S}](t)=\SFF(t)$ for $\DA=\D, \DB=1$ and $\SFF[A=\emptyset](t)=1$ for $\DA=1,\DB=\D$ holds, as expected. 
\paragraph*{Relation to average purity and overlap:} For Hamiltonian  $T(t)=\exp(-iHt)$ or Floquet dynamics $T(t=n\tau)=V^n$, we can rewrite the PSFF in terms of the (quasi-) energy eigenvalues and (quasi-) energy eigenstates [see Eq.~\eqref{eq:psff}]. For Hamiltonians $H$ (Floquet operators $V$) from the Wigner-Dyson random matrix ensembles we can use then fact 1 to obtain  the PSFF in terms of the average purity  $\mathcal{P}_B$ of reduced eigenstates and average overlap of distinct reduced eigenstates $\mathcal{Q}_B$  [see Sec.~\ref{sec:psffAnalytical}, in particular Eq.\ \eqref{eq:psffrmt}]. Comparing Eq.\ \eqref{eq:psffrmt} with Eq.~\eqref{eq:general-RMT} we find that
\begin{align}
c_A^{(1)}=\frac{\mathcal{P}_B-\mathcal{Q}_B}{D_A}\quad \text{and}\quad
c_A^{(2)}=D_B\mathcal{Q}_B~.
\end{align}
Using this, Eqs.~\eqref{eq:app_pSFF_U} and Eqs.~\eqref{eq:app_pSFF_O}, we obtain Eq.~\eqref{eq:RMT-PandQ} (for GUE, CUE) and the corresponding expressions for the orthogonal ensembles (GOE, COE), respectively.

\subsection*{Proof of Eqs.~\eqref{eq:general-RMT}, \eqref{eq:app_pSFF_U} and \eqref{eq:app_pSFF_O}}
We denote the basis of $\mathcal{H}$ consisting of eigenvectors  of $T(t)$ with $\ket{i}$ ($i=1,\dots, D$). Furthermore, we fix an arbitrary product basis of  $\mathcal{H}=\mathcal{H}_A \otimes \mathcal{H}_B$ as $\ket{a,b}$ with $a=1,\dots, \DA$ and $b=1,\dots, \DB$. With $T(t) = Y D(t) Y^\dagger$, we rewrite Eq.~\eqref{eq:psff} in these bases. Using the independence of eigenvalues and eigenvectors (Fact 1), we find 
\begin{align}
   DD_A  \SFF[A](t) &= \overline{ \tr[B]{\tr[A]{T(t)  } \tr[A]{(T(t)^\dagger}} } \\
    =  &  \overline{ Y_{(a_1,b_1),i_1} (Y^\dagger)_{i_1,(a_1,b_2)} Y_{(a_2,b_2),i_2}  (Y^\dagger)_{i_2,(a_2,b_1)}  }  \times \nonumber\\ & \overline{D(t)_{i_1, i_1} (D(t)^\dagger)_{i_2, i_2} }~,
    \label{eq:evdecomp}
\end{align}
where summation over repeated indices is understood. The ensemble average over the matrix elements of $Y$  can be carried out using the Weingarten calculus on the unitary group (GUE and CUE) and orthogonal group (GOE and COE), respectively.

\begin{figure}[t]
    \centering
    \includegraphics[width=0.9 \linewidth]{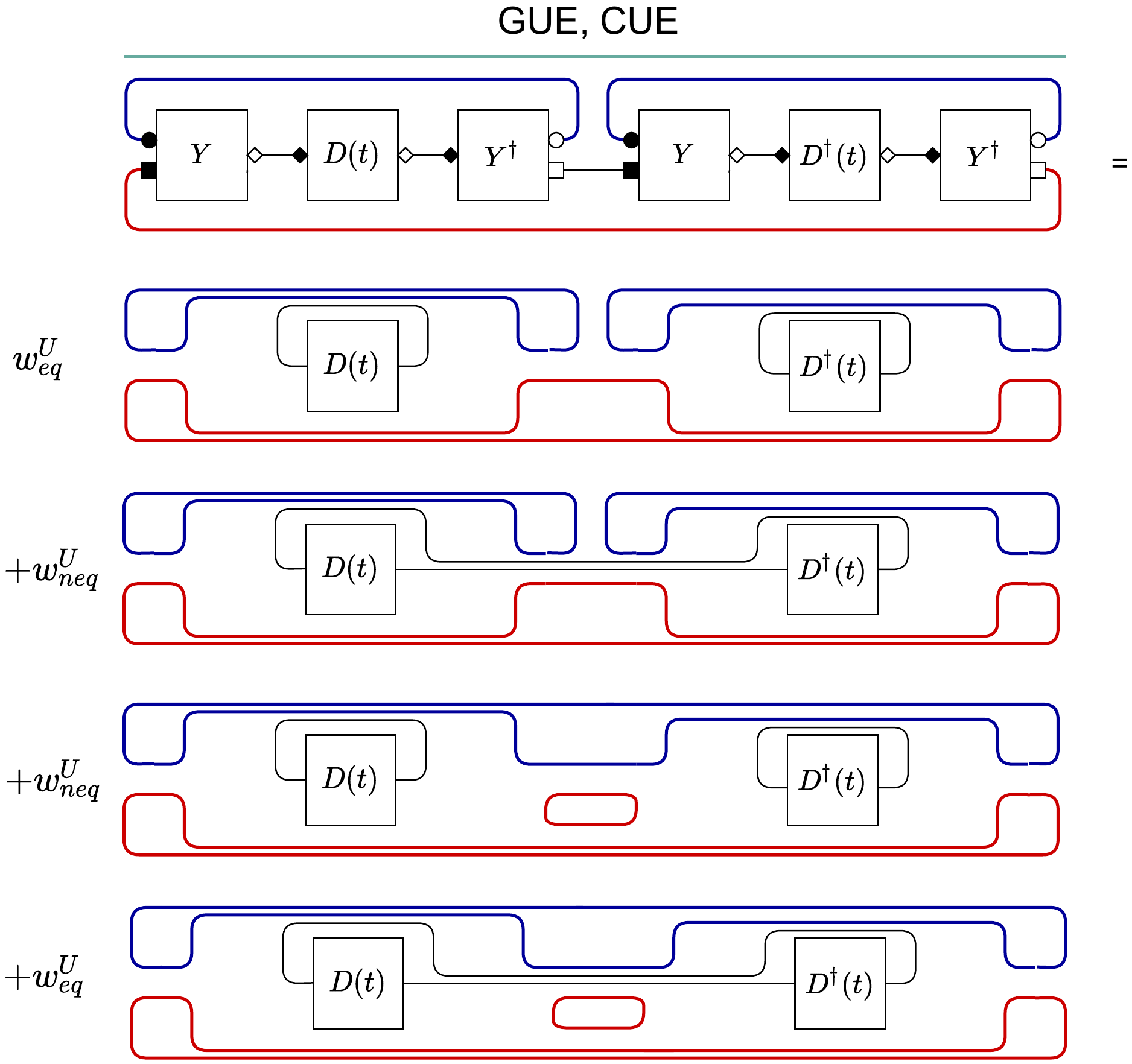}
   \caption{ \textit{Diagrammatic evalution of Eq.~\eqref{eq:evdecomp} for $Y \in U(D)$ (GUE, CUE case).} To perform the average over eigenvectors (green line), we remove the boxes $Y$ and connect white decorations of $Y$ (rhombi) with white decorations of $Y^*$ (rhombi) and black decorations of $Y$  (circles, squares) with black decorations of $Y^*$ (circles, squares) in all possible ways, corresponding to the pair partitions  $\mathbf{m}, \mathbf{n} \in \mathcal{M}^U(4)$ \cite{Collins2010,Elben2019}. Summing over the resulting diagrams, weighted with corresponding value of the Weingarten function, yields Eq.~\eqref{eq:app_pSFF_U}.
    In all diagrams, each blue loop contributes a factor  $\DA$, each red loop a factor $\DB$.}
\label{fig:diag_pSFF-U}
\end{figure}

\begin{figure}[t]
    \centering
    \includegraphics[width=0.9\linewidth]{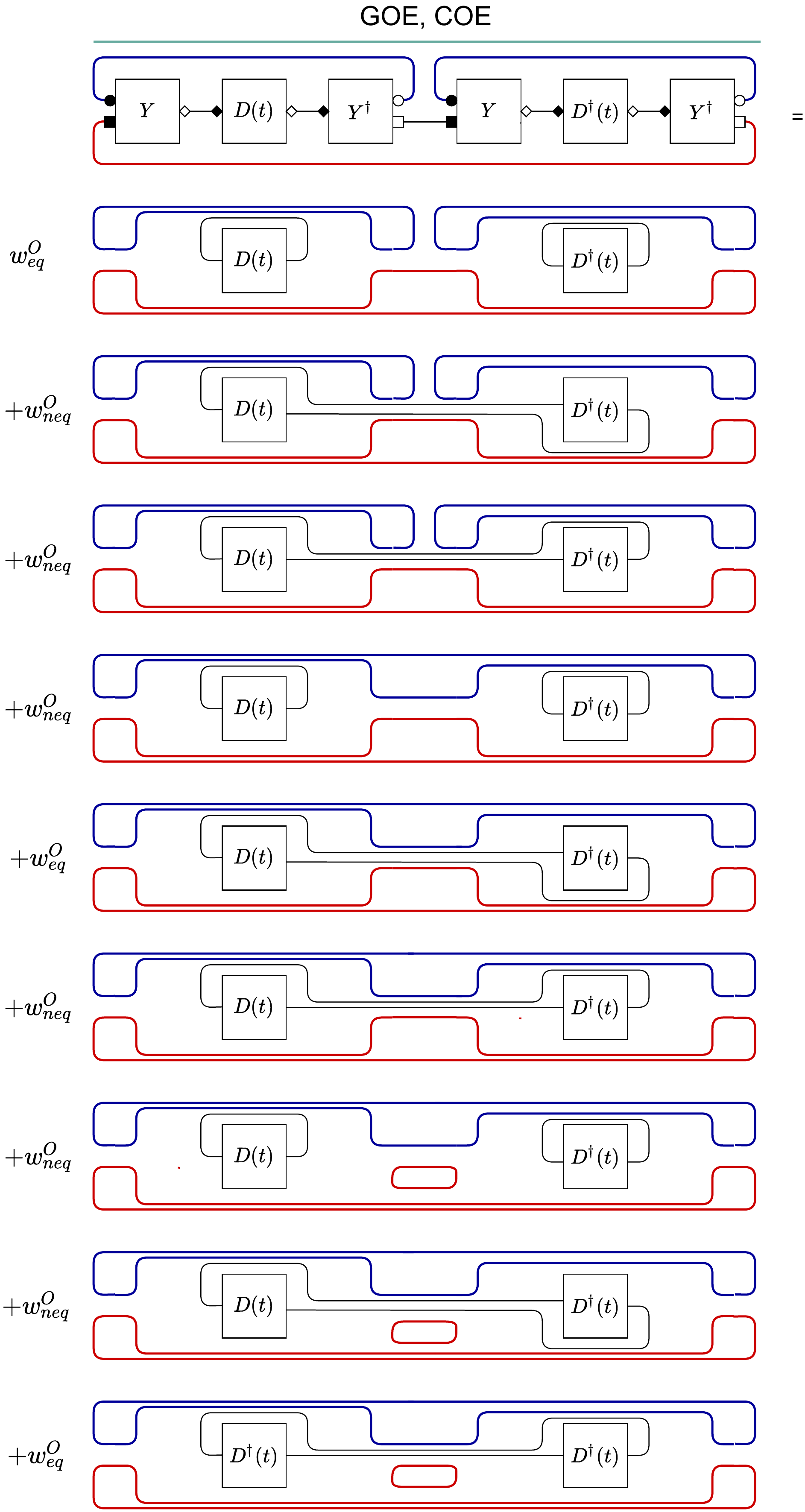}
\caption{\textit{Diagrammatic evalution of Eq.~\eqref{eq:evdecomp} for $Y \in O(D)$ (GOE, COE case).} To perform the average over eigenvectors (green line), we remove the boxes $Y$ and connect white decorations (rhombi)  with white decorations (rhombi) and black decorations (circles, squares)  with black decorations of same type (circles, squares) in all possible ways, corresponding to all pair partitions  $\mathbf{m}, \mathbf{n} \in \mathcal{M}^O(4)$ \cite{Collins2010,Elben2019}. Summing over the resulting diagrams, weighted with corresponding value of the Weingarten function, yields Eq.~\eqref{eq:app_pSFF_O}. In all diagrams, each blue loop contributes a factor  $\DA$, each red loop a factor $\DB$.}
\label{fig:diag_pSFF-O}
\end{figure}

The Weingarten calculus for the unitary group  and for orthogonal group  can be formulated in terms of pair partitions, defined as follows.
\newtheorem{definition}{Definition}
\begin{definition}[Pair partitions]
For $n\in \mathbb{N}$, 
a) we denote with $\mathcal{M}^{O}(2n)$ the set of all pair partitions of $\{1,\dots,2n\}$, partitioning $\{1,\dots,2n\}$  into $n$ distinct pairs. Then, each pair partition $\mathbf{m}\in \mathcal{M}^{O}(2n)$ can be uniquely expressed as 
\begin{align}
\left\{ \{ \mathbf{m}(1),\mathbf{m}(2)\} ,\dots \{\mathbf{m}(2n-1),\mathbf{m}(2n)\} \right\}
\end{align}
with $\mathbf{m}(1)<\mathbf{m}(3)<\dots<\mathbf{m}(2n-1)$ and $\mathbf{m}(2i-1)<\mathbf{m}(2i)$ for all $i\in \{1,\dots n\} $.\\
b) we denote with $\mathcal{M}^{U}(2n) \subseteq \mathcal{M}^{O}(2n)$ the set of all pair partitions of $\{1,\dots,2n\}$ which pair elements  in $\left\{1,\dots,n\right\}$ with elements $\left\{n+1,\dots,2n\right\}$. Then, each partition $\mathbf{m}\in \mathcal{M}^{U}(2n)$ can be uniquely expressed as 
\begin{align}
\left\{ \{\mathbf{m}(1),\mathbf{m}(2)\} ,\dots \{\mathbf{m}(2n-1),\mathbf{m}(2n)\} \right\}
\end{align}
with $\mathbf{m}(1)<\mathbf{m}(3)<\dots<\mathbf{m}(2n-1)$ and $\mathbf{m}(2i-1)\in \left\{1,\dots,n\right\}$ and $\mathbf{m}(2i)\in \left\{n+1,\dots,2n\right\}$ for all $i\in \{1,\dots n\} $.
\end{definition}

The following fact is shown in Ref.~\cite{Collins2009}.
\begin{fact}[Weingarten calculus]
(i) Let $Y$ be distributed according to the Haar measure on the orthogonal group $O(D)$. With indices $i_1, \dots, i_{2n}$ and $j_1,\dots, j_{2n}$ in $\left\{1,\dots, D \right\}$ it holds
\begin{align}
&\int_{Y\in O(D)} Y_{i_1,j_1} \cdots  Y_{i_{2n},j_{2n}} \text{d}Y =\nonumber\\
& \smashoperator{\sum_{\mathbf{m},\mathbf{n} \in \mathcal{M}^{O}(2n)}} \; \textup{Wg}^{O(D)}(\mathbf{m},\mathbf{n} ) \prod_{k=1}^{n} 
\delta_{i_{\mathbf{m}(2k-1)},i_{\mathbf{m}(2k)}} \delta_{j_{\mathbf{m}(2k-1)},j_{\mathbf{m}(2k)}}
\end{align}
with $\mathcal{M}^{O}(2n)$ the set of all pair partitions on $\left\{1,2, \dots, 2n\right\}$ and $\textup{Wg}^{O(D)}$ the Weingarten function on the orthogonal group $O(D)$.\\
(ii) Let $Y$ be distributed according to the Haar measure on the unitary group $U(D)$. With indices $i_1, \dots, i_{2n}$ and $j_1,\dots, j_{2n}$ in $\left\{1,\dots, D \right\}$ it holds
\begin{align}
&\int_{Y\in U(D)} Y_{i_1,j_1} \cdots Y_{i_n,j_n}   \, Y^{*}_{i_{n+1},j_{n+1}} Y^{*}_{i_{2n},j_{2n}}\text{d}Y  =\nonumber\\ &\smashoperator{\sum_{\mathbf{m},\mathbf{n} \in \mathcal{M}^U(2n)}}\; \textup{Wg}^{U(D)}(\mathbf{m},\mathbf{n} ) \prod_{k=1}^{n} 
\delta_{i_{\mathbf{m}(2k-1)},i_{\mathbf{m}(2k)}} \delta_{j_{\mathbf{m}(2k-1)},j_{\mathbf{m}(2k)}}
\label{eq:haar_unitary}
\end{align}
with $\mathcal{M}^{U}(2n) \subsetneq \mathcal{M}^{O}(2n)$ the set of all pair partitions on $\left\{1,2, \dots, 2n\right\}$ which pair elements  in $\left\{1,\dots,n\right\}$ with elements $\left\{n+1,\dots,2n\right\}$    and $\textup{Wg}^{U(D)}$ the Weingarten function on the unitary group $U(D)$.
\end{fact}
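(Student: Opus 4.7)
The plan is to derive Fact 2 from the bi-invariance of the Haar measure combined with the classical Schur--Weyl-type duality for the unitary and orthogonal groups.

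For the unitary case [part (ii)], by left and right invariance of $dY$ under $Y \mapsto U_L Y U_R$ with $U_L,U_R \in U(D)$, the integral defines a tensor that is equivariant under the $U(D)\times U(D)$ action on the upper ($i$) and lower ($j$) indices respectively. On each side this amounts to an intertwiner for the representation of $U(D)$ on $(\mathbb{C}^D)^{\otimes n}\otimes(\mathbb{C}^D)^{*\otimes n}$, since the integrand contains $n$ factors of $Y$ and $n$ factors of $Y^*$. By Schur--Weyl duality, the commutant of this representation is spanned by the $n!$ operators that match each $Y$ index with a $Y^*$ index according to some permutation $\sigma\in S_n$, equivalently by the Kronecker-delta structures associated with pair partitions in $\mathcal{M}^U(2n)$. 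Writing the integral as a double sum indexed by such partitions $(\mathbf{m},\mathbf{n})$ and fixing the coefficients by evaluating both sides on a basis of invariants --- which amounts to inverting the Gram matrix of these pair partitions --- one identifies the coefficients with the unitary Weingarten function $\textup{Wg}^{U(D)}$.

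For the orthogonal case [part (i)], the Haar measure on $O(D)$ is again bi-invariant; since $Y$ is a real orthogonal matrix no complex conjugate appears, and the same reasoning makes the integral an intertwiner for $O(D)\times O(D)$ acting on $(\mathbb{C}^D)^{\otimes 2n}$ in the upper and lower indices. The commutant is now the larger Brauer algebra, whose natural basis is labelled by pair partitions of $\{1,\dots,2n\}$, each pair encoding a contraction by the $O(D)$-invariant symmetric bilinear form. The integral then decomposes as a double sum over $(\mathbf{m},\mathbf{n})\in\mathcal{M}^O(2n)\times\mathcal{M}^O(2n)$, with coefficients determined by inverting the Gram matrix of these basis elements, yielding the orthogonal Weingarten function $\textup{Wg}^{O(D)}$.

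The main obstacle will be the precise characterization of the Weingarten coefficients in low dimension. For $D\geq 2n$ (unitary) or $D\geq 2n-1$ (orthogonal) the corresponding Gram matrices are invertible, and explicit formulas follow from the character theory of $S_n$ (respectively of the Brauer algebra) along the lines of Collins and Sniady. For smaller $D$ the Gram matrix acquires a nontrivial kernel, and one must pass to a Moore--Penrose pseudo-inverse --- this regime is the one treated systematically in Ref.~\cite{Collins2009}. For the application in this appendix only $n=2$ and $D\gg 1$ are needed, so one is squarely in the invertible regime, and the Weingarten values reduce to simple rational functions of $D$ that can be read off directly and fed into the diagrammatic computations leading to Eqs.~\eqref{eq:app_pSFF_U} and \eqref{eq:app_pSFF_O}.
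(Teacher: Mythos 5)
Your proposal is correct and coincides with the argument the paper relies on: the paper gives no independent proof of Fact~2, deferring entirely to Ref.~\cite{Collins2009}, and your route via bi-invariance of the Haar measure, the first-fundamental-theorem / Schur--Weyl (respectively Brauer) description of the commutant by permutation (respectively pair-partition) contractions, and inversion of the associated Gram matrix to identify the Weingarten coefficients is exactly the Collins--\'Sniady-type derivation contained in that reference. One minor imprecision: your invertibility thresholds ($D\geq 2n$ unitary, $D\geq 2n-1$ orthogonal) are sufficient but not sharp (the Gram matrices are already invertible for $D\geq n$), which is immaterial here since the application only needs $n=2$ with $D\geq 2$.
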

In our case, we are only interested in the case $n=2$.
As shown in Ref.~\cite{Collins2009},  when $\mathbf{m},\mathbf{n} \in  \mathcal{M}^{O}(4)$ and  $D\geq2$, 
\begin{align}
w^{O}_{\text{eq}} & \equiv \text{Wg}^{O(D)}(\mathbf{m},\mathbf{n} ) = \frac{D+1}{D (D+2)(D-1)} \; \text{for} \; \mathbf{m}=\mathbf{n}\nonumber\\
w^{O}_{\text{neq}} &\equiv \text{Wg}^{O(D)}(\mathbf{m},\mathbf{n} ) = \frac{-1}{D (D+2)(D-1)} \; \text{for} \; \mathbf{m}\neq\mathbf{n} ~ .
\end{align}
Furthermore, it holds  for $\mathbf{m},\mathbf{n} \in  \mathcal{M}^{U}(4)$ and  $D\geq2$
\begin{align}
w^{U}_{\text{eq}} &\equiv \text{Wg}^{U(D)}(\mathbf{m},\mathbf{n} ) = \frac{D}{D (D^2-1)} \quad \text{for} \quad \mathbf{m}=\mathbf{n} \nonumber\\
w^{U}_{\text{neq}} &\equiv \text{Wg}^{U(D)}(\mathbf{m},\mathbf{n} ) = \frac{-1}{D (D^2-1)} \quad \text{for} \quad \mathbf{m}\neq\mathbf{n}~ .
\end{align}
Using Fact 2 and these expressions, we can perform the average over eigenvector elements in Eq.~\eqref{eq:evdecomp} explicitly. This is most easily performed diagrammatically and shown in Figs.~\ref{fig:diag_pSFF-U} and \ref{fig:diag_pSFF-O}.

\section{Partial spectral form factor in general chaotic systems}
\label{app:pSFF-ETH}
Here, we derive the typical behavior of the PSFF for ensembles of chaotic systems, more general than random matrix ensembles, as considered in Sec.~\ref{sec:pSFFchaotic} of the main text. As in Eq.~\eqref{eq:rho_separation}, we decompose the reduced density matrix into a pure trace, a traceless smooth part and a traceless fluctuating part,  $\rho_B(E) = D_B^{-1}\mathbb{1}+\Delta\rho_B(E)+\delta\rho_B(E)$. For the smooth part, we assume that there exists an extrapolation of each matrix element to a continuous energy variable such that for some (as yet unspecified) time $t_\rho \ll O(D)$,
\begin{equation}
    (\Delta\tilde{\rho}_B(t))_{jk} \equiv \int\diff E\ e^{-i E t} (\Delta{\rho_B}(E))_{jk} = 0, \ \forall\ \lvert t \rvert > t_\rho~.
    \label{eq:rho_def_smooth}
\end{equation}
The remaining energy dependent part of $\rho_B(E)$ i.e.\ the part that oscillates rapidly and has no low frequency Fourier component (on extrapolation to continuous energy) will be taken to be the fluctuating part,
\begin{equation}
    (\delta\tilde{\rho}_B(t))_{jk} \equiv \int\diff E\ e^{-i E t} (\delta{\rho_B}(E))_{jk} = 0, \ \forall\ \lvert t \rvert \leq t_\rho~.
    \label{eq:rho_def_fluctuating}
\end{equation}
Up to this point, such a decomposition is always possible. We will additionally take $t_\rho$ to be set by the scale of randomization in the ensemble discussed in Sec.~\ref{sec:pSFFchaotic}, so that the fluctuating part can be identified as the part that is completely randomized in the ensemble. We note that the smooth part may fluctuate between different ensemble realizations, but can not be randomized in the same sense as the fluctuating part as it is roughly constant within an energy window of size $t_\rho^{-1}$. Similarly, we will not require randomization of the correlators of $\delta\rho_B(E)$ between energies further apart than $\sim t_\rho^{-1}$, for which the correlator may have to be nonvanishing to maintain zero Fourier component of the fluctuating part at $t \leq t_\rho$.

To understand the effect of this decomposition in the PSFF, we will first perform a prototype calculation with simpler notation. Consider two functions $f(E)$ and $g(E)$ of a continuous variable $E$, with respective Fourier transforms $\tilde{f}(t)$ and $\tilde{g}(t)$, both of which potentially vary over different realizations of the ensemble. We will eventually associate these functions with (components of) the different parts of the reduced density matrices of the energy eigenstates. Define the quantity,
\begin{align}
    &F(t) = \frac{1}{D^2}\overline{\sum_{j,k} e^{i (E_j-E_k) t} f(E_j) g^\ast(E_k)} \nonumber \\
     &= \frac{1}{D^2}\int\frac{\diff t_l}{2\pi}\int \frac{\diff t_r}{2\pi}\ \overline{\tilde{f}(t_l)\tilde{g}^\ast(t_r) \sum_{j,k} e^{iE_j(t+t_l)-iE_k(t+t_r)}}~.   
    \label{eq:fgSFF}
\end{align}
Now, it is convenient to define an ensemble-averaged unequal time SFF $K(t_1,t_2) = D^{-2}\overline{\sum_{j,k} e^{iE_j t_1-iE_k t_2}} $, which reduces to $K(t)$ at equal times $t_1=t_2=t$. The sum of phases $D^{-2}\sum_{j,k} e^{iE_j(t+t_l)-iE_k(t+t_r)}$ in Eq.~\eqref{eq:fgSFF} would fluctuate strongly over different ensemble realizations at large $t_1,t_2$ corresponding to fluctuations of the positions of energy levels, much like the SFF without ensemble averaging~\cite{prange1997sff}; if we assume the ensemble is such that these fluctuations are not correlated with those of $f$ and $g$ (i.e.\ the reduced energy eigenstates), we can perform the ensemble average over the sum of phases independently, allowing us to formally replace it with $K(t+t_l,t+t_r)$,
\begin{equation}
    F(t) = \int\frac{\diff t_l}{2\pi}\int \frac{\diff t_r}{2\pi}\ K(t+t_l,t+t_r)\overline{\tilde{f}(t_l)\tilde{g}^\ast(t_r)}~.
    \label{eq:fgSFF2}
\end{equation}
For instance, in a fully chaotic system as we will soon specialize to, this assumption can be justified by considering the energy eigenstates in an ensemble realization as sufficiently random superpositions of those of another ensemble realization (in the spirit of Refs.~\cite{deutsch1991eth,deutsch2010eigenstates,lu2019renyi, murthy2019eigenstates}), which should then be uncorrelated with the precise positions of the energy levels.

To simplify Eq.~\eqref{eq:fgSFF2} further, we need to know the form of $K(t_1,t_2)$. For mathematical simplicity, we assume (fully chaotic) level statistics in the unitary Wigner-Dyson class. The ensemble-averaged two level correlation function for nearby energy levels $E_j$,$E_k$ (closer than $\sim t_{\rm Th}^{-1}$) in this class takes the universal form \cite{Mehta2004, HaakeBook, Liu2018},
\begin{align}
    &\overline{\delta\left(E+\frac{\omega}{2}-E_j\right) \delta\left(E-\frac{\omega}{2}-E_k\right)} \nonumber \\ &= \Omega^2(E)\left\lbrace1+\frac{\delta(\omega)}{\Omega(E)}-\sinc^2\left[\omega\pi\Omega(E)\right]\right\rbrace~,
    \label{eq:2ptRMT}
\end{align}
where $\Omega(E)$ is the smoothened (continuous and ensemble-averaged) density of states, whose Fourier transform satisfies $\tilde{\Omega}(t\gg t_{\rm Th}) \approx 0$. The ensemble averaged sum over $E_j,E_k$ in the definition of $K(t)$ can then be replaced by an integral weighted by the two level correlation in Eq.~\eqref{eq:2ptRMT}. Using methods analogous to the calculation of $K(t)$ for this correlation function in Ref.~\cite{Liu2018}, we obtain the following late time behavior for $t_1,t_2 \gg t_{\rm Th}$,
\begin{align}
    K(t_1,t_2) = \frac{1}{D^2}\begin{dcases}
   \tilde{\Omega}(\tau_{12}), &T_{12} > 2 \pi \Omega(E)\ \forall E, \\
  \frac{\lvert T_{12}\rvert}{\beta\pi}\tilde{\Theta}_\Omega(\tau_{12}), &T_{12} < 2 \pi \Omega(E)\ \forall E,
  \end{dcases}
  \label{eq:unequalSFF}
\end{align}
where $\beta=2$ for the unitary Wigner-Dyson class, and we have introduced the shorthand symbols $T_{12} = (t_1+t_2)/2$, $\tau_{12} = t_2-t_1$. $\tilde{\Theta}_\Omega(t)$ is the Fourier transform of the unit step function $\Theta(\Omega(E))$, the latter being $1$ where $\Omega(E)>0$ and zero elsewhere. Essentially, the unequal time SFF is generally negligible for (large) unequal times, with a small spread around $t_1=t_2$ determined by the variation of the density of states; as noted earlier, it reduces to the SFF at precisely equal times. We also identify $2\pi\Omega(E)$ with the Heisenberg time $t_H$, assuming that $\Omega(E)$ is at least of the same order of magnitude throughout the spectrum. In the orthogonal and symplectic Wigner-Dyson classes, there are significant corrections (relative to the unitary class) to the form of the equal time SFF $K(t)$ near $t\sim t_H$. But for $t\ll t_H$, virtually the same results hold with $\beta = 1$ for the orthogonal class and $\beta = 4$ for the symplectic class~\cite{Liu2018} (of course, the plateau behavior for $t\gg t_H$ is generally independent of such specifics). Analogously, we expect similar replacements (the appropriate value of $\beta$, and focusing on the $T_{12}\gg t_H$ and $T_{12} \ll t_H$ regimes) to work for the unequal time SFF in Eq.~\eqref{eq:unequalSFF} as well. With this expectation, we write
\begin{equation}
    K(t_1,t_2) = \frac{1}{D^2}\begin{dcases}
   \tilde{\Omega}(\tau_{12}), &T_{12} \gg t_H, \\
  \frac{\lvert T_{12}\rvert}{\beta\pi}\tilde{\Theta}_\Omega(\tau_{12}), &T_{12} \ll t_H,
  \end{dcases}
  \label{eq:unequalSFF2}
\end{equation}
for $t_1,t_2 \gg t_{\rm Th}$ in any Wigner-Dyson symmetry class.

Using the decomposition of $\rho_B(E)$ with these definitions then gives several terms for $K_A(t)$ of the form of Eq.~\eqref{eq:fgSFF}, where $f$ and $g$ independently go over each of $D_B^{-1}$, $\Delta\rho_B$ and $\delta\rho_B$, with an additional trace of the product over the $B$ subspace. Now, we will argue that all cross terms with $f \neq g$ may be taken to vanish. When $f=D_B^{-1}$, the overlap becomes $\tr[B]{fg} = D_B^{-1}\tr[B]{g}$, which is zero when $g=\Delta\rho_B,\delta\rho_B$, which are both traceless. When say, $f$ is $\Delta\rho_B$ and $g$ is $\delta\rho_B$, the cross term vanishes due to the assumption that ensemble averaging randomizes $\delta\rho_B$.

Dropping the cross terms for the above reasons gives the form of Eq.~\eqref{eq:KA_separation} in the main text, $K_A(t) = K(t) + \Delta K_A(t) + \delta K_A(t)$, where $K(t)$ is the full SFF, and
\begin{align}
    \Delta K_A(t) &= \sum_{j,k}\frac{\overline{e^{i (E_j-E_k) t}  \tr[B]{\Delta \rho_B(E_j) \Delta \rho_B(E_k)} }}{DD_A}, \label{eq:DeltaKAdef}\\
    \delta K_A(t) &= \sum_{j,k}\frac{\overline{e^{i (E_j-E_k) t}  \tr[B]{\delta \rho_B(E_j) \delta \rho_B(E_k)} }}{DD_A}.
\end{align}
In the main text, it is argued that $\delta K_A(t\gg t_\rho)$ amounts to a constant shift after ensemble averaging due to the randomization of $\delta\rho_B(E)$. Here, we will complete the evaluation of $\Delta K_A(t)$ using the prototype Eq.~\eqref{eq:fgSFF2} with $f=g=(\Delta\rho_B)_{ab}$ and the expression in Eq.~\eqref{eq:unequalSFF2} with $t_1 = t+t_l$, $t_2 = t+t_r$. As the definition of $\Delta\rho_B$ sets $t_l,t_r < t_\rho$, we have $\lvert T_{12}\rvert = \lvert t\rvert +\sgn(t)(t_l+t_r)/2$ at large times (i.e. $t \gg t_{\rm Th},t_\rho$). For $t\ll t_H$ in this regime, this gives,
\begin{align}
    \Delta K(t : t_{\rm Th},& t_\rho \ll  t \ll t_H) 
    = \frac{1}{D D_A} \int\frac{\diff t_l}{2\pi}\int \frac{\diff t_r}{2\pi}\nonumber\\
   & \left[  \frac{1}{\beta\pi} \left(\lvert t\rvert+\sgn(t)\frac{t_l+t_r}{2}\right) \tilde{\Theta}_\Omega(t_l-t_r) \right. \nonumber \\
    &\left. \overline{\left(\sum_{a,b}\left(\Delta\tilde{\rho}_B(t_l)\right)_{ab}\left(\Delta\tilde{\rho}_B^\ast(t_r)\right)_{ab}\right) }\right].
    \label{eq:DeltaK_intermediate}
\end{align}
The Hermiticity of $\Delta\rho_B$ implies that $\left(\Delta\tilde{\rho}_B(-t)\right)_{ab} = \left(\Delta\tilde{\rho}_B^\ast(t)\right)_{ba}$. Consequently, making the integration variable transformation $t_l \to -t_r$, $t_r \to -t_l$ in Eq.~\eqref{eq:DeltaK_intermediate}, we see that inside the parentheses in the second line the $\lvert t\rvert$ term is unaltered but the $\sgn(t)$ term transforms to its negative, while all factors outside the parentheses remain unaltered. It follows that the contribution from the $\sgn(t)$ term actually evaluates to zero, leaving only a linear ramp term from $\lvert t\rvert$. For $t\gg t_H$, we directly obtain only a plateau contribution. Now, it is straightforward to Fourier transform back to the energy variable $E$, 
\begin{align}
    &\Delta K_A(t\gg t_{\rm Th},t_\rho) \nonumber \\ &= \frac{1}{D D_A}\int\diff E\ \begin{dcases}
    \overline{\Omega(E)\tr[B]{\Delta\rho_B^2(E)}}~,&\ t \gg t_H, \\
    \frac{t}{\beta\pi}\overline{\Theta(\Omega(E))\tr[B]{\Delta\rho_B^2(E)}}~,&\ t \ll t_H .
    \end{dcases}
\end{align}
For ease of interpretation, we can convert $E$ back to a discrete energy variable from its present continuous form via the following correspondence relations for sums over energy levels: $\sum_i \leftrightarrow \int\diff E\ \Omega(E)$ and $\sum_i \Omega^{-1}(E_i) \leftrightarrow \int\diff E\ \Theta(\Omega(E))$, which become equalities on ensemble averaging. Then we get the expression,
\begin{align}
    &\Delta K_A(t\gg t_{\rm Th},t_\rho) \nonumber \\ &= \frac{1}{D D_A}\ \begin{dcases}
    \overline{\sum_i\tr[B]{\Delta\rho_B^2(E_i)}}~,&\ t \gg t_H, \\
    \frac{t}{\beta\pi}\overline{\sum_i\Omega^{-1}(E_i)\tr[B]{\Delta\rho_B^2(E_i)}}~,&\ t \ll t_H .
    \end{dcases}
\end{align}
Together with the expression for the full SFF [$t_1=t_2$ in Eq.~\eqref{eq:unequalSFF2}] and the constant contribution from the fluctuating part, this directly leads to Eq.~\eqref{eq:pSFF_ETH} in the main text.

\section{Constraints from eigenstate thermalization}
\label{app:subETH}

In this Appendix, we discuss the constraints on the spectrum and ensemble averaged PSFF parameters, $\mathcal{P}_B$ (purity of reduced density matrices), $\delta\mathcal{P}_B$ (fluctuating part) and $\Delta\mathcal{P}_B$ (smooth part), as measures of the extent of delocalization and thermalization of energy eigenstates. In App.~\ref{app:subETHconstraints}, we discuss these constraints based on a qualitative picture of subsystem ETH, paying particular attention to thermalization as a distinct phenomenon from delocalization. We justify this qualitative picture in the subsequent section, first in terms of a version of the original conjecture of subsystem ETH~\cite{subETH} for fully delocalized states in App.~\ref{app:subETH.1}, and argue for its extension to eigenstates of arbitrary delocalization in App.~\ref{app:subETH.2}.

\subsection{PSFF as a probe of thermalization and delocalization}
\label{app:subETHconstraints}
We begin with a qualitative discussion of thermalization (in the sense of subsystem ETH) and delocalization. We work in a `physical basis' - one whose basis vectors are close to pure states in most physically accessible (e.g.\ local \cite{nandkishore2015mbl}) subsystems, such as a product basis of qubits. \textit{Thermalization} then corresponds to a significant overlap of the macroscopic features of eigenstates of nearby energies whose individual components are sufficiently random (and therefore, macroscopically similar), whereas non-thermal behavior is seen when nearby eigenstates do not have a large overlap. This is to be distinguished from the extent of \textit{delocalization} of an eigenstate, which is the number of bases states $\ell \leq D$ that it has a significant probability of being found in.

It is useful to introduce an effective dimension \mbox{$D_A^{\rm{eff}} \leq D_A, \ell$} of the Hilbert space of subsystem $A$, corresponding to the typical number of degrees of freedom of subsystem $A$ over which the eigenstate is delocalized within its support in the physical basis. In particular, $D_A^{\rm{eff}} = D_A$ if the eigenstates appear completely delocalized over subsystem $A$, and more generally $D_A^{\rm{eff}}$ is typically larger for larger $D_A$ (up to $\ell$). For instance, $D_A^{\rm{eff}}$ is a monotonically increasing function of $D_A$ when the latter is varied by successively choosing larger subsystems $A$ containing the previous one; additionally, it increases from $D_A^{\rm{eff}} = 1$ for $D_A = 1$, to $D_A^{\rm{eff}} = \ell$ for $D_A = D$. We also use the notation $O(x)$ to mean a non-negative number whose magnitude is at most of the order of magnitude of $x$, to leading order when $x \gg 1$. In particular, we will take $D \gg D_A, D_B \gg 1$.

Assuming that $D_A^{\rm{eff}}$ is typical for $A$ throughout the spectrum, the purity in subsystem $B$ satisfies,
\begin{equation}
\mathcal{P}_B = (D_A^{\rm{eff}}/\ell) + O(D_A^{\rm{eff}}/\ell)+O(1/D_A^{\rm{eff}}),
\label{eq:purityconstraint1}
\end{equation}
subject to $\mathcal{P}_B \gtrsim (D_A^{\rm{eff}}/\ell), (1/D_A^{\rm{eff}})$.
The first two terms are due to the eigenstate being delocalized in subsystem $B$ with effective dimension $(\ell/D_A^{\rm{eff}})$, with the second term containing larger scale variations of its components. We will call this, the `macroscopic' contribution, which grows with $D_A^{\rm{eff}}$. The last term is due to the randomness of the eigenstate components i.e.\ the `microscopic' contribution, which decays with $D_A^{\rm{eff}}$ (and is also typically bounded from below by $(1/D_A^{\rm{eff}})$). Being a linear combination of the macroscopic and microscopic contribution, the purity shows an initial decay with $D_A^{\rm{eff}}$ for small values of the latter, and eventually a growth for larger values of $D_A^{\rm{eff}} \gtrsim \sqrt{\ell}$. Both  $D_A^{\rm{eff}} = 1,\ell$ correspond to pure states with $\mathcal{P}_B = 1$.

The parameters $\delta\mathcal{P}_B$, $\Delta\mathcal{P}_B$ satisfy the following order-of-magnitude inequalities,
\begin{align}
    \delta\mathcal{P}_B &\gtrsim O(1/D_A^{\rm{eff}}), \label{eq:fluctuationconstraint1}\\
    D_B^{-1}+\Delta\mathcal{P}_B &\lesssim (D_A^{\rm{eff}}/\ell)+O(D_A^{\rm{eff}}/\ell) \label{eq:overlapconstraint1}.
\end{align}
The first inequality is the statement that the fluctuating part must include at least the randomness of eigenstate components; the second says that the smooth part or overlap of such eigenstates can at most contain all their macroscopic features. They are also subject to the constraint  $\mathcal{P}_B = D_B^{-1}+\Delta\mathcal{P}_B+\delta\mathcal{P}_B$, which can be interpreted in the present context as follows: the macroscopic contribution to the purity must be distributed in some manner between the smooth and fluctuating parts (with the exception of the maximally mixed part $D_B^{-1}$); the microscopic contribution is however completely contained in the fluctuating part.

According to ETH, the only difference between \textit{thermal eigenstates} of nearby energies is in their microscopic random fluctuations, with all their macroscopic features completely contained in their overlap. This means that the inequalities in Eqs.~\eqref{eq:fluctuationconstraint1} and \eqref{eq:overlapconstraint1} are satisfied as equalities for thermal eigenstates. In particular, $\delta\mathcal{P}_B$ can only decay with increasing $D_A^{\rm{eff}}$ - a fact that is responsible for the nearly identical dynamics of observables in subsystem $B$ (for large $D_A$) in such eigenstates. In contrast, \textit{non-thermal eigenstates} have at least some of the macroscopic contribution included in the fluctuating part, and therefore satisfy Eqs.~\eqref{eq:fluctuationconstraint1} and \eqref{eq:overlapconstraint1} much further from equality. In this case, the macroscopic contribution to the fluctuating part may even show up as a growth of $\delta\mathcal{P}_B$ with $D_A^{\rm{eff}}$ if the latter is sufficiently large (analogous to the behavior of the purity), for choices of subsystems where the incomplete overlap of neighboring eigenstates remains `visible'. At the same time, all eigenstates trivially satisfy $\delta\mathcal{P}_B = \Delta\mathcal{P}_B = 0$ for $D_A = D$.

We conclude that $\mathcal{P}_B$ is a measure of delocalization of eigenstates, while $\delta\mathcal{P}_B$ and $\Delta\mathcal{P}_B$ are probes of thermalization. Setting $\ell = D$ gives the results discussed in the main text for chaotic systems with fully delocalized eigenstates (Sec.~\ref{sec:ethConstraints}). For fully localized systems, $\ell = O(1)$ gives $D_A^{\rm{eff}} = O(1)$, with $\mathcal{P}_B = O(1)$ and $\delta\mathcal{P}_B = O(1) \lesssim (1-D_B^{-1})$, automatically implying a lack of thermalization (Sec.~\ref{sec:pSFFlocalized}). Additionally, the same results hold when the PSFF is defined only over a portion of the spectrum, where the parameters merely become averages over that portion of the spectrum. This suggests that such a filtered~\cite{Gharibyan2018} PSFF can access equivalent information about the properties of a smaller set of eigenstates of interest.

\subsection{Subsystem ETH constraints}

\subsubsection{Fully delocalized eigenstates}
\label{app:subETH.1}
Subsystem ETH~\cite{subETH} is a hypothesis concerning the behavior of energy eigenstates in a chaotic system, applying in its original version to fully delocalized eigenstates. It states that the eigenstates are of such a form as to lead to the thermal behavior of all observables on subsystem $B$, when it is a physically accessible subsystem - in the sense of diagonal and off-diagonal ETH (e.g.\ as presented in the reviews \cite{DAlessio2016,deutsch2018eth}). Denoting the eigenstates by $\lvert E\rangle$, there are two statements of the hypothesis: the diagonal statement stating that the reduced density matrix $\rho_B(E) = \tr[A]{\lvert E\rangle\langle E\rvert}$ is close to some smooth density matrix $P_B(E)$ that does not vary rapidly with energy,  and the off-diagonal statement requiring the reduced transition operators $q_B(E_1,E_2) = \tr[A]{\lvert E_1\rangle\langle E_2\rvert}$ with $E_1\neq E_2$ to be small. We will adapt these statements, in their subsystem dependent version (which doesn't need the restriction $D_B\ll D_A$ to few-body subsystems), for our present context as follows:
\begin{align}
    \tr[B]{\left(\rho_B(E)-P_B(E)\right)^2} &= O(D_A^{-1}), \label{eq:diagSubETH}\\
    \tr[B]{q_B^2(E_1,E_2)} &= O(D_A^{-1}), \label{eq:offdiagSubETH}
\end{align}
where we use the notation $x^2 = x x^\dagger$ for an operator $x$ for simplicity. Eqs.~\eqref{eq:diagSubETH} and \eqref{eq:offdiagSubETH} should be considered leading order constraints on the order of magnitude of these quantities when $D_A,D_B \gg 1$, as noted in the main text. They are also slightly different in some minor technical details from the main statements of Ref.~\cite{subETH}, which we will refer to as the `original conjecture' in this appendix, and we will now comment on these differences.

We replace the density of states $\Omega(E)$ with its $O(D)$ scaling behavior in all subsequent discussions though the original conjecture is stated in terms of $\Omega(E)$. This is justified by assuming an $O(1)$ spectral width for the $D$ energy levels and that $\Omega(E)$ is of a comparable order of magnitude throughout the spectrum (consistent with e.g.\ $t_H = O(D)$ in fully chaotic systems). As the PSFF involves averages over the entire spectrum, it is only this scaling behavior that is of interest to us rather than $\Omega(E)$-dependent variations in smaller regions of the spectrum.

The smallness of $(\rho_B-P_B)$ and $q_B$ are enforced above by requiring the trace of their squares $\tr[B]{x^2}$ (which we will generally call purity) to be $O(D_A^{-1})$. However, the original conjecture is stated in terms of the trace norm $(1/2)\tr[B]{(x^2)^{1/2}}$ restricted to be $O(\sqrt{D_B/D_A})$. As Ref.~\cite{subETH} notes, on account of the inequality $\left\lbrace\tr[B]{(x^2)^{1/2}}\right\rbrace^2 \leq D_B \tr[B]{x^2}$ the constraints in terms of purity would imply the original conjecture but are also slightly stronger, and it is in fact these stronger constraints that they verify numerically. We use the stronger statement because it is more convenient for our purposes, and also because there appears to be no compelling theoretical reason to rule out such stronger statements in general. For instance, Ref.~\cite{subETH} motivates the diagonal statement of the original conjecture in terms of the trace norm based on analogous canonical typicality \cite{gemmer2001quantum,tumulka_CT} constraints for the thermalization of Haar-random superpositions of energy eigenstates derived in Refs.~\cite{PopescuShortWinter, PopescuShortWinter2}; but in the process of the derivation in the latter, constraints in terms of purity similar to Eq.~\eqref{eq:diagSubETH} are also seen to hold. We also note that the purity constraints remain $< O(1)$ for $D_B > D_A$, whereas the corresponding constraints on the trace norm (which cannot be greater than $1$ for differences of density matrices~\cite{NielsenChuang}) are $> O(1)$ and therefore meaningless in this regime. The original conjecture had to restrict the subsystem-dependent form to $D_B<D_A$ (in our notation) for this reason. However, in Sec.~\ref{sec:psffNumerical} of the main text, we find numerical support for the validity of Eqs.~\eqref{eq:diagSubETH} and \eqref{eq:offdiagSubETH} even for $D_B > D_A$.

Finally, we note that the smooth reduced density matrix $P_B(E)$ is not precisely characterized in Ref.~\cite{subETH} - but it is also unnecessary to be too precise in specifying it as Eq.~\eqref{eq:diagSubETH} is only an order-of-magnitude constraint. Here, in analogy with Eq.~\eqref{eq:rho_def_smooth}, we will define $P_B(E)$ to be that part of $\rho_B(E)$ that varies slower than some rate $t_s$,
\begin{equation}
    P_B(E) = \int\frac{\diff\tau}{2\pi}e^{iE\tau} \Theta\left(t_s-\lvert\tau\rvert\right)\int\diff E'\ e^{-iE'\tau}\rho_B(E'),
\end{equation}
effectively amounting to a weighted average of $\rho_B(E)$ over energy windows of size $\sim t_s^{-1}$. We will assume Eq.~\eqref{eq:diagSubETH} is satisfied for any choice of $t_s$ larger than some minimum magnitude $\sim t_{\rm ETH} \ll O(D)$ (intuitively, because the more the smooth part is allowed to fluctuate, the more closely it can approximate $\rho_B(E)$). Then, if our ensemble is such that $t_\rho \gtrsim t_{\rm ETH}$, we can choose $t_s = t_\rho$. This  allows the identification $P_B(E) = D_B^{-1}+\Delta\rho_B(E)$ in the decomposition $\rho_B(E) = D_B^{-1}+\Delta\rho_B(E)+\delta\rho_B(E)$ of Eq.~\eqref{eq:rho_separation}. Eq.~\eqref{eq:diagSubETH} then gives,
\begin{equation}
    \tr[B]{\delta\rho_B^2(E)} = O(D_A^{-1}).
    \label{eq:deltarho_constraint}
\end{equation}
The constraint $\delta\mathcal{P}_B = O(D_A^{-1})$ then follows directly from here.

To similarly obtain a condition from Eq.~\eqref{eq:offdiagSubETH} that applies directly to the PSFF, we note that this equation can be rewritten in terms of  reduced density matrices of the complementary subsystem $A$ as
\begin{equation}
    \tr[A]{\rho_A(E_1)\rho_A(E_2)} = O(D_A^{-1}).
    \label{eq:overlapconstraint}
\end{equation}
On taking the ensemble average, and using the expansion of $\rho_A(E)$ in terms of its smooth and fluctuating parts, the contribution from the fluctuating part $\delta\rho_A(E)$ to the left hand side vanishes due to the randomization assumption in Sec.~\ref{sec:pSFFchaotic}. We are then left with $D_A^{-1}+\overline{\tr[A]{\Delta\rho_A(E_1)\Delta\rho_A(E_2)}}$, in which we can take $E_1-E_2 \ll t_\rho^{-1}$ (e.g.\ neighboring levels) so that the second term is approximately $\overline{\tr[A]{\Delta\rho_A^2(E_1)}}$. From this, we get the smooth purity constraint $\Delta\mathcal{P}_A = O(D_A^{-1})$ on taking the appropriate spectrum averages. In the context of $\Delta\mathcal{P}_B$ (and $\widetilde{\Delta\mathcal{P}}_B$) in the main text, these purities are evaluated in subsystem $B$ rather than $A$, and the corresponding constraints are therefore consequences of off-diagonal subsystem ETH, Eq.~\eqref{eq:offdiagSubETH}, applied to subsystem $A$ instead of $B$.

\subsubsection{Extension to partially delocalized eigenstates}
\label{app:subETH.2}

We begin with a complementary approach to that of the previous subsection, to argue that the purity based expressions of subsystem ETH should generally hold for chaotic systems with fully delocalized eigenstates. Consider requiring each matrix element of $\rho_B(E)$ to differ from the corresponding matrix element of $P_B(E)$ only by a small amount $O(\sqrt{D_A}/D)$, as a stronger diagonal statement that implies Eq.~\eqref{eq:diagSubETH} (a weaker, $D_A$-independent version of such a statement is also considered in Ref.~\cite{subETH}). To justify this constraint, we consider the following situation. Let $\lvert E_1\rangle$ and $\lvert E_2\rangle$ be two `typical' nearby eigenstates that are completely delocalized over the $D$ basis vectors (in some `physical' product basis of subsystems $A$ and $B$) with random (real or complex) phases. Their density operators $\rho(E_1) = \lvert E_1\rangle\langle E_1\rvert$ and $\rho(E_2) = \lvert E_2\rangle\langle E_2\rvert$ have matrix elements of the schematic form
\begin{equation}
    \rho_{ab}(E) \sim O(D^{-1})e^{i\phi_a-i\phi_b}.
\end{equation}
The difference $\rho(E_1)-\rho(E_2)$, after a partial trace over $A$, can be taken to represent the fluctuations of $\rho_B(E)$ around $P_B(E)$. Given our above assumptions on the eigenstates, the matrix elements of $\rho(E_1)-\rho(E_2)$ are typically $\sim O(D^{-1})$ in magnitude with random signs or phases (i.e.\ with zero $2$-point correlation, which crucially requires even large-scale non-uniformities in the magnitudes to agree up to random fluctuations).
The sum of $D_A$ such matrix elements in the partial trace over subsystem $A$ then has magnitude $O(\sqrt{D_A}/D)$, justifying the above constraint. Similarly, the operator $q(E_1,E_2) = \lvert E_1\rangle \langle E_2\rvert$ for such eigenstates has $O(D^{-1})$ matrix elements with random phases, giving $O(\sqrt{D_A}/D)$ matrix elements after the partial trace and therefore the off-diagonal statement Eq.~\eqref{eq:offdiagSubETH}. Such a picture of random energy projector matrix elements of comparable magnitudes is reminiscent of Berry's conjecture for chaotic wavefunctions \cite{Berry1977} (as well as other related statements e.g.\ Refs.~\cite{deutsch1991eth,deutsch2010eigenstates,lu2019renyi, murthy2019eigenstates}), which has been interpreted as the origin of eigenstate thermalization in chaotic systems \cite{deutsch1991eth,srednicki1994eth}.

Using an analogous argument for eigenstates that are not necessarily delocalized over all $D$ basis vectors, we can clearly highlight the difference between delocalization and thermalization, and the distinct information contained in the overall purities as opposed to the smooth and fluctuating parts of the reduced density matrices. For this purpose, consider an eigenstate $\lvert E_1\rangle$ that is randomly (but not necessarily uniformly) distributed only over a set of $\sim \ell \leq D$ `physical' basis vectors, with negligible support outside this set. Its density matrix $\rho(E_1)$ then has an $\ell \times \ell$ block (after suitably permuting rows and columns) of non-vanishing elements each of typical magnitude $O(\ell^{-1})$, and all elements outside this block may be taken to vanish. As always, all the diagonal elements are strictly non-negative and add to $1$, while the independent off-diagonal elements could have arbitrary signs or phases (which are typically random). Thus, we have the schematic form,
\begin{equation}
    \rho_{ab}(E_1) \sim \left[O(\ell^{-1}) e^{i\phi_a-i\phi_b}\right]\overline{\Theta}(1 \leq \lbrace p_a,p_b\rbrace \lesssim \ell),
\end{equation}
where $\overline{\Theta}(x) = 1$ if $x$ is true and $0$ otherwise, and $p_k$ denotes the index corresponding to $k$ after a permutation $p$ of rows/columns.

The behavior of $\rho(E_1)$ under a partial trace depends on the choice of the subsystem $A$. We will choose subsystems which can be traced out by factorizing the chosen basis (which means the basis states are pure states within the subsystem). This identifies a class of subsystems which are sensitive to the specific extent of delocalization $\ell$ of eigenstates; in a more general basis in the Hilbert space, the eigenstates may appear delocalized by an arbitrary extent, including fully localized in the energy eigenbasis and generically fully delocalized ($\ell = D$) in a Haar random basis according to canonical typicality~\cite{tumulka_CT,PopescuShortWinter}. An equivalent, more physically motivated viewpoint is that the extent of delocalization of eigenstates $\ell$ should be determined by their minimum such delocalization in bases comprised of nearly pure states (e.g.\ a product basis) in most physically accessible subsystems - so that a small subset of eigenstates may be treated as if they each have $\ell$ independent random components (neglecting the global constraint of orthonormality) under a (sufficiently small) partial trace.

For convenience, we first consider the case where the eigenstate looks fully delocalized in subsystem $A$ within its support on the physical basis - in other words, the partial trace over $A$ does not mix the zero and nonzero elements of $\rho(E_1)$. In this appendix, we will call such a subsystem $A$ an \textit{unbiased} subsystem (from the point of view of the eigenstate of interest). Then $\rho_B(E_1)$ has an $\sim (\ell/D_A)\times (\ell/D_A)$ non-vanishing block with non-negative diagonal elements of magnitude $O(D_A/\ell)$, and off-diagonal elements of typical magnitude $O(\sqrt{D_A}/\ell)$ in the case of an eigenstate with random phases (as long as the partial trace combines several basis vectors where the eigenstate has comparable magnitudes). Now, we can evaluate the purity $\tr[B]{\rho_B^2(E_1)}$, which sees a net contribution of $O(D_A/\ell)$ from the diagonal elements and $O(D_A^{-1})$ from the off-diagonal elements. Additionally, normalization requires that the diagonal elements must add up to $1$, therefore the sum of their squares is greater than or equal to $\sim \ell/D_A$ - the inverse of the number of diagonal elements. Their contribution to the purity can then be written in a more descriptive form as $[(D_A/\ell)+O(D_A/\ell)]$, giving
\begin{equation}
    \tr[B]{\rho_B^2(E_1)} = (D_A/\ell)+O(D_A/\ell)+O(D_A^{-1}).
\label{eq:purity_partlylocal}
\end{equation}
Thus, we can extract information about the extent of delocalization, $\ell$, by looking at the subsystem size dependence of the purity. We note that the purity can also be written as $\tr[A]{\rho_A^2(E_1)}$ from the viewpoint of subsystem $A$ giving an additional lower bound of $D_A^{-1}$, which is mostly contained in the $O(D_A^{-1})$ term for $D_A \gg 1$ (as the diagonal contribution to purity from $\rho_A(E_1)$ is primarily due to contributions from the off-diagonal elements of $\rho_B(E_1)$).

A nearby eigenstate $\lvert E_2\rangle$ that is also distributed only across $\ell$ basis vectors (but not necessarily the same ones or in the same way as $\lvert E_1\rangle$) again shows a subsystem purity of the form of Eq.~\eqref{eq:purity_partlylocal}. The two eigenstates thermalize if their reduced density matrices do not differ significantly, in small enough subsystems that trace out a lot of the independent eigenstate components. This would be the case if these two eigenstates are distributed across roughly the same $\ell$ basis vectors in a largely similar manner (up to random fluctuations). 
From this point of view, subsystem ETH is a qualitative identification of the thermalization of a set of otherwise random-looking eigenstates with the extent of their `overlap' within subsystems, rather than merely with entanglement as represented by their individual purities (the latter being the canonical typicality approach that is only sufficient for fully, uniformly delocalized random eigenstates as in App.~\ref{app:pSFF-RMT}).

We now consider two illustrative extreme cases of fully overlapping (thermal) and fully non-overlapping (non-thermal) eigenstates. In both cases, we will be interested in $\tr[B]{(\rho_B(E_1)-\rho_B(E_2))^2}$ as a representative of the size of the fluctuating part $[\rho_B(E)-P_B(E)]$ of reduced energy eigenstates in subsystem $B$, as well as the (real-valued) overlap $\tr[B]{\rho_B(E_1)\rho_B(E_2)}$ which is equal to the norm of off-diagonal operators $\tr[A]{q_B(E_1,E_2)q_B(E_2,E_1)}$ in subsystem $A$. These are complementary quantities, being related to the subsystem purities of the individual eigenstates via
\begin{align}
&\tr[B]{(\rho_B(E_1)-\rho_B(E_2))^2} + 2 \tr[B]{\rho_B(E_1)\rho_B(E_2)} \nonumber \\
&\hphantom{\rho_B(E_1)} = \tr[B]{\rho_B^2(E_1)}+\tr[B]{\rho_B^2(E_2)}.
\label{eq:PurityOverlapComplementarity}
\end{align}
This relation quantifies the identification of thermalization with overlap.

\begin{itemize}
    \item \textit{Thermal eigenstates}: If $\lvert E_1\rangle$ and $\lvert E_2\rangle$ are distributed in a similar manner across the same basis vectors, then again has $(\rho(E_1)-\rho(E_2))$ an $\ell\times\ell$ block structure, with random $O(\ell^{-1})$ off-diagonal elements within the block. However, the diagonal elements, being differences of random $O(\ell^{-1})$ non-negative numbers, also have at most $O(\ell^{-1})$ magnitudes with random signs (if large scale non-uniformities match), and largely cancel each other out in a partial trace. After the partial trace, all matrix elements of $(\rho_B(E_1)-\rho_B(E_2))$ in an $\sim (\ell/D_A)\times (\ell/D_A)$ non-vanishing block are therefore only $O(\sqrt{D_A}/\ell)$ in magnitude, and we have
    \begin{equation}
        \tr[B]{(\rho_B(E_1)-\rho_B(E_2))^2} = O(D_A^{-1}),
        \label{eq:thermalfluctuations}
    \end{equation}
    consistent with diagonal ETH Eq.~\eqref{eq:diagSubETH} in subsystem $B$. For the overlap $\tr[B]{\rho_B(E_1)\rho_B(E_2)}$, the positivity and normalization of the diagonal matrix elements of each reduced density matrix ensures that their contribution is of the form $[(D_A/\ell) + O(D_A/\ell)]$. The products of the off-diagonal matrix elements add up with random phases, leading to a negligible $O(\ell^{-1})$ contribution. We therefore have
    \begin{equation}
        \tr[B]{\rho_B(E_1)\rho_B(E_2)} = (D_A/\ell) + O(D_A/\ell),
        \label{eq:thermaloverlaps}
    \end{equation}
    which is the analog of off-diagonal ETH, Eq.~\eqref{eq:offdiagSubETH}, for subsystem $A$.
    \item \textit{Non-thermal eigenstates}: In the non-thermal case, $\lvert E_1\rangle$ and $\lvert E_2\rangle$ are distributed in completely different ways, and the diagonal elements of $(\rho(E_1)-\rho(E_2))$ do not have completely random signs among elements with comparable magnitudes. Consequently, there is no longer a significant cancellation of the diagonal elements in a partial trace for a general choice of $A$. The fluctuating part $\tr[B]{(\rho_B(E_1)-\rho_B(E_2))^2}$ is then typically much larger than $O(D_A^{-1})$ with some $O(D_A/\ell)$ contribution, and the overlap is correspondingly smaller. In the extreme case of the two eigenstates being distributed across completely different basis vectors, $(\rho(E_1)-\rho(E_2))$ has two different $\ell\times\ell$ blocks, and the reduced difference in subsystem $B$ also has the structure of two independent blocks. We then obtain behavior analogous to the subsystem purities,
    \begin{align}
        &\tr[B]{(\rho_B(E_1)- \rho_B(E_2))^2} \nonumber \\
        &\sim 2\left[(D_A/\ell)+O(D_A/\ell)+O(D_A^{-1})\right],
        \label{eq:nonthermalfluctuations}
    \end{align}
    while the overlap for this case vanishes entirely,
    \begin{equation}
        \tr[B]{\rho_B(E_1)\rho_B(E_2)} = 0.
    \end{equation}
\end{itemize}
We note that these trends hold only for $D_A < \ell$, due to the assumption on subsystem $A$. The reduced energy eigenstates in subsystem $B$ are pure basis states when $D_A = \ell$, and behave accordingly on a further partial trace.

The fluctuations in reduced energy eigenstates and their overlaps therefore contain information about eigenstate thermalization that is not visible to the purity alone, which is merely an indicator of eigenstate delocalization. We also see that, at least for `typical' eigenstates, diagonal subsystem ETH should be understood (in a coarse, order of magnitude sense) as a lower bound relation, while off-diagonal subsystem ETH is a complementary upper bound relation, related through Eq.~\eqref{eq:PurityOverlapComplementarity} to each other and the subsystem purities. In place of Eqs.~\eqref{eq:diagSubETH}, and \eqref{eq:offdiagSubETH}, we can therefore write the more general relations for partially delocalized eigenstates,
\begin{align}
    \tr[B]{\left(\rho_B(E)-P_B(E)\right)^2} &\gtrsim O(D_A^{-1}), \label{eq:diagSubETHgeneral}\\
    \tr[A]{q_A^2(E_1,E_2)} &\lesssim (D_A/\ell)+O(D_A/\ell), \label{eq:offdiagSubETHgeneral}
\end{align}
when $A$ is an unbiased subsystem, with $D_A \leq \ell$. Both bounds are saturated by thermal eigenstates.

For greater completeness of the present discussion, we should account for a more typical choice of subsystem $A$ - one that would mix the zero and non-zero elements of these eigenstate reduced density matrices on performing the partial trace over $A$. We will consider such a typical subsystem to have an effective dimension $D_A^{\rm{eff}} \leq D_A$, corresponding to the typical number of non-zero density matrix elements added together in the partial trace. This can be thought of as a generalization of the notion of effective dimension, discussed for the case of infinite dimensional Hilbert spaces in Ref.~\cite{subETH}. We ignore the more complicated case where the number of matrix elements added together is not approximately uniform for all nonzero matrix elements (and therefore, no effective subsystem dimension exists), with the belief that it would not significantly alter our qualitative conclusions. When the effective dimension does exist, all the above conclusions hold for any system but with $D_A$ replaced by the smaller quantity $D_A^{\rm{eff}}$. As an aside, Eqs.~\eqref{eq:diagSubETHgeneral} and ~\eqref{eq:offdiagSubETHgeneral} continue to hold even without this replacement, but are then not necessarily saturated by thermal eigenstates unless $A$ is an unbiased subsystem.

As a simple example, if subsystem $A$ is unbiased with respect to a set of eigenstates of interest, then its complementary subsystem $B$ has effective dimension $D_B^{\rm{eff}} = \ell/D_A$ (note that $B$ is not unbiased). Using this, we can finally write off-diagonal ETH for subsystem $B$ and diagonal ETH for subsystem $A$ as follows,
\begin{align}
    \tr[B]{q_B^2(E_1,E_2)} &\lesssim D_A^{-1}+O(D_A^{-1}), \label{eq:offdiagSubETHgeneral'} \\
    \tr[A]{\left(\rho_A(E)-P_A(E)\right)^2} &\gtrsim O(D_A/\ell). \label{eq:diagSubETHgeneral'}
\end{align}
More generally, expressing Eqs.~\eqref{eq:diagSubETHgeneral},~\eqref{eq:offdiagSubETHgeneral} in terms of $D_A^{\rm{eff}}$ gives the constraints discussed in App.~\ref{app:subETHconstraints}.

\section{Time-reversal symmetric Floquet thermalization}
\label{app:coe-u2}
In this section, we consider another Floquet model of periodically kicked spin-$1/2$ system. We consider one period of duration $\tau$ to be, \begin{equation}
V_2= e^{-i H^{(x)}\tau/2} e^{-i H^{(y)}\tau/2} ~.
\label{eq:Floquet-illustration-V2}
\end{equation}
At multiples $t=n\tau$ ($n\in \mathbb N$) the  time evolution of this model is governed by the Floquet time evolution operator $T(t=n\tau)=V_2^n$. The Hamiltonians $H^{(x, y)}$ are  $H^{(x,y)}= J\sum_{i=1}^{N-1} \sigma^{(x,y)}_i \sigma^{(x,y)}_{i+1} + \sum_{i=1}^N h_i^{(y,z)} \sigma^{(y,z)}_i$ where the local disorder potentials $h_i^{(y,z)}$  are uniformly and independently sampled from $[-J, J]$. We fix the driving frequency to $\tau^{-1}= J/2$.  With these parameters, the time evolution operator $V_2^n$ is known to have COE eigenvalue statistics after a few initial kicks \cite{Regnault2016}. 
\begin{figure}[!ht]
\includegraphics[width=\linewidth]{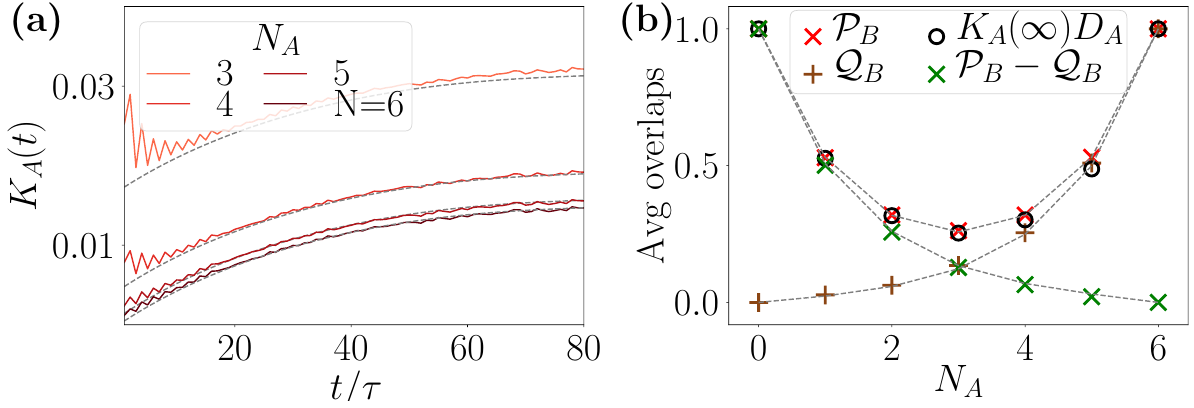}
\caption{\textit{Results for the Floquet $V_2$ model:} (a) The SFF and PSFF are presented for $N=6$, $N_A=3,~4$ and $5$ in red colors. In gray, we plot the same quantities in a COE model. (b) The plateau value $K(\infty)$ times the subsystem dimension $D_A$ is plotted in black circles and matches with the averaged purity $\mathcal{P}_B$ plotted with red crosses. The average overlap $\mathcal{Q}_B$ and the difference $\mathcal{P}_B-\mathcal{Q}_B$ are presented in brown and green respectively. We observe a  match with the respective quantities in COE plotted in gray, signaling the same averaged eigenvalue and eigenvector statistics in COE and $V_2$.}
\label{fig:ManyBodyV2}
\end{figure}

We present numerically obtained SFF and PSFF for a total system size of $N=6$ and subsystem sizes $N_A=3, 4$ and $5$ in Fig.~\ref{fig:ManyBodyV2}(a) and observe a shift in the PSFF in addition to the characteristic chaotic features; the ramp and the plateau. We plot with gray lines the corresponding $K_A(t)$ in a COE model where the analytic forms have been exactly calculated [see Eq.\ \eqref{eq:app_pSFF_O}], and observe a good match between $V_2$ and COE. The  match between the statistics of COE and $V_2$ can further be explored using the second-order moments of the reduced density of eigenstates.

In Fig.~\ref{fig:ManyBodyV2}(b) we present the overlaps  $\mathcal{P}_B$ and $\mathcal{Q}_B$ as functions of subsystem size $N_A$. We plot numerically obtained $K_A(\infty)\DA$ in black circles, and the average purity $\mathcal{P}_B$ with red crosses and note a good match between the two. Note that unlike the SFF in the unitary class where the transition to plateau at the Heisenberg time is sharp, the transition to a constant plateau takes a long time in an orthogonal model. This is why we observe slight differences in the numerically calculated plateau value and the purity in Fig.~\ref{fig:ManyBodyV2}(b). The average overlap $\mathcal{Q}_B$ and the difference $\mathcal{P}_B-\mathcal{Q}_B$ are plotted in brown and green circles respectively and match with those in COE. Therefore, similar to RMT models and the model $V_3$ (in Sec.~\ref{sec:psffNumerical}), the Floquet dynamics $V_2$ also has $\Delta\mathcal{P}_B=0$. Thus, numerically we confirm that the reduced densities in the Floquet system $V_2$ also thermalize to infinite temperature and the ramp is governed entirely by maximally mixed part of $\rho_B(E)$. From the plots of $\mathcal{P}_B-\mathcal{Q}_B$ (in green) we conclude that the constant term added to the SFF is  $\sim 1/D_A^2$, as in the RMT models. 

\section{Additional numerical results for Ising Hamiltonian dynamics}
\label{app:Ising}
In the main text, we considered the Ising Hamiltonian in Eq.~\eqref{eq:Hamiltonian} as an example of local many-body models. In this Appendix, we provide some supporting data which were used in the main section. We first begin with analyzing the interesting set of parameters for which we observe chaotic and localized phases in the Hamiltonian model. In Sec.~\ref{sec:psffAnalytical}, we  derived the orders for the purity and overlaps of the reduced density matrices on the basis of ETH, and presented them  numerically in Sec.~\ref{sec:psffNumerical}. Here, we provide  some additional information on the  numerics used to extract the orders for the Hamiltonian model. In the last subsection, we numerically cross-check the shift in the PSFF data with the shift $\delta \mathcal{P}_B$ calculated using Eq.\ \eqref{eq:PQ-numerics}, where in the latter we directly use the reduced densities of eigenstates.
\subsection{Chaotic and MBL regimes in Ising Hamiltonian}
\label{app:IsingGapRatio}
We explain our choice of parameters in the Ising Hamiltonian Eq.\ \eqref{eq:Hamiltonian}. The Hamiltonian contains $ZZ$ interactions with strength $J$ and the range of interactions is given by $\alpha$. It has a transverse field with strength $J$ and a longitudinal local random disordered field with strength $W$. Our interests lie in the parameters such that the Hamiltonian dynamics is either in the chaotic phase or in the localized phase. For this purpose, we analyze the energy level statistics, using the \textit{adjacent energy gap ratio}. 
\begin{figure}[!ht]
\label{density-plot}
\includegraphics[scale=0.16]{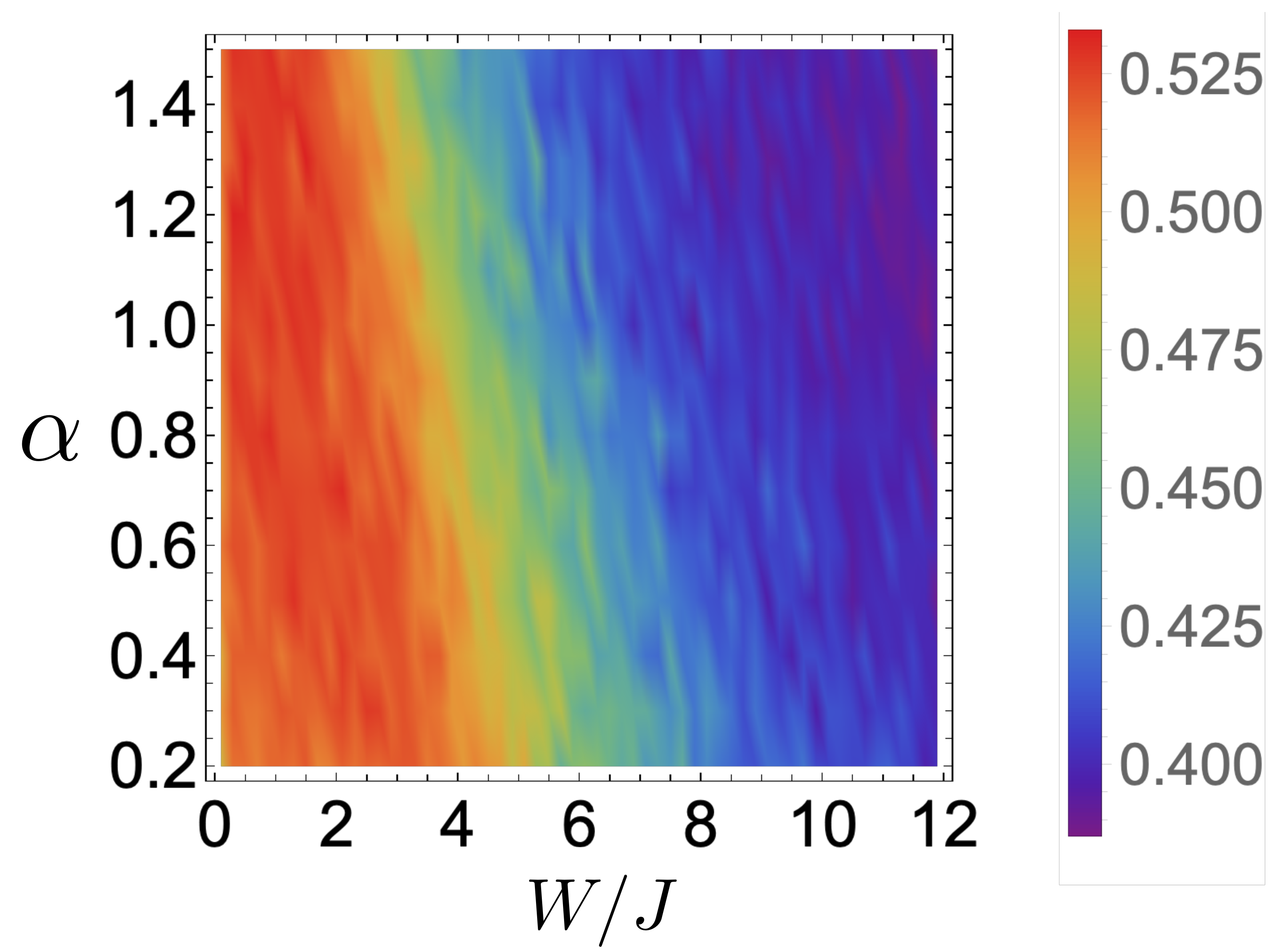}
\caption{\textit{Density plot for the mean adjacent gap ratio $\langle r_m\rangle$.} The behavior of the mean adjacent gap ratio $\langle r_m\rangle$ as a function of disorder strength $W/J$ and range of interactions  $\alpha$ is presented. We notice the presence of chaotic ($\langle r_m\rangle \sim 0.53$) and localized  ($\langle r_m\rangle \sim 0.39$) phases for a wide range of $\alpha$. We have worked with $\alpha=1.2$, $W=J$ (chaotic), and $W=10J$ (MBL).} 
\label{fig:densityplot} 
\end{figure}
From the sorted energy eigenvalues $E_1< E_2<\dots <E_\D$ we compute the energy gaps $\Delta E_m=E_{m+1}-E_{m}$. Then we find,  the adjacent energy gap ratio 
\be
r_m = \frac{\text{min}(\Delta E_m, \Delta E_{m+1})}{\text{max}(\Delta E_m, \Delta E_{m+1})}~.
\ee
Integrable systems are characterized by a mean ratio of $\langle r_m \rangle \approx 0.39$ whereas the chaotic systems with time-reversal symmetry, obeying GOE Wigner-Dyson energy level statistics have a mean $\langle r_m\rangle\approx 0.53$. We use this mean value of $r_m$ to choose the parameters for the chaotic and localized phase in our Hamiltonian model. In a density plot of the mean $\langle r_m\rangle$ as a function of $W/J$ and $\alpha$ in Fig.~\ref{fig:densityplot}, we notice that the chaotic and localized phases exist for both the short ($\alpha>1$) and the long ($\alpha<1$) range of interactions. In this work, to discuss the two phases, we have chosen the parameters to be $\alpha=1.2$, $W/J=1$ (chaotic) and $W/J=10$ (localized). 

\subsection{Orders of magnitude of $\Delta \mathcal{P}_B$ and $\delta \mathcal{P}_B$}
\label{app:IsingOrders}
The subsystem ETH specifies the orders of magnitude for $\Delta\mathcal{P}_B$ and $\delta \mathcal{P}_B$ to be $O(1/\DB)$ and $O(1/\DA)$ respectively for the chaotic models. For the localized models, which are known to not satisfy ETH, we concluded that the shift coefficient, $\delta \mathcal{P}_B\gg O(1/\DA)$. 
We note that these orders for the chaotic phase, expressed in terms of $\mathcal{P}_B$ and $\mathcal{Q}_B$ as deviations from the RMT prediction, amount to, 
\begin{align}
 D_B\Delta \mathcal P_B &= \DB\mathcal{Q}_B  -1\approx O(1) \nonumber \\
\text {and}\qquad D_A\delta \mathcal P_B-1 &= \DA({\mathcal{P}_B} - {\mathcal{Q}_B} )-1\approx O(1) .
\label{eq:orders}
\end{align}
\begin{figure}[ht]
\includegraphics[width=\linewidth]{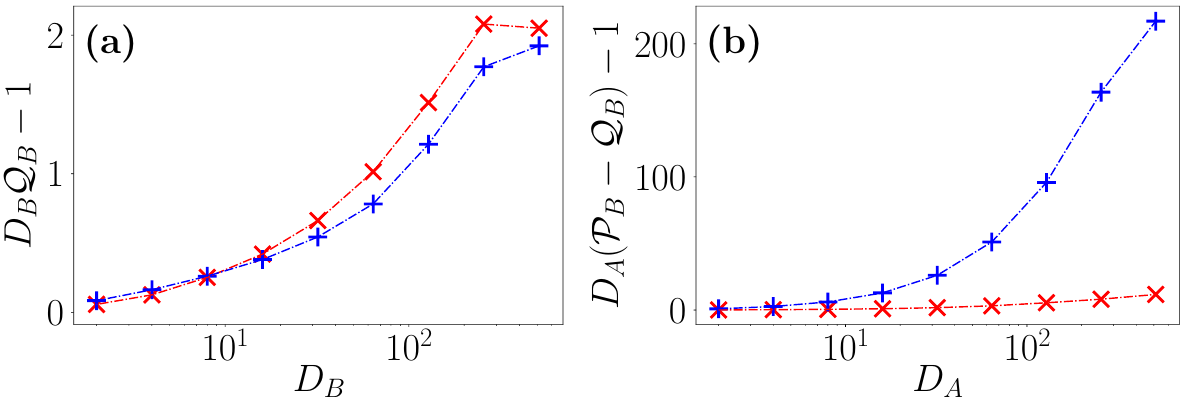}
\caption{\textit{Validating the orders.} We test the validity of Eq.~\eqref{eq:orders} for the chaotic and MBL phases of the Hamiltonian. As expected, the chaotic phase, in red, satisfies the predicted order and the MBL phase, in blue, violates it (in the right plot for $\mathcal{P}_B-\mathcal{Q}_B$).}
    \label{fig:orders}
\end{figure}

In Fig.~\ref{fig:orders}, we plot these quantities for the Hamiltonian model,
Eq.\ \eqref{eq:Hamiltonian}, for a total of $N=10$ qubits. We find that the chaotic phase $W=J$ (in red) satisfies the ETH results whereas for the localized phase $W=10J$ (in blue), the shift coefficient $\mathcal{P}_B-\mathcal{Q}_B \gg 1/\DA$, as predicted in the Sec.~\ref{sec:psffAnalytical}.

\subsection{Comparison of the PSFF shift and $\delta\mathcal{P}_B$}
\label{app:IsingShift}

Here, we numerically verify the prediction of Eq.~\eqref{eq:pSFF_ETH} for the constant late time shift of the PSFF in the chaotic phase - namely, that the shift is given by $\delta\mathcal{P}_B/D_A$ in the ramp region. For this purpose, we subtract the full SFF from the PSFF at some time $t_0$ in the linear ramp region, satisfying $t_{\rm Th}, t_\rho \ll t_0 \ll t_H$, which gives
\begin{equation}
    K_A(t_0) - K(t_0) = \frac{\delta\mathcal{P}_B}{D_A} + t_0\left(\frac{\gamma}{\beta\pi D^2}D_B\widetilde{\Delta\mathcal{P}}_B\right).
    \label{eq:kat0}
\end{equation}
This difference has two contributions - the first term is the additive shift which we are presently interested in, but the second term is due to the differing slopes of the linear ramp, from the excess purity of the smooth part of the reduced density matrix. We will now argue that it is reasonable to take
\begin{equation}
 K_A(t_0) - K(t_0) \approx \frac{\delta\mathcal{P}_B}{D_A},
 \label{eq:shiftapprox}
\end{equation}
for our purposes. In Eq.~\eqref{eq:kat0}, by subsystem ETH, the former is $O(D_A^{-2})$ while the latter is $\sim O(D^{-1}(t_0/t_H))$ (taking $\gamma \sim O(1)$, consistent with $t_H \sim O(D)$). The second term is therefore negligible if  $t_0/t_H \ll O(D_B/D_A)$. This is immediately satisfied for any $t_0$ in the linear ramp region if $D_A < D_B$; conversely, for a given choice of $t_0$, $D_A$ can be as large as $\sim \sqrt{Dt_H/t_0}$ while maintaining the validity of Eq.~\eqref{eq:shiftapprox}. As $t_0 \ll t_H$ in general, we expect Eq.~\eqref{eq:shiftapprox} to be a reasonable approximation for a range of values of $D_A > \sqrt{D}$ as well. A minor additional effect that improves this approximation is that for large $D_A$, the coefficient $D_B\widetilde{\Delta\mathcal{P}}_B$ of the second term would be small, though still $O(1)$, from Fig.~\ref{fig:orders} (as $D_B\widetilde{\Delta\mathcal{P}}_B \sim [D_B Q_B-1]$).

On the basis of Eq.~\eqref{eq:shiftapprox} and the relation $\delta\mathcal{P}_B = \mathcal{P}_B-\mathcal{Q}_B$ from Eq.~\eqref{eq:PQ-numerics}, we compare $D_A(K_A(t_0) - K(t_0))$ for some suitably chosen $t_0$ to $\mathcal{P}_B-\mathcal{Q}_B$ in Fig.~\ref{fig:pSFF_shift_verification} and observe good agreement, especially for smaller $D_A$ as expected. We note that this agreement is much closer than, for instance, the difference between $\mathcal{P}_B-\mathcal{Q}_B$ for the Hamiltonian system and the corresponding RMT prediction in Fig.~\ref{fig:manybodyHamil}(d), which is considerable evidence that the origin of the shift is indeed the randomization of the fluctuating part of the reduced energy eigenstates, as discussed in Sec.~\ref{sec:pSFFchaotic}.

\begin{figure}[ht]
\includegraphics[width=\linewidth]{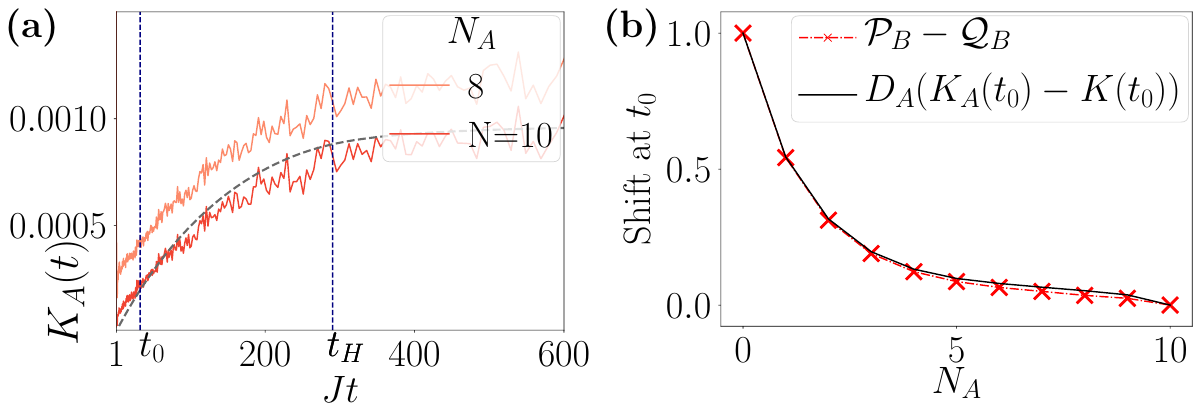}
\caption{\textit{Results on the shift.} (a) Linear-linear plot of the PSFF for $N_A = 8$ and SFF, reflecting the choice of comparison time $t_0$ in the ramp region, with $Jt_0 = 25$; the Heisenberg time $t_H$ (marking the onset of the plateau) and the corresponding GOE SFF (dashed curve) are also shown. (b) Comparison of the scaled PSFF shift $D_A(K_A(t_0) - K(t_0))$ with the predicted value $\delta\mathcal{P}_B = \mathcal{P}_B-\mathcal{Q}_B$ for different subsystem sizes. The total system size is $N=10$ in both plots.}
\label{fig:pSFF_shift_verification}
\end{figure}

\section{Derivation of the measurement protocol}
\label{app:MeasProt}
In this appendix, we show that our measurement protocol indeed allows us to measure the PSFF and SFF. We generalize the proof for the SFF in the main text (Sec.~\ref{sec:proof_mt}) to the PSFF $K_A(t)$ and provide additional mathematical details. Our aim is to prove that $\widehat{K_A(t)}$, as defined in Eq.~\eqref{eq:psffmeas}, is an unbiased estimator of $K_A(t)$, i.e.\
$\mathbb E \left[ \widehat{K_A(t)}\right]= K_A(t)$ where $\mathbb E$ comprises the expectation value taken over the ensemble of  time evolution operators (the disorder average) $
\mathbb{E}_T$, the local random unitaries  $
\mathbb{E}_U$  and projective measurements  $
\mathbb{E}_\text{QM}$.  Note that we  use in the appendix  $
\mathbb{E}_T$ in place of $\overline{\cdots}$ to denote the expectation over an ensemble of time evolution operators. 

We consider a quantum system $\mathcal{S}$ consisting of $N$ qubits  with Hilbert space $\mathcal{H}=(\mathbb{C}^2)^{\otimes N}$ of dimension $\D=2^N$, and $A\subseteq \mathcal{S}$ of $N_A$ qubits with dimension $\DA=2^{N_A}$. From the $r=1,\dots, M$ (single-shot) repetitions of  our protocol with outcome bitstrings $\{ \mathbf{s}^{(r)}\}_{r=1,\dots,M}$ we obtain the  estimator $\widehat{K_A(t)}$ as defined in Eq.~\eqref{eq:psffmeas}.  For simplicity of notation we drop the time argument in the following in this appendix. 

As a first step, it is most convenient to reformulate Eq.~\eqref{eq:psffmeas}  as an average over $r=1,\dots,M$ single shot estimates $\hat{o}^{(r)}$  of an observable $O=\bigotimes_i O_i$  with $O_i= \ket{0}\bra{0}-1/2 \ket{1}\bra{1}$ for $i\in A$ and $O_i= \mathbb{1}_i$ for $i \notin A$, 
\be
\widehat{K_A}=   \frac{1}{M} \sum_{r=1}^{M} \;  (-2)^{- |\mathbf{s}_A^{(r)}|} \equiv  \frac{1}{M} \sum_{r=1}^{M} \hat{o}^{(r)} \; .
\label{eq:sffmeasSM}
\ee
Secondly, we note that the outcome bitstrings $\{ \mathbf{s}^{(r)}\}_{r=1,\dots,M}$ of the $M$ repetitions of our measurement protocol are identically and independently distributed by construction: For each experimental run, a set of local unitaries $\{u_i^{r}\}_{i=1,\dots, N}$ and time-evolution operator $T$ is independently sampled and applied according to the experimental sequence shown in Fig.~\ref{fig:figure2}. Lastly, a single-shot computational basis measurement is taken.
We thus have 
\begin{align}
    \mathbb{E}\left[\widehat{K_A(t)}\right]= \mathbb{E}\left[\hat{o}^{(r)}\right] \; .
\end{align}
for an arbitrary $r\in \{1,\dots,M\}$. We drop the superscript $(r)$ in the following.

\begin{figure}[t]
    \centering
    \includegraphics[width=0.48\textwidth]{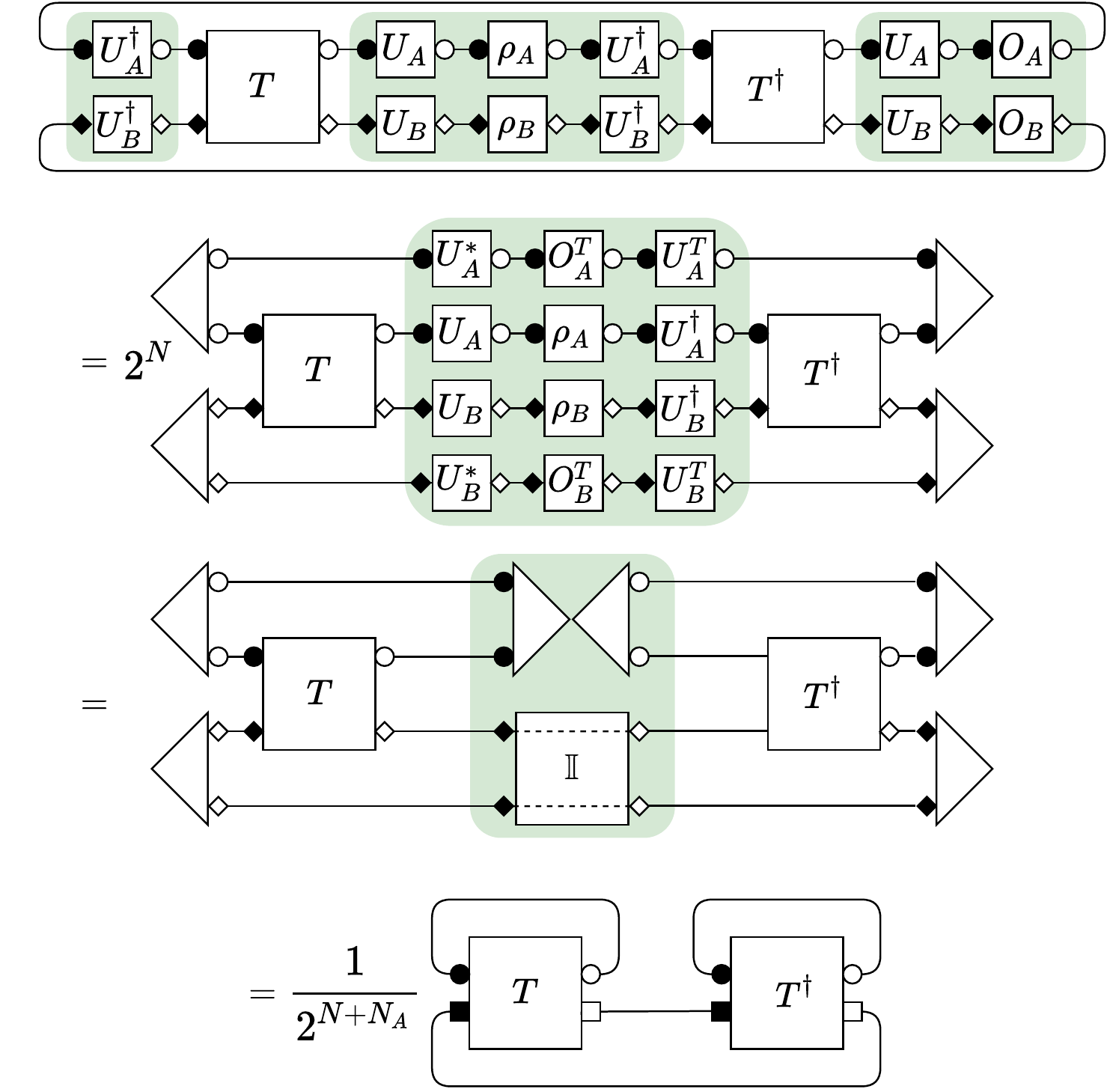}
    \caption{\textit{Diagrammatic proof of the measurement protocol}. We use the diagramatic notation and calculus developed in Ref.~\cite{Collins2010} (see also Ref.~\cite{Elben2019}). With the definitions of the text, we have $U_A=\bigotimes_{i \in A} u_i$, $\rho_A =  \bigotimes_{i \in A} =\rho_i$, $O_A=\bigotimes_{i \in A} O_i$, and accordingly for subsystem $B$. From second to third line, we use the 2-design identities of the local random unitaries $u_i$ (see also Eq.~\eqref{eq:twirl}  and Refs.~\cite{Collins2010,Elben2019}). }
    \label{fig:diagrams_prot}
\end{figure} 

To make progress, we evaluate the expectation  $\mathbb{E}\left[\hat{o} \right]$ over the ensemble of  time evolution operators (the disorder average) $
\mathbb{E}_T$, the local random unitaries $
\mathbb{E}_U$  and projective measurements (the quantum mechanical expectation value) $\mathbb{E}_{\text{QM}}$ step-by-step using the law of total expectation
\begin{align}
    \mathbb{E}\left[\hat{o} \right] = \mathbb{E}_{T} \left[\; \mathbb{E}_{U} \left[\; \mathbb{E}_{\text{QM}} \left[ \hat{o} | U, T\right]\, | \, T   \, \right]\; \right] .
\end{align}
 Here, $\mathbb{E}_{\text{QM}} \left[\hat{o} | U, T\right]$ denotes the quantum mechanical expectation value of the single-shot estimator $\hat{o}$ for a fixed unitary $U$ and a fixed time-evolution operator $T$. By definition, this is just the quantum expectation value of the observable $O$ in the output state $\rho_f= U^\dagger T U \rho_0 U^\dagger T^\dagger U$ of our protocol,
\begin{align}
\mathbb{E}_{\text{QM}} \left[\hat{o} | U, T\right] = \langle O \rangle_{\rho_f} = \tr[]{O U^\dagger T U \rho_0 U^\dagger T^\dagger U} .
\end{align}
The key part of the proof is the evaluation of the expectation value  over the local random unitaries $U=\bigotimes_i u_i$ for a fixed time evolution operator $T$,
\begin{align}
 & \mathbb{E}_{U} \left[\; \mathbb{E}_{\text{QM}} \left[ \hat{o} | U, T\right]\, | \, T   \, \right]\nonumber \\  & \qquad=  \mathbb{E}_{U} \left[ \;\tr[]{O U^\dagger T U \rho_0 U^\dagger T^\dagger U} \, | \,T \, \right].
 \end{align} As also visualized in Fig.~\ref{fig:diagrams_prot}, this requires several steps:
We first  rewrite
\begin{align}
&\mathbb{E}_{U} \left[\; \tr[]{O U^\dagger T U \rho_0 U^\dagger T^\dagger U} \, | \, T\, \right]= \label{eq:euu} \\
& 2^{N} \bra{\Phi_N^+} ( 1\otimes T)  \,  \mathbb{E}_{U} \left[U^*  O^T U^T \otimes U\rho_0 U^\dagger  \right] (1\otimes T^\dagger)  \nonumber \ket{\Phi_N^+} 
\end{align}
as an expectation value of two `virtual copies' of qubit $i$, using the identity $ \tr{AB} = 2^N \braket{\Phi_N^+| A^T \otimes B |\Phi_N^+} $ for any two operators $A$ and $B$. Here, we have defined   $\ket{\Phi_N^+}=\bigotimes_i\ket{\Phi_i^+}$ as the tensor product of Bell states $\ket{\Phi_i^+}=2^{-1/2}(\ket{00}+\ket{11})$  on the doubled Hilbert space $\mathbb{C}^{2} \otimes \mathbb{C}^{2}$.
We  now use the independence of the local random unitaries $u_i$ to completely factorize the expectation value $\mathbb{E}_{U} $ over the local random unitaries $U=\bigotimes_i u_i$
\begin{align}
&\mathbb{E}_{U} \left[U^*  O^T U^T \otimes U\rho_0 U^\dagger  \right]\nonumber \\
&\qquad =\bigotimes_{i=1}^N \int\! \text{d} u_i\,   \left( u_i^* \otimes u_i \right) \left(  O_i^T  \otimes \rho_i \right) \left(  u_i^T \otimes u_i^\dagger  \right)  ~,
\end{align}
where $\int\! \text{d} u_i$ denotes the Haar integral over the unitary group $U(2)$.  As shown in Refs.~\cite{Collins2006, Watrous2018, Elben2020_SPT} [and also follows directly from  Eq.~\eqref{eq:haar_unitary}], we can use the $2$-design identities of the applied local random unitaries $u_i$ to evaluate the Haar integral. We find,
\begin{widetext}
\begin{align}
& \int\! \text{d} u_i\,    \left( u_i^* \otimes u_i \right) \left(  O_i^T  \otimes \rho_i \right) \left(  u_i^T \otimes u_i^\dagger  \right)   \nonumber\\
&= \frac{1}{3} \left( \vphantom{\frac{1}{2}}4 \ketbra{\Phi^+_i}{\Phi^+_i}  \tr{\ketbra{\Phi^+_i}{\Phi^+_i} O^T_i \otimes \rho_i} + \mathbb{1}_i \tr{ O^T_i \otimes \rho_i}  - \mathbb{1}_i \tr{\ketbra{\Phi^+_i}{\Phi^+_i}  O^T_i \otimes \rho_i} -\ketbra{\Phi^+_i}{\Phi^+_i}  \tr{ O^T_i \otimes \rho_i} \right)\nonumber\\
&= \frac{1}{3} \left(\vphantom{\frac{1}{d}} 2 \ketbra{\Phi^+_i}{\Phi^+_i}  \tr{O_i \rho_i} +   \mathbb{1}_i \tr{O_i  }  -\frac{1}{2} \mathbb{1}_i \tr{ O_i \rho_i} - \ketbra{\Phi^+_i}{\Phi^+_i}  \tr{  O_i } \right) \nonumber\\
&= \frac{1}{2}
\begin{cases}
\ketbra{\Phi^+_i}{\Phi^+_i} & i \in A \\
 \mathbb{1}_i & i \notin A 
\end{cases}~.
\label{eq:twirl}
\end{align}
\end{widetext}
To arrive at the last line,  we  used $\rho_i=\ket 0\bra 0$ and that  $O_i= \ket{0}\bra{0}-1/2 \ket{1}\bra{1}$ for $i\in A$ and $O_i= \mathbb{1}_i$ for $i \notin A$.
Inserting this into Eq.~\eqref{eq:euu}, we  find 
\begin{align}
    &\mathbb{E}_{U} \left[\; \mathbb{E}_{\text{QM}} \left[ \hat{o} | U, T\right]\, | \, T   \, \right]  \nonumber \\
    & \quad =    \bra{\Phi_N^+}( 1\otimes T)    \left(\bigotimes_{i \in A} \ketbra{\Phi^+_i}{\Phi^+_i} \right) (1\otimes T^\dagger) \ket{\Phi_N^+} \nonumber\\
    &\quad = 2^{-(N+N_A)} \tr{\tr[A]{T  } \tr[A]{T^\dagger}}~.
\end{align}
Taking finally the ensemble (disorder) average over time evolution operators, we find
\begin{align}
      \mathbb{E}\left[\hat{o} \right] &= \mathbb{E}_{T} \left[\; \mathbb{E}_{U} \left[\; \mathbb{E}_{\text{QM}} \left[ \hat{o} | U, T\right]\, | \, T   \, \right]\; \right] \nonumber \\
      &=\mathbb{E}_{T} \left[\; 2^{-(N+N_A)} \tr{\tr[A]{T  } \tr[A]{T^\dagger}}\right]\nonumber  \\
      &= K_A(t)~.
\end{align}
Thus, we see that, $\widehat{K_A}$ is an unbiased estimator of $K_A(t)$.

\section{Statistical error analysis and required measurement budget}
\label{app:staterrors}
As described in the main text [Eq.\ \eqref{eq:psffmeas}], we obtain an estimate of the PSFF $K_A$ (for notational simplicity we drop the time argument in this appendix) from $r=1,\dots, M$ (single-shot) repetitions of  our protocol with outcome bitstrings $\mathbf{s}^{(r)}$ via
\be
\widehat{K_A}= \frac{1}{M}  \sum_{r=1}^{M} \;  (-2)^{ - |\mathbf{s}_A^{(r)}|}  =   \frac{1}{M} \sum_{r=1}^{M} \hat{o}^{(r)} \;. 
\label{eq:sffmeas_app}
\ee
Here,  $\hat{o}^{(r)}$ is a single shot estimate of an observable $O=\bigotimes_i O_i$  with $O_i= \ket{0}\bra{0}-1/2 \ket{1}\bra{1}$ for $i\in A$ and $O_i= \mathbb{1}_i$ for $i \notin A$, as defined in App.~\ref{app:MeasProt}.  We have shown in App.~\ref{app:MeasProt} that $\widehat{K_A}$ is an unbiased estimator of the PSFF $K_A$, i.e.~ that \ $\mathbb E \left[ \widehat{K_A}\right]= K_A$ with the expectation value  taken over the ensemble of time evolution operators (the disorder average) $\mathbb{E}_T$, the local random unitaries $\mathbb E_U$ and projective measurements $\mathbb{E}_{\text{QM}}$. The  statistical error of $\widehat{K_A}$, and its convergence to $K_A$ is controlled by its variance
\begin{align}
\text{Var}\left[\widehat{K_A}\right] &= \frac{ 1}{M} \text{Var}\left[{\hat{o}^{(r)}}\right] \label{eq:varka} 
\end{align}
for any $r=1,\dots,M$. Here, we used that the individual single shot estimates $\hat{o}^{(r)}$ are statistically independent and identically distributed by construction. We drop the superscript $(r)$ in the following. We can evaluate Eq.~\eqref{eq:varka} using the law of total variance \cite{bowsher2012} 
\begin{align}
   \var{\hat{o}} \nonumber=\,& \expect[T]{\,\expect[U]{\,\var[\text{QM}]{\hat{o}|T,U}\,|\,T\,}\,} \nonumber\\
   &+\expect[T]{\,\var[U]{\,\expect[\text{QM}]{\hat{o}|T,U}\,|\,T\,}\,}\nonumber\\
   &+\var[T]{\,\expect[U]{\expect[\text{QM}]{\,\hat{o}|T,U}\,|\,T\,}\,} \nonumber\\
      =\,& \expect[T]{\,\expect[U]{\,\expect[\text{QM}]{\hat{o}^2|T,U}\,|\,T\,}\,} \nonumber\\
   &-\expect[T]{\,\expect[U]{\expect[\text{QM}]{\,\hat{o}|T,U}\,|\,T\,}\,}^2.
   \label{eq:var1}
\end{align}
To arrive at the second expression, we employed the definition of the (conditional) variance $\var[]{X|Y}=\expect[]{X^2|Y}-\expect[]{X|Y}^2$ for any two random variables $X,Y$ and used then that various terms cancel out.
As shown in App.~\ref{app:MeasProt}, the last term in Eq.~\eqref{eq:var1} simply yields
\begin{align}
    \expect[T]{\,\expect[U]{\expect[\text{QM}]{\,\hat{o}|T,U}\,|\,T\,}\,}^2=K_A^2 .
\end{align}
We thus concentrate on the first term in Eq.~\eqref{eq:var1}. The quantum mechanical expectation value  $\expect[\text{QM}]{\,\hat{o}^2 |T,U}$ of the squared single shot estimate $\hat{o}$ evaluates, for fixed $T$ and $U$, to  
\begin{align}
    \expect[\text{QM}]{\,\hat{o}^2 |T,U} = \langle O^2 \rangle_{\rho_f}= \tr[]{O^2 U^\dagger T U \rho_0 U^\dagger T^\dagger U} \label{eq:var1b}
\end{align}
Next, we evaluate the average over local random unitaries. 
With $O$ replaced by $O^2$, we follow the calculation presented in App. \ref{app:MeasProt}: we first rewrite Eq.~\eqref{eq:var1b} as an expectation value on two copies
\begin{align}
&2^{-N}\expect[U]{\; \tr[]{O^2 U^\dagger T U \rho U^\dagger T^\dagger U} \, | \, T\, } \label{eq:var2} \\
&\! = \! \bra{\Phi_N^+} ( 1\otimes T)    \expect[U]{U^*  (O^T)^2 U^T \otimes U\rho_0 U^\dagger } (1\otimes T^\dagger)  \nonumber \ket{\Phi_N^+} .
\end{align}
Factorizing the average  over local random unitaries, we find 
\begin{align}
&\expect[U]{U^*  (O^T)^2 U^T \otimes U\rho_0 U^\dagger  }\nonumber \\
&\qquad =\bigotimes_{i=1}^N \int\! \text{d} u_i\,   \left( u_i^*  (O_i^T)^2 u_i^T \otimes u_i \rho_i u_i^\dagger  \right) 
\end{align}
with $\int\! \text{d} u_i\, $ the Haar integral over the unitary group $U(2)$.
Using Eq.~\eqref{eq:twirl}, for $O_i\rightarrow O_i^2$, we find
\begin{align}
&\int\! \text{d} u_i\,   \left( u_i^*  (O_i^T)^2 u_i^T \otimes u_i \rho_i u_i^\dagger  \right) \nonumber \\ &\qquad \qquad = 
\frac{1}{2} \begin{cases} 
 \ketbra{\Phi^+_i}{\Phi^+_i}/2 + \mathbb{1}_i/2 & i \in A \\
 \mathbb{1}_i & i \notin A
\end{cases} .
\end{align}
Inserting this into Eq.\ \eqref{eq:var2}, we obtain
\begin{align}
&\expect[U]{\; \tr[]{O^2 U^\dagger T U \rho U^\dagger T^\dagger U} \, | \, T\, } \nonumber  \\ &= 2^{-(N+N_A)} \sum_{B\subseteq A} 2^{-N_B} \tr{\tr[B]{T  } \tr[B]{T^\dagger}}.
\end{align}
Taking  the  ensemble (disorder) average over time evolution operators $T$, we get
\begin{align}
\expect[T]{\,\expect[U]{\,\expect[\text{QM}]{\hat{o}^2|T,U}\,|\,T\,}\,}\! = \!
2^{-N_A} \sum_{B\subseteq A} \SFF[B] .
\end{align}
This finally yields
\begin{align}
    \text{Var}\left[\widehat{K_A}\right] &= \frac{ \text{Var}\left[\hat{o}\right] }{M} =2^{-N_A} \sum_{B\subseteq A} \SFF[B] -K_A^2.
\end{align}

Given the variance $\text{Var}\left[\widehat{K_A}\right]=\text{Var}[\hat{o}]/M $ of our estimator, Chebyshev's inequality asserts that 
\begin{align}
    \text{Prob}\left[| \widehat{K_A}-K_A| \geq \epsilon \right] \leq \frac{\text{Var}\left[\widehat{K_A}\right]}{\epsilon^2} =\frac{\text{Var}\left[\hat{o}\right]}{M \epsilon^2}
\end{align}
for any $\epsilon>0$.  This allows to rigorously obtain an estimate for the required number of measurements $M$ to achieve a certain relative error (for a similar treatment see e.g.\ Refs. \cite{Huang2020,Elben2020_Mixed}).
\newtheorem{prop}{Proposition}
\begin{prop}
Consider a subsystem $A\subseteq \mathcal{S}$ with $N_A\leq N$ qubits. Our aim is to estimate the PSFF $K_A$ using the estimator $\widehat{K_A}$ defined in Eq.\ \eqref{eq:psff}.  Then, for any $\epsilon,\delta>0$, a total of  
\begin{align}
    M\geq  \frac{\tilde V_A}{\delta \epsilon^2}
\end{align}
experimental runs (single shot estimates) suffice to ensure that the relative error of the estimator $\widehat{K_A}$ obeys  $|\widehat{K_A}/K_A -1|\leq \epsilon $ with probability $1-\delta$. Here, we defined the rescaled variance
\begin{align}
    \tilde V_A = \frac{1}{\SFF[A]^2} \left(2^{-N_A} \sum_{B\subseteq A} \SFF[B]  - \SFF[A]^2 \right)
\end{align}
where the sum extends over all subsystems $B\subseteq A$ containing $N_B\leq N_A$ qubits. 
\end{prop}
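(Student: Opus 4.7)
The plan is to derive the bound by a direct application of Chebyshev's inequality, leveraging the variance computation already carried out earlier in this appendix. Since the estimator $\widehat{K_A}$ is an unbiased estimator of $K_A$ (as shown in App.~\ref{app:MeasProt}) with variance $\text{Var}[\widehat{K_A}] = \text{Var}[\hat{o}]/M$, Chebyshev's inequality immediately yields
\begin{equation}
\text{Prob}\!\left[\,|\widehat{K_A} - K_A| \geq \epsilon'\,\right] \leq \frac{\text{Var}[\hat{o}]}{M\,\epsilon'^2}
\end{equation}
for any $\epsilon'>0$. The first step will simply be to quote this inequality and substitute the closed-form expression $\text{Var}[\hat{o}] = 2^{-N_A}\sum_{B\subseteq A} K_B - K_A^2$ derived above.

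Next, I would convert the absolute error bound into a relative error bound by choosing $\epsilon' = \epsilon\,K_A$, so that the event $|\widehat{K_A} - K_A| < \epsilon'$ is precisely the event $|\widehat{K_A}/K_A - 1| < \epsilon$. Dividing numerator and denominator of the Chebyshev bound by $K_A^2$ naturally produces the rescaled variance $\tilde{V}_A = \text{Var}[\hat{o}]/K_A^2$, giving
\begin{equation}
\text{Prob}\!\left[\,\left|\widehat{K_A}/K_A - 1\right| \geq \epsilon\,\right] \leq \frac{\tilde{V}_A}{M\,\epsilon^2}.
\end{equation}

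Finally, requiring the right-hand side to be bounded by $\delta$ yields the claimed sufficient condition $M \geq \tilde{V}_A/(\delta\,\epsilon^2)$, guaranteeing the desired relative accuracy $\epsilon$ with probability at least $1-\delta$. There is no genuinely hard step here: the entire argument is a bookkeeping exercise built on the variance identity of the previous derivation and the standard Chebyshev bound. The only subtlety worth flagging explicitly is that the conversion from an additive to a multiplicative error bound is valid because $K_A > 0$ (assuming a non-trivial signal at the time of interest), which ensures that the rescaling $\epsilon' = \epsilon\,K_A$ is well-defined and that the rescaled variance $\tilde{V}_A$ is the appropriate dimensionless figure of merit governing the required measurement budget.
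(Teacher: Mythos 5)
Your proposal is correct and follows essentially the same route as the paper: the paper also invokes Chebyshev's inequality with the variance $\mathrm{Var}[\hat{o}]/M$ computed earlier in the appendix, rescales by $K_A^2$ to obtain $\tilde V_A$, and demands the resulting failure probability be at most $\delta$. Your explicit remark that the additive-to-relative conversion requires $K_A>0$ is a sensible (implicitly assumed) clarification but does not change the argument.
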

For the random matrix ensembles considered in App.~\ref{app:pSFF-RMT}, we can determine $\tilde V_A$ explicitly. In particular, for CUE dynamics $T(t=n\tau)$ with $V$ from CUE, we find at the point of weakest signal, i.e.\ the dip time $t=\tau$, $\tilde{V}_A=10^{N_A}-1$.

\bibliography{SFF-Randomized-Measurements,RM_bibliography}

\end{document}